\let\oldmarginpar\marginpar
\renewcommand\marginpar[1]{\-\oldmarginpar[\raggedleft\footnotesize #1]%
  {\raggedright\footnotesize #1}}
\newtheorem{thm}{Theorem}[section]  
\newtheorem{prp}[thm]{Proposition}
\newtheorem{lem}[thm]{Lemma}
\newcommand{\fock}{\mathcal{F}}		
\newcommand{\di}{{d}}		
\newcommand{\Ncal}{\mathcal{N}}		
\newcommand{\Vcal}{\mathcal{V}}		
\newcommand{\hc}{\mbox{h.c.}}		
\newcommand{\scal}[2]{\big<#1,#2\big>} 
\newcommand{\cc}[1]{\overline{#1}}	
\newcommand{\Rbb}{\mathbb{R}}		
\renewcommand{\Re}{\operatorname{Re}\,} 	
\renewcommand{\Im}{\operatorname{Im}\,} 	
\newcommand{\norm}[1]{\lVert#1\rVert}	
\newcommand{\Tr}{\operatorname{tr}}	
\newcommand{\tr}{\operatorname{tr}}
\newcommand{\HS}{_{\textrm{HS}}}
\newcommand{\sgn}{\operatorname{sgn}}
\newcommand{\eps}{\varepsilon}
\newcommand{\bR}{{\mathbb R}}
\newcommand{\bx}{{\bf{x}}}
\newcommand{\bZ}{{\mathbb{Z}}}
\newcommand{\be}{\begin{equation}}
\newcommand{\ee}{\end{equation}}
\newcommand{\cF}{{\cal F}}
\newcommand{\cH}{{\cal H}}
\newcommand{\cL}{{\cal L}}
\newcommand{\cE}{{\cal E}}
\newcommand{\cN}{{\cal N}}
\newcommand{\cU}{{\cal U}}
\newcommand{\cW}{{\cal W}}
\newcommand{\im}{{\text{Im} }}
\newcommand{\bN}{{\mathbb N}}
\newcommand{\bC}{{\mathbb C}}
\newcommand{\ph}{{\varphi}}
\newcommand{\wt}{\widetilde}
\def\g{{\gamma}}
\def\r{{\rho}}
\def\o{{\omega}}
\def\n{{\nu}}
\def\G{{\Gamma}}
\def \bea{\begin{eqnarray}}
\def \eea{\end{eqnarray}}
\def\nn{{\nonumber}}
\def\media#1{{\langle#1\rangle}}
\title{Mean-field Evolution of Fermionic Systems} 
\author{Niels Benedikter, Marcello Porta\thanks{Supported by ERC Grant MAQD 240518 } \, and Benjamin Schlein\thanks{Partially supported by ERC Grant MAQD 240518}\\ \\ Institute of Applied Mathematics, University of Bonn\\ Endenicher Allee 60, 53115 Bonn, Germany}
\begin{document}
\maketitle

\begin{abstract}
The mean field limit for systems of many fermions is naturally coupled with a semiclassical limit. This makes the analysis of the mean field regime much more involved, compared with bosonic systems. 
In this paper, we study the dynamics of initial data close to a Slater determinant, whose reduced one-particle density is an orthogonal projection $\omega_N$ with the appropriate semiclassical structure. Assuming some regularity of the interaction potential, we show that the evolution of such an initial data remains close to a Slater determinant, with reduced one-particle density given by the solution of the Hartree-Fock equation with initial data $\omega_N$. Our result holds for all (semiclassical) times, and 
gives effective bounds on the rate of the convergence towards the Hartree-Fock dynamics.
\end{abstract}
\section{Introduction}
\setcounter{equation}{0}

In the last years, important progress has been achieved in the mathematical understanding of the many body dynamics of bosonic systems in the mean field limit. A system of $N$ bosons in the mean field regime can be described by the Hamiltonian 
\begin{equation}\label{eq:ham-bos} H^{\text{bose}}_N = \sum_{j=1}^N -\Delta_j + \frac{1}{N} \sum_{i<j}^N V (x_i -x_j) \end{equation}
acting on the Hilbert space $L^2_s (\bR^{3N})$, the subspace of $L^2 (\bR^{3N})$ consisting of functions symmetric with respect to an arbitrary permutation of the $N$ particles. 

\medskip

Typical initial data are prepared by confining the system in a volume of order one (for example, restricting the Hamiltonian (\ref{eq:ham-bos}) to $L^2 (\Lambda^N)$, for a cube $\Lambda \subset \bR^3$ of volume one, or by adding a trapping external potential), and letting it relax to the ground state (which is experimentally achieved by cooling it down to very low temperatures). For large $N$, these initial data are approximately factorized, having the form $\psi_N (x_1,\ldots x_N) \simeq \prod_{j=1}^N \ph (x_j)$ for an appropriate one-particle wave function $\ph \in L^2 (\bR^3)$ (obtained as minimizer of the corresponding Hartree energy functional). For such data, the coupling constant $1/N$ in front of the interaction guarantees that both parts of the Hamiltonian are of the order $N$ and that the total potential acting on every particle, given by the sum of a large number (order $N$) of weak contributions (order $1/N$), can be effectively approximated by an average, mean field, interaction. As a 
consequence, factorization is approximately preserved in time, and the many body time evolution generated by (\ref{eq:ham-bos}) can be effectively described in terms of the nonlinear one-particle Hartree dynamics. 

\medskip

To be more precise, for $\psi_N \in L^2_s (\bR^{3N})$, we consider the solution $\psi_{N,t} = e^{-iH_N t} \psi_N$ of the $N$-particle Schr\"odinger equation
\[ i\partial_t \psi_{N,t} = H_N \psi_{N,t}\,. \]
We define the reduced one-particle density $\gamma^{(1)}_{N,t}$ associated with $\psi_{N,t}$ as the non-negative trace class operator on $L^2 (\bR^3)$ with integral kernel
\[ \gamma_{N,t}^{(1)} (x;y) = N \int dx_2 \dots dx_N \, \psi_{N,t} (x, x_2 , \dots , x_N) \overline{\psi}_{N,t} (y, x_2 , \dots , x_N) \,.\]
We normalize the reduced density $\gamma_{N,t}^{(1)}$ so that $\tr \, \gamma_{N,t}^{(1)} = N$. Under suitable assumptions on the potential $V$, it is possible to show that complete condensation is preserved by the time evolution, meaning that
\begin{equation}\label{eq:conv-bos} \frac{1}{N} \, \gamma_{N,t}^{(1)} \to |\ph_t \rangle \langle \ph_t| \quad \text{ as $N \to \infty$,} \end{equation}
for all $t \in \bR$, assuming this to hold at time $t=0$. Here $\ph_t$ is the solution of the nonlinear Hartree equation
\[ i \partial_t \ph_t = (-\Delta + V_{\text{ext}}) \ph_t + (V * |\ph_t|^2) \ph_t \]
with the initial data $\ph_0$ giving the condensate wave function at time $t=0$.
 
 \medskip

The first rigorous proof of this result was obtained in \cite{Sp2}, for bounded interaction potentials. 
In \cite{EY}, the method of \cite{Sp2} was extended to prove the convergence (\ref{eq:conv-bos}) for particles interacting through a Coulomb potential $V(x) = \pm 1/ |x|$. More recently, these techniques have been applied in \cite{ESY1,ESY2,ESY3,ESY4} to systems of bosons in the so called Gross-Pitaevskii limit; in this case, the interaction is scaled so that its range and its scattering length are both of the order $1/N$. As a consequence, the limiting Hartree equation has a local nonlinearity (nonlinear Schr\"odinger equations with local nonlinearity in one and two dimensions have been derived from many body quantum mechanics in appropriate mean field limits in \cite{AGT,KSS}). In contrast with the mean field regime, in the three-dimensional Gross-Pitaevskii limit collisions among particles are rare and strong and the solution of the Schr\"odinger equation develops a singular short scale correlation structure. 

\medskip

Inspired by ideas from \cite{He,GV}, a different approach was developed in \cite{RS} to control the rate of the convergence towards the Hartree equation in the mean field limit of many body quantum dynamics. For a certain class of initial data, and for interaction potentials allowing Coulomb singularities, it was shown in \cite{RS,CLS} that 
\[ \tr \, \left| \gamma_{N,t}^{(1)} - N |\ph_t \rangle \langle \ph_t| \right| \leq C e^{K |t|} \]

\medskip

Among many other important contributions to the mathematical understanding of the many body dynamics for bosonic systems in the mean field regime, let us also recall \cite{FKS}, \cite{AN}, \cite{P,KP}, and the series of papers \cite{GMM1,GMM2,GM}. In \cite{FKS}, the mean field limit for particles interacting through a Coulomb potential was revisited and interpreted as a Egorov type theorem. For regular interactions, the convergence towards the Hartree dynamics was stated in \cite{AN} as propagation of Wigner measures. In \cite{KP,P}, a different approach to obtain control of the rate of the convergence towards the Hartree evolution was proposed; its main advantage compared with other techniques is the fact that it can be extended to potentials with more severe singularities. In \cite{GMM1,GMM2,GM}, on the other hand, it was shown how a more precise approximation of the many body dynamics can be obtained by considering also the next order corrections to the Hartree dynamics. Finally, we remark that also the study 
of the spectral properties of bosonic mean field Hamiltonians received a lot of attention in the last few years. A first proof of the emergence of Bogoliubov excitation spectrum has been found in \cite{S} (for systems of bosons in a box) and in \cite{GS} (in the presence of an external potential). A new and more general approach to the analysis of the excitations spectra of bosonic mean field systems was then obtained in \cite{LNSS}. 

\medskip

In contrast with this long list of results concerning the mean field dynamics of bosons, much less is known for the mean field limit of fermionic systems. It turns out that, in the fermionic case, the mean field regime is naturally linked with a semiclassical limit. Consider a system of fermions initially confined in a volume of order one. Because of the Pauli principle, the kinetic energy of a system of $N$ fermions confined in a volume of order one is at least of the order $N^{5/3}$, much larger than in the bosonic case. In order for the potential energy to be comparable with the kinetic energy, the coupling constant in front of the interaction should be of the order $N^{-1/3}$ (in contrast with the coupling constant of order $N^{-1}$ in the bosonic case). Because of the large kinetic energy, particles move very fast. The average kinetic energy per particle is of the order $N^{2/3}$, and hence the average velocity of the particles is of the order $N^{1/3}$. This means that one can only expect to follow the 
evolution of fermionic systems in the mean field regime for times of the order $N^{-1/3}$. As a consequence, the relevant time-dependent Schr\"odinger equation has the form
\begin{equation}\label{eq:schrf1} i N^{1/3} \partial_t \psi_{N,t} = \left[ \sum_{j=1}^N -\Delta_j + \frac{1}{N^{1/3}} \sum_{i<j}^N V(x_i -x_j) \right] \psi_{N,t}\,. \end{equation}
With this convention, we are interested in times $t$ of order one (so that $\tau = N^{-1/3} t$ is small, of order $N^{-1/3}$). Let $\hbar = N^{-1/3}$. Multiplying (\ref{eq:schrf1}) by $\hbar^2$, we obtain the Schr\"odinger equation 
\begin{equation}\label{eq:schf2}
i \hbar \partial_t \psi_{N,t} = \left[ - \sum_{j=1}^N  \hbar^2 \Delta_j + \frac{1}{N} \sum_{i<j}^N V(x_i -x_j) \right] \psi_{N,t} \,.
\end{equation}
The mean field scaling, characterized by the $1/N$ coupling constant in front of the potential energy, is therefore combined, for fermionic systems, with a semiclassical limit characterized by a small $\hbar = N^{-1/3} \ll 1$. 

\medskip

We observe that a different mean field regime, characterized by $\hbar =1$ in (\ref{eq:schf2}), has been considered in \cite{FK,BGGM}. This alternative scaling may describe physically interesting situations if the particles occupy a large volume (so that the kinetic energy per particle is of order one) and if the interaction has a long range (to make sure that also the potential energy per particle is of order one). In this paper, we will be interested in the evolution of initial data describing $N$ fermions in a volume of order one; correspondingly, we will only consider the scaling appearing in (\ref{eq:schf2}), with $\hbar = N^{-1/3}$. 

\medskip

Similarly to the bosonic case, typical initial data can be prepared by confining the $N$ fermions in a volume of order one and by cooling them down to very low temperatures. In other words, interesting initial data for (\ref{eq:schf2}) are ground states of Hamilton operators of the form
\begin{equation}\label{eq:HNtrap} H_N^{\text{trap}} = \sum_{j=1}^N \left(-\hbar^2 \Delta_{x_j} + V_{\text{ext}} (x_j) \right) + \frac{1}{N} \sum_{i<j}^N V(x_i -x_j) \end{equation} where $V_{\text{ext}}$ is an external trapping potential, confining the $N$ particles in a volume of order one. Such initial data are well approximated by Slater determinants
 \[ \psi_{\text{slater}} (x_1, \dots , x_N) = \frac{1}{\sqrt{N!}} \sum_{\pi \in S_N} \sgn(\pi) f_1 (x_{\pi(1)}) f_2 (x_{\pi(2)}) \dots f_N (x_{\pi(N)}) \] with a family of $N$ orthonormal orbitals $\{ f_j \}_{j=1}^N$ in $L^2 (\bR^3)$ ($\sgn(\pi)$ denotes the sign of the permutation $\pi \in S_N$). Slater determinants are quasi-free states, i.e. they are completely characterized by their one-particle reduced density, given by the orthogonal projection 
\[ \omega = \sum_{j=1}^N |f_j \rangle \langle f_j| \, . \]
In fact, a simple computation shows that $\langle \psi_\text{slater} , H_N^\text{trap} \psi_{\text{slater}} \rangle$ is given by the Hartree-Fock energy 
\begin{equation}\label{eq:HF-en} \begin{split} \cE_{\text{HF}} (\omega) = \tr \, \left(- \hbar^2 \Delta  + V_{\text{ext}} \right) \omega 
& + \frac{1}{N} \int dx dy V(x-y) \omega (x;x) \omega (y;y) \\ &- \frac{1}{N} \int dx dy V(x-y) |\omega (x;y)|^2. \end{split} \end{equation}
In analogy with the convergence (\ref{eq:conv-bos}) observed in the bosonic setting, we expect that the evolution determined by the Schr\"odinger equation \eqref{eq:schf2} of an initial Slater determinant approximating the ground state of \eqref{eq:HNtrap} remains close to a Slater determinant, with an evolved reduced one-particle density, given by the solution of the time-dependent Hartree-Fock equation
\begin{equation}\label{eq:HF}
i\hbar \partial_t  \omega_t = \left[ - \hbar^2 \Delta + (V * \rho_t) - X_t , \omega_t \right] 
\end{equation} 
associated with the energy (\ref{eq:HF-en}). Here $\rho_t (x) = N^{-1} \omega_t (x;x)$ is the normalized density associated with the one-particle density $\omega_t$, while $X_t$ is the exchange operator, having the kernel 
\[ X_t (x;y) = N^{-1} V(x-y) \omega_t (x;y)\, .  \]

\medskip

Of course, we cannot expect this last statement to be correct for any initial state close to a Slater determinant. We expect minimizers of the Hartree-Fock energy functional (\ref{eq:HF-en}) to be  characterized by a semiclassical structure which is essential to understand its evolution. In fact, as we will argue next, we expect the kernel $\omega (x,y)$ of reduced density minimizing (or approximately minimizing) the functional (\ref{eq:HF-en}) to be concentrated close to the diagonal and to decay at distances $|x-y| \gg \hbar$. To understand the emergence of this semiclassical structure, and to find good characterizations, let us consider a system of $N$ free fermions moving in a box of volume one, for example with periodic boundary conditions. The ground state of the system is given by the Slater determinant constructed with the $N$ plane waves $f_p (x) = e^{i p x}$ with $p \in (2\pi) \bZ^3$ and $|p| \leq c N^{1/3}$, for a suitable constant $c$ (guaranteeing that the total number of orbitals equals exactly $N$). The corresponding one-particle reduced density has the kernel 
\[ \omega (x;y) = \sum_{|p| \leq c N^{1/3}} e^{i p \cdot (x-y)} \]
where the sum extends over all $p \in (2\pi) \bZ^3$ with $|p| \leq c N^{1/3}$. Letting $q = \hbar p$ (with $\hbar = N^{-1/3}$), we can write
\begin{equation} \omega (x;y) = \sum_{|q| \leq c} e^{iq (x-y)/\hbar} \simeq \frac{1}{\hbar^3} \int_{|q| \leq c} dq \, e^{i q \cdot (x-y)/\hbar} = \frac{1}{\hbar^3} \ph \left( \frac{x-y}{\hbar} \right) \end{equation}
with 
\begin{equation}\label{eq:ph} \ph (\xi) = \frac{4\pi}{|\xi|^2} \left( \frac{\sin (c|\xi|)}{|\xi|} - c \cos (c|\xi|) \right), \quad \xi \in \Rbb^3. \end{equation} 
Hence, at fixed $N$ and $\hbar$, $\omega (x;y)$ decays to zero for $|x-y| \gg \hbar$. Moreover, the fact that $\omega$ depends only on the difference $x-y$ (for $x,y$ in the box) implies that the density $\omega (x;x)$ is constant inside the box (and zero outside). This is of course a consequence of the fact that we are considering a system with external potential vanishing inside the box, and being infinity outside of it. More generically, if particles are trapped by a regular potential $V_{\text{ext}}$ with $V_{\text{ext}} (x) \to \infty$ for $|x| \to \infty$, we expect the resulting reduced one-particle density to have, approximately, the form   
\begin{equation}\label{eq:gamma-sc} \omega (x;y) \simeq \frac{1}{\hbar^3} \ph \left( \frac{x-y}{\hbar} \right) \chi \left( \frac{x+y}{2} \right) \end{equation}
for appropriate functions $\ph$ and $\chi$, or to be linear combinations of such kernels. While $\chi$ determines the density of the particles in space (because $\ph (0) = 1$, to ensure that $\tr \omega = N$), $\ph$ fixes the momentum distribution. 

\medskip

Next we look for suitable bounds, characterizing Slater determinants like (\ref{eq:gamma-sc}) which have the correct semiclassical structure. To this end, we observe that, if we differentiate the r.h.s. of (\ref{eq:gamma-sc}) with respect to $x$ or $y$, a factor $\hbar^{-1}$ will emerge from the derivative of $\ph$ (this produces a kinetic energy of order $N^{5/3}$, as expected). However, if we take the commutator $[\nabla, \omega]$, its kernel will be given by 
\begin{equation}\label{eq:comm-ker} [ \nabla , \omega ] (x;y) = \left( \nabla_x + \nabla_y \right) \omega (x;y) = \frac{1}{\hbar^3} \ph \left( \frac{x-y}{\hbar} \right) \nabla \chi \left( \frac{x+y}{2} \right) \,. \end{equation}
In this case the derivative only hits the density profile $\chi$; it does not affect $\ph$, and therefore it remains of order one (of course, in the example with plane waves in a box, there is the additional problem that $\chi$ is the characteristic function of the box, and therefore that it is not differentiable; this is however a consequence of the pathological choice of the external potential, which is infinity outside the box). We express the fact that the derivative in (\ref{eq:comm-ker}) does not produce additional $\hbar^{-1}$ factors through the bound
\begin{equation}\label{eq:tr-nabla} \tr \, \left| \left[\nabla , \omega \right] \right| \leq C N. \end{equation}
Similarly, the fact that $\omega (x;y)$ decays to zero as $|x-y| \gg \hbar$, suggests that the commutator $[x, \omega ]$, whose kernel is given by
\begin{equation}\label{eq:commx} [ x , \omega ] (x;y) = (x-y) \omega (x;y), \end{equation} 
is smaller than $\omega$, by order $\hbar$. In fact, one has to be a bit careful here. Going back to the plane wave example, we observe that the function $\ph$ computed in (\ref{eq:ph}) does not decay particularly fast at infinity. For this reason, it is not immediately clear that one can extract an $\hbar$ factor from the difference $(x-y)$ on the r.h.s. of (\ref{eq:commx}). Keeping in mind the plane-wave example, let us compute the commutator of the reduced density $\omega$ with the multiplication operator $e^{i r \cdot x}$, for a fixed $r \in (2\pi) \bZ^3$. We find
\[ \left[ e^{ir \cdot x} , \omega \right] = \sum_{|p| \leq c N^{1/3}} \left[ |e^{i (r+p) \cdot x} \rangle \langle e^{ip\cdot x} | - |e^{i p\cdot x} \rangle \langle e^{i (p-r) \cdot x} | \right]. \]
A straightforward computation shows that 
\begin{equation}\label{eq:preHS} \left| \left[ e^{ir \cdot x} , \omega \right]  \right|^2 = \sum_{p \in I_r} |e^{i p \cdot x} \rangle \langle e^{i p \cdot x}| \end{equation}
where \[ \begin{split} I_r = &\, (2\pi) \bZ^3 \cap  \left\{ p \in \bR^3 : |p-r| \leq c N^{1/3},  |p| \geq c N^{1/3} \; \text{or } \,  |p-r| \geq c N^{1/3} , |p| \leq c N^{1/3} \right\}. \end{split}\]
It follows that $|[ e^{ir \cdot x} , \omega]| = |[ e^{ir \cdot x} , \omega]| ^2$ is a projection, and therefore that 
\begin{equation}\label{eq:tr-exp} 
\tr\; \left|\left[ e^{ir \cdot x} , \omega \right] \right| \leq C N \hbar |r|. \end{equation}
Hence, the trace norm of the commutator is smaller, by a factor $\hbar$, compared with the norm of the operators $e^{ip\cdot x} \omega$ and $\omega e^{ip\cdot x}$. The fact that the kernel $\omega (x;y)$ is supported close to the diagonal allows us to extract an additional $\hbar$-factor from the trace norm of the commutator $[e^{ip\cdot x} ,\omega]$. Notice, however, that if we considered the Hilbert-Schmidt norm of $[e^{ip\cdot x} , \omega]$, we would find from \eqref{eq:preHS} that
\[\norm{\left[ e^{ir \cdot x} , \omega \right]}_{HS} =  \left(\tr  \left|\left[ e^{ir \cdot x} , \omega \right] \right|^2\right)^{1/2} \leq (C N \hbar |r|)^{1/2}. \]
In other words, the Hilbert Schmidt norm of the commutator $[e^{ip\cdot x} , \omega]$ is only smaller than the Hilbert-Schmidt norm of the two operators $e^{ip \cdot x}\omega$ and $\omega e^{ip \cdot x}$ by a factor $\hbar^{1/2}$. This is consistent with the fact that, in (\ref{eq:gamma-sc}), the function $\ph$ does not decay fast at infinity (which follows from the fact that $\omega$ is a projection corresponding to a characteristic function in momentum space). 

\medskip

So far, we proved the bounds (\ref{eq:tr-nabla}) and (\ref{eq:tr-exp}) for minimizers of systems of confined non-interacting electrons. What happens now if we turn on a mean-field interaction? Can we still expect the minimizer of the Hamiltonian (\ref{eq:HNtrap}) to satisfy (\ref{eq:tr-nabla}) and (\ref{eq:tr-exp})? We claim that the answer to this question is affirmative, and we propose a heuristic explanation\footnote{We would like to thank Rupert Frank for pointing out this argument to us.}. Semiclassical analysis suggests that the reduced density of the minimizer of (\ref{eq:HNtrap}) can be approximated by the Weyl quantization $\omega = \text{Op}^w_M$ of the phase space density $M(p,x) = \chi (|p| \leq (6 \pi^2 \rho (x))^{1/3})$, where $\rho$ is the minimizer of the Thomas-Fermi type functional
\[ \eps_{\text{TF}} (\rho) = \frac{3}{5} (3\pi^2)^{2/3} \int dx \rho^{5/3} (x) + \int dx \, V_{\text{ext}} (x) \rho (x) + \frac{1}{2} \int dx dy V(x-y) \rho (x) \rho (y) \]
over all non-negative densities $\rho \in L^1 (\bR^3) \cap L^{5/3} (\bR^3)$ normalized so that $\| \rho \| = 1$. Here, the Weyl quantization $\text{Op}^w_M$ of $M$ is defined by the kernel
\[ \omega (x,y) = \text{Op}^w_M (x,y) = \frac{1}{(2\pi \hbar)^3} \int dp \, M \left( p, \frac{x+y}{2} \right) \, e^{i p \cdot \frac{x-y}{\hbar}} \] 
It turns out that the commutators of $\omega$ with the position operator $x$ and with the momentum operator $\nabla$ are again Weyl quantizations. In fact, a straightforward computation shows that 
\[ \begin{split}  [x, \omega] &= -i \hbar \text{Op}^w_{\nabla_p M}, \quad [ \nabla , \omega] = \text{Op}^w_{\nabla_q M}. \end{split} \] Hence, semiclassical analysis predicts that
\[ \tr \, |[x,\omega]| \simeq \frac{\hbar}{(2\pi \hbar)^3} \int dp dq |\nabla_p M (p,q)| = CN \hbar \,  \int  \rho^{2/3} (q) dq  \]
and that 
\[ \tr \, |[\nabla , \omega]| \simeq \frac{1}{(2\pi \hbar)^3} \int dp dq |\nabla_q M (p,q)| = N \int \, |\nabla \rho (q)| \,dq \]
Under general assumptions on $V_\text{ext}$ and $V$, we can expect the integrals on the r.h.s. of the last two equations to be finite, and therefore, we can expect the bounds (\ref{eq:tr-nabla}) and (\ref{eq:tr-exp}) to hold true ((\ref{eq:tr-exp}) easily follows from the estimate $\tr |[x,\omega ]| \leq C N \hbar$). Although one could probably turn the heuristic argument that we just presented into a rigorous proof, we do not pursue this question in the present work. Instead, we will just assume our initial data to satisfy (\ref{eq:tr-nabla}) and (\ref{eq:tr-exp}). We consider these bounds as an expression of the semiclassical structure that emerges naturally when one considers states with energy close to the ground state of a trapped Hamiltonian of the form (\ref{eq:HNtrap}). 

\medskip

For initial data $\psi_N$ close to Slater determinants and having the correct semiclassical structure characterized by (\ref{eq:tr-nabla}) and (\ref{eq:tr-exp}), we consider the time evolution 
$\psi_{N,t} = e^{-i t H_N / \hbar} \psi_N$, generated by the Hamiltonian 
\begin{equation}\label{eq:HN-fer} H_N = - \sum_{j=1}^N \hbar^2 \Delta_j + \frac{1}{N} \sum_{i<j}^N V (x_i -x_j) \end{equation}
and we denote by $\gamma_{N,t}^{(1)}$ the one-particle reduced density associated with $\psi_{N,t}$. Our main result, Theorem \ref{thm:main}, shows that, under suitable assumptions on the potential $V$, 
there exist constants $K,c_1,c_2 > 0$ such that
\begin{equation}\label{eq:diff-hs} \| \gamma^{(1)}_{N,t} - \omega_{N,t} \|_{\text{HS}} \leq K \exp (c_1 \exp (c_2 |t|)) 
\end{equation}
and 
\begin{equation}\label{eq:conv-fer} \tr \; \left| \gamma^{(1)}_{N,t} - \omega_{N,t} \right| \leq K N^{1/6} \exp (c_1 \exp (c_2 |t|)) 
\end{equation}
where $\omega_{N,t}$ denotes the solution of the time-dependent Hartree-Fock equation (\ref{eq:HF})
with the initial data $\omega_{N,t=0} = \gamma_{N,0}^{(1)}$. The bounds (\ref{eq:diff-hs}) and (\ref{eq:conv-fer}) show that the difference $\gamma_{N,t}^{(1)} - \omega_{N,t}$ is much smaller (both in the Hilbert-Schmidt norm and in the trace-class norm) than $\gamma_{N,t}^{(1)}$ and $\omega_{N,t}$ (recall that $\| \omega_{N,t}^{(1)} \|_{\text{HS}}, \| \gamma_{N,t}^{(1)} \|_{\text{HS}} \simeq N^{1/2}$ while $\tr \omega_{N,t} , \tr \gamma^{(1)}_{N,t} \simeq N$). 

\medskip

It turns out that the contribution of the exchange term is small compared to the other terms in the Hartree-Fock equation (\ref{eq:HF}); in fact, for the class of regular potential that we will consider in this paper, it is of the relative size $1/N$. As a consequence, the bounds (\ref{eq:diff-hs}), (\ref{eq:conv-fer}) and also all other bounds that we prove in Theorem \ref{thm:main} for the difference between $\gamma_{N,t}^{(1)}$ and the solution of the Hartree-Fock equation remain true if we replace the solution of the Hartree-Fock equation $\omega_{N,t}$ by the solution $\wt{\omega}_{N,t}$ of the Hartree equation 
\begin{equation}\label{eq:hartree}
i \hbar \partial_t \wt{\omega}_{N,t} = \left[ -\hbar^2 \Delta + (V * \wt{\rho}_t) , \wt{\omega}_{N,t} \right] \end{equation}
with the same initial data $\wt{\omega}_{N,t=0} = \gamma_{N,0}^{(1)}$ (here $\wt{\rho}_t (x) = N^{-1} \wt{\omega} (x;x)$ is the normalized density associated to $\wt{\omega}_{N,t}$). For more details, see the last remark after Theorem \ref{thm:main} and Proposition~\ref{prop:hartree} in Appendix \ref{app:hartree}.

\medskip

Observe that both the Hartree-Fock equation (\ref{eq:HF}) and the Hartree equation (\ref{eq:hartree}) still depend on $N$, through Planck's constant $\hbar = N^{-1/3}$. In the semiclassical limit $\hbar \to 0$, the Hartree (and the Hartree-Fock) dynamics can be approximated by the solution of the Vlasov equation. We define the Wigner transform $W_{N,t}$ associated with the solution $\omega_{N,t}$ of the Hartree-Fock equation by 
\[ W_{N,t} (x,p) = \frac{1}{(2\pi)^3} \int dy \, \omega_{N,t} \left(x+ \hbar \frac{y}{2} ; x - \hbar \frac{y}{2} \right) e^{-i p y} \]
It is a well-known fact that, in the limit $\hbar \to 0$, the Wigner transform $W_{N,t}$ of the solution of the Hartree-Fock equation (\ref{eq:HF}) (or the Wigner transform of the solution $\wt{\omega}_{N,t}$ of the Hartree equation) converges towards the solution of the Vlasov equation
\begin{equation}\label{eq:vlasov} \partial_t W^{\text{vl}}_t (x,p) + p \cdot \nabla_x W^\text{vl}_{t} (x,p) = \nabla_x \left( V * \rho^\text{vl}_{t} \right) (x) \nabla_p W^\text{vl}_{t} (x,p) \, . \end{equation}
where $\rho^\text{vl}_t (x) = \int dp \, W^{\text{vl}} (x,p)$. The difference between the Wigner transform $W_{N,t}$ of $\omega_{N,t}$ and the solution of the Vlasov equation $W^\text{vl}_{t}$ is of the order $\hbar N = N^{2/3}$, and therefore much larger than the difference between the reduced one-particle density $\gamma_{N,t}^{(1)}$ associated with the solution of the many body Schr\"odinger equation and the solution $\omega_{N,t}$ of the Hartree-Fock equation (or the solution $\wt{\omega}_{N,t}$ of the Hartree equation). In other words, the Hartree-Fock approximation (or the Hartree approximation) keeps the quantum structure of the problem and gives a much more precise approximation of the many body evolution compared with the classical Vlasov dynamics. Our result is therefore a dynamical counterpart to \cite{Ba,GSo}, where the Hartree-Fock theory is shown to give a much better approximation to the ground state energy of a system of atoms or molecules as compared with the Thomas-Fermi energy (but, of course, in 
contrast to \cite{Ba,GSo}, our analysis does not apply so far to a Coulomb interaction). 

\medskip

As mentioned above, the literature on the mean field dynamics of fermionic systems is rather limited. As far as we know, the first rigorous results concerning the evolution of fermionic system in the regime we are interested in was obtained in \cite{NS}, where the authors prove that, for real analytic potential, the Wigner transform of the reduced density $\gamma_{N,t}^{(1)}$ associated with the solution of the Schr\"odinger equation converges weakly to the solution of the Vlasov equation (\ref{eq:vlasov}). The regularity assumptions were substantially relaxed in \cite{Sp}. Notice that neither \cite{NS} nor \cite{Sp} give a bound on the rate of the convergence. More recently, in \cite{EESY} the many body evolution is compared with the $N$-dependent Hartree dynamics described by (\ref{eq:hartree}); under the assumption of a real analytic potential, it is shown that, for short semiclassical times, the difference between $\gamma_{N,t}^{(1)}$ and $\wt{\omega}_{N,t}$, when tested against appropriate observables, 
is of the order $N^{-1}$. The results of our paper are comparable with those of \cite{EESY}, but we obtain convergence for arbitrary semiclassical times (for arbitrary $t$ of order one, where $t$ is the time variable appearing in (\ref{eq:schf2})) and under much weaker regularity conditions on the interaction potential. It should also be noted that the mean field limit of fermionic systems with a different scaling (the same scaling used in (\ref{eq:ham-bos}) for bosonic systems) has been considered in \cite{BGGM} (for regular interactions) and in \cite{FK} (for potentials with Coulomb singularity). On the other hand, we remark that a joint mean field and semiclassical limit has been considered, for bosonic systems, in \cite{GMP} and \cite{FGS}.
 
\section{Fock space representation and quasi-free states}\label{secfock}
\setcounter{equation}{0}

The fermionic Fock-space over $L^2 (\bR^3)$ is defined by 
\[ \cF = \bigoplus_{n \geq 0} L^2_a (\bR^{3n}, dx_1 \dots dx_n) \]
where $L^2_a (\bR^{3n})$ is the subspace of $L^2 (\bR^{3n})$ consisting of all functions which are antisymmetric with respect to permutation of the $n$ particles. In other words,
\[ L^2_a (\bR^{3n}) = \{ f \in L^2 (\bR^{3n}) : f( x_{\pi(1)}, \dots , x_{\pi(n)}) = \sgn(\pi) f (x_1 , \dots , x_n) \text{ for all $\pi \in S_n$} \}. \]
Here $\sgn(\pi)$ denotes the sign of the permutation $\pi \in S_n$. 

\medskip

For a one-particle operator $O$, acting on $L^2 (\bR^3)$, we denote its second quantization by $d\Gamma (O)$. This is an operator on $\cF$, defined by 
\[ \left(d\Gamma (O) \psi \right)^{(n)} = \sum_{j=1}^n O^{(j)} \psi^{(n)} \]
where $O^{(j)}$ denotes the operator $O$ acting only on the $j$-th particle (i.e. $O^{(j)} = 1^{\otimes (j-1)} \otimes O \otimes 1^{\otimes (n-j)}$). An important example is the number of particles operator, defined by $\cN = d\Gamma (1)$. 

\medskip

On $\cF$, it is useful to introduce creation and annihilation operators. For $f \in L^2 (\bR^3)$, we define 
\[ \begin{split} (a^* (f) \psi)^{(n)} (x_1, \dots , x_n) &= \frac{1}{\sqrt{n}} \sum_{j=1}^n  (-1)^j f(x_j) \psi^{(n-1)} (x_1, \dots , x_{j-1}, x_{j+1}, \dots , x_n), \\ 
(a(f) \psi)^{(n)} (x_1 , \dots , x_n) &= \sqrt{n+1} \int dx \overline{f} (x) \psi^{(n+1)} (x, x_1, \dots , x_n).
\end{split}\]
Observe here that creation operators are linear while annihilation operators are antilinear in their argument. They satisfy canonical anticommutation relations 
\begin{equation}\label{eq:CAR} \{ a (f) , a^* (g) \} = \langle f , g \rangle, \quad \{ a (f) , a(g) \} = \{ a^* (f) , a^* (g) \} = 0 \end{equation}
for all $f,g \in L^2 (\bR^3)$. It is also important to note that, in contrast to the bosonic case, fermionic creation and annihilation operators are bounded. In fact 
\[ \| a (f) \psi \|^2 = \langle a (f) \psi , a(f) \psi \rangle = \langle \psi, a^* (f) a(f) \psi \rangle = \| f \|_2^2 - \langle \psi, a(f) a^* (f) \psi \rangle \leq \| f \|_2^2 \]
and a similar computation for $\| a^* (f) \psi \|^2$ imply that
\begin{equation}\label{eq:bdaa*} 
\| a(f) \| \leq \| f \|_2 \quad \text{ and } \quad \| a^* (f) \| \leq \| f \|_2.
\end{equation}
It is also useful to introduce operator valued distributions $a_x^*$ and $a_x$, which formally create, respectively, annihilate a particle at the point $x \in \bR^3$. They are defined by 
\[ a (f) = \int dx \, \overline{f (x)} \, a_x , \quad a^* (f) = \int dx \, f(x) \, a^*_x \]
for all $f \in L^2 (\bR^3)$. In terms of these operator valued distributions, it is possible to write the second quantization $d\Gamma (O)$ of a one-particle operator $O$ with integral kernel $O (x;y)$ as
\be
d\Gamma (O) = \int dx dy \, O (x;y) a_x^* a_y. \label{defsecqu}
\ee
In particular, the number of particles operator is given by
\[ \cN = \int dx \, a_x^* a_x.\]
Observe that, even for bounded $O$, the second quantized operator $d\Gamma (O)$ does not need to be bounded, simply because the number of particles is not bounded. It turns out, however, that because of the fermionic statistics, $d\Gamma (O)$ is a bounded operator if $O$ is trace-class. This fact together with other useful bounds will be shown in Lemma \ref{lm:bds-2} in Section \ref{sec:gro}.

\medskip

Since we want to study the time evolution of fermionic systems, we need to define a Hamilton operator on the Fock space $\cF$. Inspired by (\ref{eq:HN-fer}), we introduce the operator $\cH_N$, by setting $(\cH_N \psi)^{(n)} = \cH_N^{(n)} \psi^{(n)}$, with 
\[ \cH_N^{(n)} = \sum_{j=1}^n - \hbar^2 \Delta_{x_j} + \frac{1}{N} \sum_{i<j}^n V(x_i -x_j) \]
where, as discussed in the introduction, $\hbar =N^{-1/3}$. Hence, the Hamiltonian $\cH_N$ leaves each sector of the Fock space with a fixed number of particles invariant. On the $N$-particle sector, it agrees with (\ref{eq:HN-fer}). Notice that in the notation $\cH_N$, the index $N$ does not refer here to the number of particles, since $\cH_N$ acts on the whole Fock space. It reminds instead of the coupling constant $1/N$ in front of the potential energy, and of the $N$-dependent Planck constant $\hbar = N^{-1/3}$ in front of the kinetic energy. Of course, in order to recover the mean field regime discussed in the introduction, we will consider later the time evolution of states in $\cF$ having approximately $N$ particles. Observe that, in terms of the operator valued distributions $a_x$ and $a_x^*$, we can express the Hamiltonian $\cH_N$ as
\begin{equation}\label{eq:cHN} \cH_N = \hbar^2 \int dx \nabla_x a_x^* \nabla_x a_x + \frac{1}{2N} \int dx dy V(x-y) a_x^* a_y^* a_y a_x. \end{equation}
Notice that the kinetic energy is just given by the second quantization $d\Gamma (-\hbar^2 \Delta)$.

\medskip

It will also be important to consider linear combinations of creation and annihilation operators. For 
$f,g \in L^2 (\bR^3)$ we set \[ A(f,g) = a(f) + a^* (\bar{g}), \quad \text{and } A^* (f,g) = (A(f,g))^* = a^* (f) + a(\bar{g}).\] 
Observe that
\begin{equation}\label{eq:AA*} A^* (f,g) = A (Jg, Jf) \end{equation}
where we introduced the antilinear operator $J : L^2 (\bR^3) \to L^2 (\bR^3)$ defined by $J f = \overline{f}$. Note that $A^*$ is linear while $A$ is antilinear in its two arguments. In terms of the operators $A, A^*$ the canonical anticommutation relations assume the form
\begin{equation}\label{eq:commA} \begin{split} 
\left\{ A (f_1, g_1) , A^* (f_2, g_2) \right\} &= \{ a (f_1) , a^* (f_2) \} + \{ a^* (\bar{g}_1) , a (\bar{g}_2) \} \\ &= \langle f_1, f_2 \rangle + \langle \bar{g}_2 , \bar{g}_1 \rangle = \langle f_1 , f_2 \rangle + \langle g_1 , g_2 \rangle \\ &= \langle (f_1, g_1) , (f_2, g_2) \rangle_{L^2 \oplus L^2}. \end{split} \end{equation}
Note that $\{ A (f_1, g_1) , A(f_2, g_2) \}$ and $\{ A^* (f_1, g_1) , A^* (f_2 , g_2) \}$ in general do not vanish.

\medskip

We now introduce Bogoliubov transformations (a useful review on this subject can be found, for example, in the lecture notes \cite{Solovej}). A (fermionic) Bogoliubov transformation is a linear map $\nu: L^2 (\bR^3)  \oplus L^2 (\bR^3) \to L^2 (\bR^3) \oplus L^2 (\bR^3)$ with the properties
\begin{equation}\label{eq:car-nu} \left\{ A( \nu (f_1 , g_1)) , A^* (\nu (f_2 , g_2))\right \} = \left\{ A (f_1 , g_1) , A^* (f_2 , g_2) \right\} \end{equation}
for all $f_1, g_1, f_2 , g_2 \in L^2 (\bR^3)$, and
\begin{equation}\label{eq:BB*} A^* (\nu (f,g))  = A (\nu (\bar{g}, \bar{f})) \end{equation}
for all $f,g \in L^2 (\bR^3)$. In other words, a Bogoliubov transformation is a map $\nu :L^2 (\bR^3)  \oplus L^2 (\bR^3) \to L^2 (\bR^3) \oplus L^2 (\bR^3)$ with the property that (\ref{eq:AA*}) and the canonical anticommutation relations (\ref{eq:commA}) continue to hold for the new field operators $B(f,g) := A (\nu (f,g))$. Note that, by (\ref{eq:commA}), condition (\ref{eq:car-nu}) means that every Bogoliubov transformation $\nu$ is unitary. The condition (\ref{eq:BB*}), on the other hand, is equivalent to 
\[ \left( \begin{array}{ll} 0 & J \\ J & 0 \end{array} \right) \nu = \nu \left( \begin{array}{ll} 0 & J \\ J & 0 \end{array} \right). \]
It is then simple to check that a linear map $\nu: L^2 (\bR^3) \oplus L^2 (\bR^3) \to L^2 (\bR^3) \oplus L^2 (\bR^3)$ is a Bogoliubov transformation if and only if it has the form
\begin{equation}\label{eq:Nu} \nu =  \left( \begin{array}{ll} u & \overline{v} \\ v & \overline{u} \end{array} \right) \end{equation}
where $u,v: L^2(\Rbb^3) \to L^2(\Rbb^3)$ are linear maps with $u^* u + v^* v = 1$ and $u^* \overline{v} + v^* \overline{u} = 0$. Here we used the notation $\overline{u} = J u J$, for any linear operator $u : L^2 (\bR^3) \to L^2 (\bR^3)$. Notice that, if $u$ is a linear operator with integral kernel $u(x;y)$, then $\overline{u}$ is again a linear operator, with the integral kernel $\overline{u} (x;y) = \overline{u(x;y)}$ (this explain the notation $\overline{u}$). 

\medskip

We say that a Bogoliubov transformation $\nu$ is implementable on the fermionic Fock space $\cF$ if there exists a unitary operator $R_\nu: \fock \to \fock$ with the property
\begin{equation}\label{eq:Bog} R_\nu^* A(f,g) R_\nu = A (\nu (f,g)) \end{equation}
for all $f,g \in L^2 (\bR^3)$. A Bogoliubov transformation (\ref{eq:Nu}) is implementable if and only if $v$ is a Hilbert-Schmidt operator (Shale-Stinespring condition, see e.\,g.\ \cite[Theorem 9.5]{Solovej} or \cite{Ru}). 

\medskip

Given a Fock space vector $\psi \in \cF$, we define the one-particle reduced density $\gamma_\psi$ associated with $\psi$ as the non-negative operator with the integral kernel 
\[ \gamma_\psi (x;y) = \langle \psi , a_y^* a_x \psi \rangle.\]
Notice that $\gamma_\psi$ is normalized such that $\tr \, \gamma_\psi = \langle \psi , \cN \psi \rangle$. Hence $\gamma_\psi$ is a trace class operator if the expectation of $\cN$ in the state $\psi$ is finite.
In general, if $\psi$ does not have a fixed number of particles, it is also important to track the expectations $\langle \psi, a_y a_x \psi \rangle$ and $\langle \psi, a_x^* a_y^* \psi \rangle$. 
We define therefore the pairing density $\alpha_\psi$ associated with $\psi$ as the one-particle operator with integral kernel  
\[ \alpha_\psi (x;y) = \langle \psi , a_y a_x \psi \rangle. \]
Then we also have $\overline{\alpha_\psi} (x;y) = \langle \psi , a_x^* a_y^* \psi \rangle$. The operators $\gamma_\psi$ and $\alpha_\psi$ can be combined into the generalized one-particle density $\Gamma_\psi : L^2 (\bR^3) \oplus L^2 (\bR^3) \to L^2  (\bR^3) \oplus L^2 (\bR^3)$ defined by
\[ \left\langle (f_1, g_1) , \Gamma_\psi (f_2, g_2) \right\rangle = \left\langle \psi, A^* (f_2, g_2) A (f_1, g_1) \psi \right\rangle. \]
A simple computation shows that $\Gamma_\psi$ can be expressed in terms of $\gamma_\psi$ and $\alpha_\psi$ as 
\begin{equation}\label{eq:Gamma} \Gamma_\psi = \left( \begin{array}{ll} \gamma_\psi & \alpha_\psi \\ -\overline{\alpha}_\psi & 1-\overline{\gamma}_\psi \end{array} \right). \end{equation}
As a consequence of the canonical anticommutation relations, it is simple to check that $0 \leq \Gamma_\psi \leq 1$.

\medskip

Knowledge of the generalized one-particle density $\Gamma_\psi$ allows the computation of the expectation of all observables which are quadratic in creation and annihilation operators. To compute expectations of operators involving more than two creation and annihilation operators, one needs higher order correlation functions, having the form
\begin{equation}\label{eq:hcorr} \left\langle \psi, a_{x_1}^\# \dots a_{x_k}^\# \psi \right\rangle \end{equation}
where each $a^\#$ is either an annihilation or a creation operator. An important class of states on $\cF$ are quasi-free states. A pure quasi-free state is a vector in $\cF$ with the form $\psi = R_\nu \Omega$, where $R_\nu$ is the unitary implementor of an (implementable) Bogoliubov transformation $\nu$. The crucial (and defining) property of quasi-free states is the fact that all higher order correlations functions like (\ref{eq:hcorr}) can be expressed, using Wick's theorem, just in terms of the reduced density $\gamma_\psi$ and the pairing density $\alpha_\psi$ (see, for example, \cite[Theorem 10.2]{Solovej}). In other words, quasi-free states are completely described by their generalized one-particle reduced density $\Gamma_\psi$. If $\nu$ is a Bogoliubov transformation of the form (\ref{eq:Nu}), it is simple to check that the reduced one-particle density associated with $\psi$ has the form
\[ \Gamma_\nu = \left( \begin{array}{ll} v^* v & v^* \overline{u} \\ \overline{u}^* v & \overline{u}^* \overline{u} \end{array} \right). \]
Hence, the reduced density of the quasi-free state associated with the Bogoliubov transformation $\nu$ is $\gamma_\nu = v^* v$, while the pairing density is $\alpha_\nu = v^* \overline{u}$. {F}rom the property of Bogoliubov transformations, we conclude that $\gamma_\nu$ is trace class (because $v$ is a Hilbert-Schmidt operator, for $\nu$ to be implementable) and hence the expectation of the number of particles is always finite for quasi-free states. Moreover, it follows that $\Gamma_\nu^2 = \Gamma_\nu$, i.e. $\Gamma_\nu$ is a projection. Conversely, for every linear projection $\Gamma : L^2 (\bR^3) \oplus L^2 (\bR^3) \to L^2 (\bR^3) \oplus L^2 (\bR^3)$ having the form 
\[ \Gamma =  \left( \begin{array}{ll} \gamma & \alpha \\ -\overline{\alpha} & 1-\overline{\gamma} \end{array} \right)  \]
for a trace class operator $\gamma$, there exists a quasi-free state, i.e. an implementable Bogoliubov transformation $\nu$, such that $\Gamma = \Gamma_\nu$, i.e. $\Gamma$ is the generalized reduced density associated with the Fock space state $R_\nu \Omega$. Restricting the  Hamiltonian (\ref{eq:cHN}) on quasi-free states of the form $R_\nu \Omega$ one obtains the Bardeen-Cooper-Schrieffer (BCS) energy functional. BCS theory plays a very important role in physics. Originally introduced to describe superconductors, it has been later applied to explain the phenomenon of superfluidity observed in dilute gases of fermionic atoms at low temperature. In the last years, there has been a lot of progress in the mathematical understanding of BCS theory; see, for example, \cite{HS2,HHSS,FHSS,HS} for results concerning equilibrium properties and \cite{HLLS,HSc} for results about the time-evolution in BCS theory.   

\medskip

In this paper we will be interested in pure quasi-free states with no pairing, i.e. with $\alpha = 0$. Since $\Gamma$ must be a projection, the assumption $\alpha = 0$ implies that $\gamma$ is a projection.  We require the number of particles to be $N$, i.e. $\tr\, \gamma = N$. We know then that there exists a Bogoliubov transformation $\nu$, such that $\gamma$ is the reduced density of $R_\nu \Omega$. In fact, it is easy to construct such a Bogoliubov transformation. Since we assumed $\gamma$ to be an orthogonal projection with $\tr \, \gamma = N$, there must be an orthonormal system $\{ f_j \}_{j=1}^N$ such that $\gamma = \sum_{j=1}^N |f_j \rangle \langle f_j|$. We define then $v = \sum_{j=1}^N |\bar{f}_j \rangle \langle f_j|$. Then we have $v^* = \overline{v} = \sum_{j=1}^N |f_j \rangle \langle \bar{f}_j|$
and $v^* v = \sum_{j=1}^N |f_j \rangle \langle f_j| = \gamma$.
We also set 
\[ u = u^* = 1-\sum_{j=1}^N |f_j \rangle \langle f_j| = 1- \gamma. \]
Then $u$ is a projection and $u^* u = u^2 = u = 1-\gamma$. Hence $u^* u + v^* v = 1$, and $v^* \overline{u} = 0$. It follows that
\begin{equation}\label{eq:Nu-gamma} \nu =\left( \begin{array}{ll} u & \overline{v}  \\ v & \overline{u} \end{array} \right) = \left( \begin{array}{ll} 1-\gamma &   \sum_{j=1}^N |f_j \rangle \langle \bar{f}_j|   \\ \sum_{j=1}^N |\bar{f}_j \rangle \langle f_j| & 1- \bar{\gamma} \end{array} \right) \end{equation}
is an implementable Bogoliubov transformation, with
\begin{equation}\label{eq:Gnu} \Gamma_\nu =  \left( \begin{array}{ll} \gamma & 0  \\ 0 & 1- \bar{\gamma} \end{array} \right). \end{equation}
Both the pure quasi-free state $R_\nu \Omega$ and the $N$-particle Slater determinant $\psi_{\text{slater}} (\bx) = (N!)^{-1/2}\det \left( f_j (x_i) \right)_{i,j \leq N}$ satisfy the Wick theorem and are therefore fully characterized by their generalized one-particle density. Since (\ref{eq:Gnu}) coincides with the generalized one-particle density of $\psi_{\text{slater}}$, it follows that $R_\nu \Omega = \{ 0, \dots , 0 , \psi_{\text{slater}}, 0 ,\dots \}$. Hence, Slater determinants are the only pure quasi-free states with vanishing pairing density.

\medskip

Although we will not make use of this fact, let us notice that unitary implementors of Bogoliubov transformations of the form (\ref{eq:Nu-gamma}), generating Slater determinants, can be conveniently constructed as particle-hole transformations. In fact, let $\{f_i\}_{i=1}^N$ be an orthonormal system on $L^2(\Rbb^3)$ and extend it to an orthonormal basis $\{f_i\}_{i=1}^\infty$. The Slater determinant $\psi_{\text{slater}} (\bx) = (N!)^{-1/2} \, \det (f_i (x_j))_{1\leq i,j \leq N}$ can be expressed as $R_\nu \Omega$, where the operator $R_\nu$ is defined by 
\[R_\nu\Omega := a^*(f_1)\cdots a^*(f_N) \Omega\]
and by the property 
\[ R_\nu a^*(f_i) R_\nu^* := \left\{\begin{array}{cc} a(f_i) & \text{ for } i \leq N \\ a^*(f_i) & \text{ for } i > N.\end{array}\right.\]
It is then simple to check that $R_\nu$ preserves the canonical anticommutation relations and that it is surjective. As a consequence, it is a unitary operator on $\cF$. 

\medskip

The next theorem is our main result. In it, we study the time evolution of initial data close to Slater determinants, and prove that their dynamics can be described in terms of the Hartree-Fock (or the Hartree) equation. Of course, we cannot start with an arbitrary Slater determinant. Instead, we need the initial state to have the semiclassical structure discussed in the introduction. We encode this requirement in the assumption (\ref{eq:sc}) below. We do not expect the result to be correct if the initial data is not semiclassical, i.e. if (\ref{eq:sc}) is not satisfied. Notice, however, that the semiclassical structure emerges naturally, when one tries to minimize the energy. Hence, the assumption (\ref{eq:sc}) is appropriate to study the dynamics of initially trapped fermionic systems close to the ground state of the trapped Hamiltonian (traps are then released, or changed, to observe the dynamics of the particles, which would otherwise be trivial). 
\begin{thm}
\label{thm:main}
Assume that, in the Hamiltonian (\ref{eq:cHN}), $V \in L^1 (\bR^3)$ is so that
\begin{equation}\label{eq:ass-V} \int \di p\, (1+|p|)^{2} |\widehat{V} (p)| < \infty \,. \end{equation}
Let $\omega_{N}$ be a sequence of orthogonal projections on $L^2 (\bR^3)$, with $\tr \omega_{N} = N$ and such that
\begin{equation}\label{eq:sc} 
\begin{split}
\tr \, |[ e^{i p \cdot x} , \omega_{N} ]| &\leq C N \hbar \, (1+|p|) \quad \text{and } \\
\tr \, |[ \hbar \nabla , \omega_{N} ]| &\leq C N \hbar 
\end{split} 
\end{equation}
for all $p \in \bR^3$ and for a constant $C >0$. Let $\nu_N$ denote the sequence of Bogoliubov transformations constructed in (\ref{eq:Nu-gamma}) such that $R_{\nu_N} \Omega$ has the generalized one-particle density 
\[ \Gamma_{\nu_N} = \left( \begin{array}{ll} \omega_N & 0 \\ 0 & 1- \overline{\omega}_N \end{array} \right). \]
Let $\xi_N \in \cF$ be a sequence with $\langle \xi_N , \cN \xi_N \rangle \leq C$ uniformly in $N$. Let $\gamma_{N,t}^{(1)}$ be the reduced one-particle density associated with the evolved state 
\begin{equation}\label{eq:psiNt} \psi_{N,t} = e^{-i \cH_N t/\hbar} R_{\nu_N} \xi_N \end{equation}
where the Hamiltonian $\cH_N$ has been defined in (\ref{eq:cHN}). On the other hand, denote by $\omega_{N,t}$ the solution of the Hartree-Fock equation 
\begin{equation}\label{eq:hf} i\hbar \partial_t \omega_{N,t} = \left[ -\hbar^2 \Delta + (V*\rho_t) - X_t , \omega_{N,t} \right] , \end{equation} with the initial data $\omega_{N,t=0} = \omega_N$. Here $\rho_t (x) = N^{-1} \omega_{N,t} (x;x)$ is the normalized density and $X_t$ is the exchange operator associated with $\omega_{N,t}$, having the kernel $X_t (x;y) = N^{-1} V(x-y) \omega_{N,t} (x;y)$. Then there exist constants $K, c_1, c_2 > 0$ such that  
\begin{equation}\label{eq:conv-HS} \left\| \gamma^{(1)}_{N,t} - \omega_{N,t} \right\|_{\text{HS}} \leq K \exp (c_2 \exp (c_1 |t|)) \end{equation}
and
\begin{equation}\label{eq:conv-tr0}
\tr \; \left|  \gamma^{(1)}_{N,t} - \omega_{N,t} \right| \leq K N^{1/2} \exp (c_2 \exp (c_1 |t|)) 
\end{equation}
for all $t \in \bR$. 

\medskip

Assume additionally that $d\Gamma (\omega_N) \xi_N = 0$ and $\langle \xi_N, \cN^2 \xi_N \rangle \leq C$ for all $N \in \bN$. Then 
there exist constants $K,c_1,c_2 > 0$ such that 
\begin{equation}\label{eq:conv-tr}
\tr \, \left| \gamma^{(1)}_{N,t} - \omega_{N,t} \right| \leq K N^{1/6} \exp (c_2 \exp (c_1 |t|)) 
\end{equation}
for all $t \in \bR$. Moreover, under this additional assumption, we obtain that 
\begin{equation}\label{eq:conv-sc} 
\left| \tr\, e^{i x \cdot q + \hbar p \cdot \nabla} \, \left( \gamma^{(1)}_{N,t} - \omega_{N,t} \right) \right| \leq K (1+|q|+|p|)^{1/2} \exp (c_2 \exp (c_1 |t|)) \end{equation}
for every $q,p \in \bR^3$, $t \in \bR$.  
\end{thm}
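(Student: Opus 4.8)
I would follow the Rodnianski--Schlein approach for bosons, replacing Weyl operators by fermionic Bogoliubov transformations. Since the Hartree--Fock equation (\ref{eq:hf}) is an isospectral flow ($\omega_{N,t}=U_t\,\omega_N\,U_t^*$ for the unitary propagator $U_t$ generated by the self-adjoint Hartree--Fock Hamiltonian $h_{\mathrm{HF}}(t)=-\hbar^2\Delta+V*\rho_t-X_t$), the solution $\omega_{N,t}$ stays an orthogonal projection with $\tr\,\omega_{N,t}=N$, so the construction (\ref{eq:Nu-gamma}) produces a family $\nu_{N,t}$ of implementable Bogoliubov transformations with implementors $R_{\nu_{N,t}}$ whose generalized one-particle density has $\omega_{N,t}$ in the upper-left block. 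I would then introduce the fluctuation dynamics
\[ \cU_N(t;s):=R_{\nu_{N,t}}^*\,e^{-i\cH_N(t-s)/\hbar}\,R_{\nu_{N,s}},\qquad \xi_{N,t}:=\cU_N(t;0)\,\xi_N, \]
so that $\psi_{N,t}=R_{\nu_{N,t}}\,\xi_{N,t}$. Using the transformation rule $R_{\nu_{N,t}}^*a_xR_{\nu_{N,t}}=a(u_t(\cdot\,;x))+a^*(\overline{v_t}(\cdot\,;x))$ with $u_t=1-\omega_{N,t}$ and $v_t$ as in (\ref{eq:Nu-gamma}), and the canonical anticommutation relations (\ref{eq:CAR}), I would expand $\gamma^{(1)}_{N,t}(x;y)=\langle\psi_{N,t},a_y^*a_x\psi_{N,t}\rangle$: the vacuum contribution is exactly $\omega_{N,t}(x;y)$, and the remainder is bilinear in the reduced density $\gamma_{\xi_{N,t}}$ and the pairing density $\alpha_{\xi_{N,t}}$ of the fluctuation vector, sandwiched between $u_t$'s and $v_t$'s. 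Since $\|u_t\|,\|v_t\|\le1$, $\|v_t\|_{\mathrm{HS}}=N^{1/2}$, and $\|\alpha_{\xi_{N,t}}\|_{\mathrm{HS}}^2\le\tr\,\gamma_{\xi_{N,t}}$ (a consequence of $0\le\Gamma_{\xi_{N,t}}\le1$), these error terms are bounded in Hilbert--Schmidt norm by a constant times $\langle\xi_{N,t},(\cN+1)\xi_{N,t}\rangle$ and in trace norm by a constant times $N^{1/2}\langle\xi_{N,t},(\cN+1)\xi_{N,t}\rangle^{1/2}$. Thus (\ref{eq:conv-HS}) and (\ref{eq:conv-tr0}) reduce to controlling the number of fluctuations $\langle\xi_{N,t},\cN\xi_{N,t}\rangle$.

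\textbf{Growth of the number of fluctuations (main obstacle).} Writing $i\hbar\partial_t\cU_N(t;s)=\cL_N(t)\cU_N(t;s)$, the generator is $\cL_N(t)=(i\hbar\partial_tR_{\nu_{N,t}}^*)R_{\nu_{N,t}}+R_{\nu_{N,t}}^*\cH_NR_{\nu_{N,t}}$. The first check is that, because $\omega_{N,t}$ solves (\ref{eq:hf}), the $c$-number part and the part linear in creation/annihilation operators of $\cL_N(t)$ cancel; what remains is a number-conserving quadratic piece (morally $d\Gamma$ of the conjugated Hartree--Fock Hamiltonian), a non-number-conserving (``pairing'') quadratic piece, and cubic and quartic pieces carrying the $1/N$ from the interaction. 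Since $\tfrac{d}{dt}\langle\xi_{N,t},\cN\xi_{N,t}\rangle=\tfrac{1}{i\hbar}\langle\xi_{N,t},[\cN,\cL_N(t)]\xi_{N,t}\rangle$ and $\cN$ commutes with the number-conserving parts, only the pairing term and the non-conserving parts of the cubic/quartic terms contribute. The crux — and the step I expect to be hardest — is to bound these contributions uniformly in $N$, i.e. to compensate the $\hbar^{-1}$ prefactor: after the Hartree--Fock cancellations the pairing coefficients carry a commutator of $\omega_{N,t}$ against $\hbar\nabla$ or a bounded multiplication operator (the latter expanded in Fourier modes $e^{ip\cdot x}$), so the semiclassical bounds (\ref{eq:sc}) supply the needed power of $\hbar$, while the weighted integrability (\ref{eq:ass-V}) of $\widehat V$ controls the $p$-integrals and the cubic/quartic interaction terms are handled via $\tfrac1N\cdot N=1$ together with the Fock-space estimates for operators of type $\int V(x-y)a_x^\#a_y^\#a_y^\#a_x^\#$. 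The outcome is a differential inequality $|\tfrac{d}{dt}\langle\xi_{N,t},\cN\xi_{N,t}\rangle|\le C(t)\,(\langle\xi_{N,t},\cN\xi_{N,t}\rangle+1)$, with $C(t)$ depending only on $\|V\|$ through (\ref{eq:ass-V}) and on the semiclassical constants of $\omega_{N,t}$ at time $t$.

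\textbf{Propagating the semiclassical structure, and conclusion of the first part.} Closing the Gronwall requires that the bounds (\ref{eq:sc}) persist along the Hartree--Fock flow; this is a separate and also delicate step: computing $\tfrac{d}{dt}[e^{ip\cdot x},\omega_{N,t}]$ and $\tfrac{d}{dt}[\hbar\nabla,\omega_{N,t}]$ from (\ref{eq:hf}), commuting $e^{ip\cdot x}$ and $\hbar\nabla$ through $h_{\mathrm{HF}}(t)$, and using the regularity of $V$ gives $\tr\,|[e^{ip\cdot x},\omega_{N,t}]|\le CN\hbar(1+|p|)e^{c|t|}$ and $\tr\,|[\hbar\nabla,\omega_{N,t}]|\le CN\hbar\,e^{c|t|}$, so $C(t)\le Ce^{c|t|}$. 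Integrating the inequality above with $\langle\xi_N,\cN\xi_N\rangle\le C$ then yields $\langle\xi_{N,t},\cN\xi_{N,t}\rangle\le K\exp(c_2\exp(c_1|t|))$, which with the norm bounds from the first paragraph gives (\ref{eq:conv-HS}) and (\ref{eq:conv-tr0}).

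\textbf{The refined estimates.} Under the extra hypotheses, $d\Gamma(\omega_N)\xi_N=0$ says $\xi_N$ has no particles in $\mathrm{ran}(\omega_N)$; consequently $\gamma_{\xi_N}$ and $\alpha_{\xi_N}$ are supported in $\mathrm{ran}(1-\omega_N)$, which makes every error term in $\gamma^{(1)}_{N,t}-\omega_{N,t}$ vanish at $t=0$ except the ``diagonal'' one $u_t\gamma_{\xi_{N,t}}u_t$. I would then show that the $d\Gamma(\omega_{N,t})$-content of $\xi_{N,t}$ grows only controllably, running a Gronwall estimate for it that uses $\langle\xi_N,\cN^2\xi_N\rangle\le C$ (and the double-exponentially bounded propagation of $\langle\cN^2\rangle_{\xi_{N,t}}$), so that the off-diagonal error terms remain of relative size $\hbar$; this gains a factor $\hbar=N^{-1/3}$ over (\ref{eq:conv-tr0}), improving $N^{1/2}$ to $N^{1/6}$ and giving (\ref{eq:conv-tr}). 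Finally, for (\ref{eq:conv-sc}) one writes $\tr\,e^{ix\cdot q+\hbar p\cdot\nabla}(\gamma^{(1)}_{N,t}-\omega_{N,t})$ as a sum of traces of the error terms against the unitary operator $W_{q,p}=e^{ix\cdot q+\hbar p\cdot\nabla}$: the diagonal term contributes at most $\langle\xi_{N,t},\cN\xi_{N,t}\rangle$, while in the off-diagonal terms one commutes $W_{q,p}$ through $\omega_{N,t}$ (equivalently through $v_t$), producing factors controlled by $\|[W_{q,p},\omega_{N,t}]\|$, which — as the semiclassical analogue of (\ref{eq:tr-exp}), propagated along the flow — is bounded by $C(1+|q|+|p|)^{1/2}$, accounting for the weight in (\ref{eq:conv-sc}).
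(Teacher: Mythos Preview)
Your overall strategy for (\ref{eq:conv-HS}) and (\ref{eq:conv-tr0}) matches the paper's: introduce the fluctuation dynamics $\cU_N(t;0)$, expand $\gamma_{N,t}^{(1)}-\omega_{N,t}$ via the Bogoliubov rules, reduce to the growth of $\langle\xi_{N,t},\cN\xi_{N,t}\rangle$, and propagate the semiclassical commutator bounds along the Hartree--Fock flow. One correction to your description of the generator: after the Hartree--Fock cancellations the non-number-conserving \emph{quadratic} pairing part of $\cL_N(t)$ vanishes exactly (the paper sees this directly by computing $i\hbar\partial_t\,\cU_N^*\cN\,\cU_N$ from $R_t\cN R_t^*=\cN-2d\Gamma(\omega_{N,t})+N$, where the kinetic pieces cancel before any semiclassical input is needed). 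What survives in $[\cN,\cL_N(t)]$ are three purely \emph{quartic} terms, and the semiclassical gain of $\hbar$ enters there, through the identity $v_te^{ip\cdot x}u_t=-v_t[\,e^{ip\cdot x},\omega_{N,t}\,]$ (using $v_tu_t=0$), whose trace and Hilbert--Schmidt norms are controlled by the propagated version of (\ref{eq:sc}). This is a misattribution rather than a fatal error for the first two bounds; a careful computation would reveal the correct structure.

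The genuine gap is in your route to (\ref{eq:conv-tr}) and (\ref{eq:conv-sc}). Tracking the $d\Gamma(\omega_{N,t})$-content of $\xi_{N,t}$ does not yield the needed improvement: the dangerous term is the pairing contribution $\int (v_tOu_t)(r_1;r_2)\langle\xi_{N,t},a_{r_1}a_{r_2}\xi_{N,t}\rangle$, and even if $\alpha_{\xi_{N,t}}$ were supported in $\mathrm{ran}(u_t)$ this trace has no reason to be small (the $v_t$ sits on the other side of $O$). The paper's mechanism is quite different. The hypothesis $d\Gamma(\omega_N)\xi_N=0$ is used only to guarantee that each $n$-particle component $\xi_N^{(n)}$ gives $R_{\nu_N}\xi_N^{(n)}$ an exact particle number $n+N$, so one may work sector by sector. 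The key idea you are missing is the introduction of a \emph{modified} fluctuation dynamics $\cU_N^{(1)}(t;0)=\cU_N(t;0)\,\wt\cU_N(t;0)$, where $\wt\cU_N$ is generated so as to remove from $\cL_N(t)$ the two quartic terms that change particle number by $\pm2$; the remaining effective generator changes $\cN$ only by $\pm4$ and hence commutes with $i^{\cN}$. A parity argument then forces the pairing expectation in the state $\cU_N^{(1)}(t;0)\xi_N^{(n)}$ to vanish identically. The full pairing term is recovered by a Duhamel expansion of $\cU_N-\cU_N^{(1)}$; each correction carries an extra factor $\|v_se^{-ip\cdot x}u_s\|_{\mathrm{HS}}\le C((1+|p|)N\hbar)^{1/2}$, so that together with $\|v_tOu_t\|_{\mathrm{HS}}\le\|O\|N^{1/2}$ one gets $\frac{1}{\hbar N}(N\hbar)^{1/2}N^{1/2}=N^{1/6}$, proving (\ref{eq:conv-tr}). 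For $O=e^{ix\cdot q+\hbar p\cdot\nabla}$ the same Duhamel estimate is used, but now with $\|v_tOu_t\|_{\mathrm{HS}}^2\le\tr|[\omega_{N,t},O]|\le C(1+|q|+|p|)N\hbar$, giving $(1+|q|+|p|)^{1/2}$ and hence (\ref{eq:conv-sc}). Without the modified-dynamics/parity step, the pairing term is bounded only by $\|v_tOu_t\|_{\mathrm{HS}}\langle\cN\rangle^{1/2}$, i.e.\ $N^{1/2}$ for general $O$ and $N^{1/3}(1+|q|+|p|)^{1/2}$ for the semiclassical observable --- neither suffices.
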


{\it Remarks.}
\begin{itemize}
\item Using (\ref{eq:Bog}), it is simple to check that $R_{\nu_N}^* \cN R_{\nu_N} = \cN - 2 d\Gamma (\omega_N) + N$. The assumption $\langle \xi_N, \cN \xi_N \rangle \leq C$ implies therefore that 
\[ \left| \tr\, \gamma_{N,0}^{(1)}  - \tr\, \omega_N \right| = \left| \langle \xi_N, R_{\nu_N}^* \cN R_{\nu_N} \xi_N \rangle - N \right| \leq C \]
uniformly in $N$ (this bound is of course preserved by the time-evolution). Following the arguments of Section \ref{sec:proof} it is also easy to check that \[ \| \gamma_{N,0}^{(1)}- \omega_N \|_{\text{HS}} \leq C, \quad \text{and } \tr \, | \gamma_{N,0}^{(1)} - \omega_N | \leq C N^{1/2}, \] if $\langle \xi_N, \cN \xi_N \rangle \leq C$. Under the additional assumption $d\Gamma (\omega_N) \xi_N = 0$, one can even show that \[ \tr \, \left| \gamma_{N,0}^{(1)} - \omega_N \right| \leq C, \]
uniformly in $N$ (applying the arguments at the beginning of Step 3 in Section \ref{sec:proof}). This proves that, at time $t=0$, the bulk of the particles is in the quasi-free state generated by $R_{\nu_N}$. The small fluctuations around the quasi-free state are described by $\xi_N$. In particular, it follows that the bounds (\ref{eq:conv-HS}), (\ref{eq:conv-tr0}), (\ref{eq:conv-tr}) and (\ref{eq:conv-sc}) hold at time $t=0$. Results similar to (\ref{eq:conv-HS}), (\ref{eq:conv-tr0}), (\ref{eq:conv-tr}), (\ref{eq:conv-sc}) also hold if $\langle \xi_N, \cN \xi_N \rangle \simeq N^\alpha$ and $\langle \xi_N, \cN^2 \xi_N \rangle \simeq N^\beta$, for some $\alpha,\beta > 0$, but then, of course, the errors become larger. 
\item Suppose that the initial data is $\omega_N = \sum_{j=1}^N |f_j \rangle \langle f_j|$ for a family $\{ f_j \}_{j=1}^N$ of orthonormal functions in $L^2 (\bR^3)$. 
Then the condition $d\Gamma (\omega_N) \xi_N = 0$, required for (\ref{eq:conv-tr}) and (\ref{eq:conv-sc}), is satisfied if $a (f_i) \xi_N = 0$ for all $i = 1, \dots , N$, meaning that particles in $\xi_N$ are orthogonal to all orbitals $f_{j}$ building the quasi-free part of the state. 
\item All our results and our analysis remain valid if we included an external potential in the Hamiltonian (\ref{eq:cHN}) generating the time-evolution (in this case, the external potential would, of course, also appear in the Hartree-Fock equation (\ref{eq:hf})).  
\item Eq. (\ref{eq:conv-HS}) is optimal in its $N$ dependence (it is easy to find a sequence $\xi_N \in \cF$ with $\langle \xi_N, \cN \xi_N \rangle < \infty$ such that, already at time $t=0$, the difference between $\gamma_{N,0}^{(1)}$ and $\omega_{N,0}$ is of order one).  On the other hand, we do not expect \eqref{eq:conv-tr0} and \eqref{eq:conv-tr} to be optimal (the optimal bound for the trace norm of the difference should be, like (\ref{eq:conv-HS}), of order one in $N$). Since the Hilbert-Schmidt norm of $\gamma_{N,t}^{(1)}$ and of $\omega_{N,t}$ is of the order $N^{1/2}$ (while their trace-norm is of order $N$), it is not surprising that in (\ref{eq:conv-HS}) we get a better rate than in \eqref{eq:conv-tr0} and in (\ref{eq:conv-tr}).
We point out, however, that we can improve (\ref{eq:conv-tr}) and get optimal estimates, if we test the
difference $\gamma^{(1)}_{N,t} - \omega_{N,t}$ against observables having the correct semiclassical structure, even if these observables are not Hilbert-Schmidt; see (\ref{eq:conv-sc}).
\item The bounds (\ref{eq:conv-HS}), (\ref{eq:conv-tr0}), (\ref{eq:conv-tr}), (\ref{eq:conv-sc}) deteriorate quite fast in time. The emergence of a double exponential is a consequence of the fact that when we propagate (\ref{eq:sc}) along the solution $\omega_{N,t}$ of the Hartree-Fock equation (\ref{eq:hf}) we get an additional factor which is growing exponentially in time. It is reasonable to expect that in many situations, the exponential growth for the commutators $[e^{ip \cdot x}, \omega_{N,t}]$ and $[\hbar \nabla , \omega_{N,t}]$ is too pessimistic. In these situation, it would be possible to get better time-dependences on the r.h.s. of (\ref{eq:conv-HS}),  
(\ref{eq:conv-tr0}), (\ref{eq:conv-tr}) and (\ref{eq:conv-sc}).
\item Let $\wt{\omega}_{N,t}$ denote the solution of the Hartree equation 
\begin{equation}\label{eq:h} i \hbar \partial_t \wt{\omega}_{N,t} = \left[ -\hbar^2 \Delta + (V * \wt{\rho}_t) , \wt{\omega}_{N,t} \right] \end{equation}
with the initial data $\omega_N$. Under the assumptions of Theorem \ref{thm:main} on the initial density $\omega_N$ and on the interaction potential $V$, we show in Appendix \ref{app:hartree} that the contribution of the exchange term $[X_t, \omega_{N,t}]$ in the Hartree-Fock equation (\ref{eq:hf}) is of smaller order, and that
\[ \tr |\omega_{N,t} - \wt{\omega}_{N,t} | \leq C \exp (c_1 \exp (c_2 |t|)). \]
It follows from this remark that the bounds (\ref{eq:conv-HS}), (\ref{eq:conv-tr0}), (\ref{eq:conv-tr}) and (\ref{eq:conv-sc}) remain true if we replace the solution $\omega_{N,t}$ of the Hartree-Fock equation with the solution $\wt{\omega}_{N,t}$ of the Hartree equation (with the same initial data).
\end{itemize}

\medskip

We can also control the convergence of higher order reduced densities. The $k$-particle reduced density associated with the evolved Fock state $\psi_{N,t}$ defined in (\ref{eq:psiNt}) is defined as the non-negative trace class operator $\gamma^{(k)}_{N,t}$ on $L^2 (\bR^{3k})$ with integral kernel given by
\[ \gamma^{(k)}_{N,t} (x_1,\ldots x_k ; x'_1,\ldots x'_k) = \big\langle \psi_{N,t} , a_{x'_1}^* \dots a_{x'_k}^* a_{x_k} \dots a_{x_1} \psi_{N,t} \big\rangle. \]
The $k$-particle reduced density associated with the evolved quasi-free state with one-particle density $\omega_{N,t}$ (obtained through the solution of the Hartree-Fock equation (\ref{eq:hf})) is given, according to Wick's theorem, by 
\begin{equation}\label{eq:omk} \omega^{(k)}_{N,t} (x_1,\ldots x_k ; x'_1,\ldots x'_k) = \sum_{\pi \in S_k} \sgn(\pi) \prod_{j=1}^k \omega_t (x_{j}; x'_{\pi(j)}). \end{equation} 
Notice that these reduced densities are normalized such that $\tr \, \omega_{N,t}^{(k)} = N!/ (N-k)!$. 
\begin{thm}\label{thm:k}
We use the same notations and assume the same conditions as in Theorem~\ref{thm:main} (the condition $d\Gamma (\omega_N) \xi_N = 0$ is not required here). Let $k \in \bN$ and assume, additionally, that the sequence $\xi_N$ is such that $\langle \xi_N , (\cN+1)^{k} \xi_N \rangle \leq C$. Then there exists constants $D, c_1, c_2 > 0$ (with $c_1$ depending only on $V$ and on the constant on the r.h.s. of (\ref{eq:sc}) and $D,c_2$ depending on $V$, on the constants on the r.h.s. of (\ref{eq:sc}) and on $k$) such that
\begin{equation}\label{eq:k-claim} \left\| \gamma^{(k)}_{N,t} - \omega^{(k)}_{N,t} \right\|_{\text{HS}} \leq D N^{(k-1)/2} \exp (c_2 \exp (c_1 |t|)) \end{equation}
and 
\begin{equation}\label{eq:k-claim-tr} \tr \left| \gamma^{(k)}_{N,t} - \omega^{(k)}_{N,t} \right| \leq D N^{k-\frac{1}{2}}  \, \exp (c_2 \exp (c_1 |t|)). \end{equation}
\end{thm}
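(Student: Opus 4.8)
The plan is to reuse the fluctuation dynamics from the proof of Theorem~\ref{thm:main} and to reduce the $k$-particle estimate to combinatorics together with a higher-moment propagation bound. Recall that one writes $\psi_{N,t} = R_{\nu_{N,t}}\xi_{N,t}$, where $\nu_{N,t}$ is the Bogoliubov transformation of the form (\ref{eq:Nu-gamma}) whose vacuum $R_{\nu_{N,t}}\Omega$ has generalized one-particle density $\mathrm{diag}(\omega_{N,t},1-\overline{\omega_{N,t}})$, and $\xi_{N,t}=\cU_N(t)\xi_N$ is the evolved fluctuation, $\cU_N(t)=R_{\nu_{N,t}}^* e^{-i\cH_N t/\hbar}R_{\nu_N}$. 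The first step is to propagate the bound $\langle\xi_{N,t},(\cN+1)^k\xi_{N,t}\rangle\le D\exp(c_2\exp(c_1|t|))$, upgrading the $k=1$ estimate already used for Theorem~\ref{thm:main}. This is a Gronwall argument: one differentiates $t\mapsto\langle\xi_{N,t},(\cN+1)^k\xi_{N,t}\rangle$, writes the derivative through the generator $\cL_N(t)$ of $\cU_N(t)$, and bounds $|\langle\xi_{N,t},[\cL_N(t),(\cN+1)^k]\xi_{N,t}\rangle|\le g(t)\langle\xi_{N,t},(\cN+1)^k\xi_{N,t}\rangle$, where $g(t)$ collects the operator- and trace-norms of the quantities $v_{N,t}$, $[e^{ip\cdot x},\omega_{N,t}]$, $[\hbar\nabla,\omega_{N,t}]$ entering $\cL_N(t)$, whose exponential-in-time control is exactly what is established in the proof of Theorem~\ref{thm:main}. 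The cubic and quartic pieces of $\cL_N(t)$ require the most care, but their $(\cN+1)^k$-commutators are estimated by the inequalities of the type in Lemma~\ref{lm:bds-2}, together with $\|a(f)\|\le\|f\|$ and $\cN$-bounds for the distributions $a_x$, now applied with $(\cN+1)^k$ in place of $(\cN+1)$.

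Second, fix $t$ and compute, using (\ref{eq:Bog}),
\[
\gamma^{(k)}_{N,t}(X;X')=\langle\xi_{N,t},\,R_{\nu_{N,t}}^* a^*_{x'_1}\cdots a^*_{x'_k}a_{x_k}\cdots a_{x_1}R_{\nu_{N,t}}\,\xi_{N,t}\rangle ,
\]
and substitute $R_{\nu_{N,t}}^* a_x R_{\nu_{N,t}}=a\big((1-\omega_{N,t})(\cdot\,;x)\big)+a^*\big(\overline{v_{N,t}(\cdot\,;x)}\big)$ and the adjoint relation for $a^*_{x'}$. Expanding the product of $2k$ field operators and normal-ordering it, the purely c-number (fully contracted) part, summed over all full contractions, is precisely $\omega^{(k)}_{N,t}(X;X')$: it coincides with $\langle\Omega,R_{\nu_{N,t}}^* a^*_{x'_1}\cdots a_{x_1}R_{\nu_{N,t}}\Omega\rangle$, which by Wick's theorem is the $k$-particle density (\ref{eq:omk}) of the quasi-free state $R_{\nu_{N,t}}\Omega$, and the c-number part of a normal-ordered product is state-independent. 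Hence $\gamma^{(k)}_{N,t}-\omega^{(k)}_{N,t}$ is a finite sum, of cardinality depending only on $k$, of terms of the form
\[
\Bigl(\text{coefficient built from }(1-\omega_{N,t}),\,\overline{v_{N,t}},\,\omega_{N,t},\,\overline{\omega_{N,t}}\Bigr)\cdot\langle\xi_{N,t},\,a^\#(\cdot)\cdots a^\#(\cdot)\,\xi_{N,t}\rangle
\]
with between $2$ and $2k$ surviving field operators.

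Third, estimate the Hilbert--Schmidt and trace norms of each such term, viewed as an operator on $L^2(\bR^{3k})$, using $\|v_{N,t}\|_{\text{HS}}=(\tr\,\omega_{N,t})^{1/2}=N^{1/2}$, $\|1-\omega_{N,t}\|\le1$, $\|\omega_{N,t}\|\le1$, $\tr\,\omega_{N,t}=N$, together with $\|a^\#(f_1)\cdots a^\#(f_{2m})\xi\|\lesssim\|f_1\|\cdots\|f_{2m}\|\,\|(\cN+1)^m\xi\|$ after normal ordering. Counting powers of $N$: the surviving $\cN$-factor contributes at most $\langle\xi_{N,t},(\cN+1)^k\xi_{N,t}\rangle^{1/2}$; the contractions are organised so that each $v_{N,t}$ is paired with a $v_{N,t}^*$, producing $\omega_{N,t}$'s (trace class, trace $N$) and never requiring the infinite trace norm of $v_{N,t}$ alone; and the dominant contribution — one uncontracted pair of field operators, the remaining $k-1$ slots each carrying one $v_{N,t}$ — is of order $N^{(k-1)/2}$ in Hilbert--Schmidt norm and $N^{k-1/2}$ in trace norm. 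Summing the $O_k(1)$ terms and inserting the moment bound from the first step yields (\ref{eq:k-claim}) and (\ref{eq:k-claim-tr}); one checks that $c_1$ only depends on $V$ and the constant in (\ref{eq:sc}), while $D$ and $c_2$ also depend on $k$.

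I expect the main obstacle to be the first step, namely propagating $(\cN+1)^k$ along the fluctuation dynamics: it requires re-deriving the commutator estimates for the cubic and quartic parts of $\cL_N(t)$ with an arbitrary power of $\cN+1$, keeping the $k$-dependence of the constants explicit while preserving the structure of the time-dependent factor $g(t)$ already controlled for $k=1$. The combinatorial bookkeeping in the second and third steps is lengthy but routine once one is careful, in each contraction pattern, to pair $v_{N,t}$ with $v_{N,t}^*$ so as to generate trace-class blocks $\omega_{N,t}$ and to route the remaining $v_{N,t}$'s through Hilbert--Schmidt bounds.
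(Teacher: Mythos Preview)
Your proposal is correct and follows essentially the same route as the paper. The paper conjugates by $R_{\nu_{N,t}}$, expands the $2k$ field operators, applies Wick's theorem to identify the fully contracted part with $\omega^{(k)}_{N,t}$, and bounds the partially contracted terms by testing against a Hilbert--Schmidt (respectively bounded) observable $O$, using $\|\omega_{N,t}\|_{\text{HS}}=N^{1/2}$ and $\tr\,\omega_{N,t}=N$ together with the moment bound $\langle\cU_N(t;0)\xi,(\cN+1)^k\cU_N(t;0)\xi\rangle\le\exp(c_2\exp(c_1|t|))\langle\xi,(\cN+1)^k\xi\rangle$; the only organizational difference is that this last moment bound is already stated and proved for arbitrary $k$ as a separate theorem (Theorem~\ref{thm:grow}, via Proposition~\ref{prop:dt} and Lemma~\ref{lem:errorest}) before the proof of Theorem~\ref{thm:k} begins, so what you flag as the main obstacle is in fact already available and the proof of Theorem~\ref{thm:k} itself reduces to the Wick bookkeeping you describe.
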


{\it Remark.} The $N$-dependence of the bound (\ref{eq:k-claim}) is optimal. On the other hand, the $N$-dependence of (\ref{eq:k-claim-tr}) is not expected to be optimal, the optimal bound for the trace norm of the difference $\gamma^{(k)}_{N,t} - \omega_{N,t}^{(k)}$ should be of the order $N^{k-1}$. 


\medskip

In order to show Theorem \ref{thm:main} and Theorem \ref{thm:k} we are going to compare the fully evolved Fock state $\psi_{N,t} = e^{-i\cH_N t/\hbar} R_{\nu_N} \xi_N$ with the quasi-free state on $\cF$ with reduced one-particle density given by the solution $\omega_{N,t}$ of the Hartree-Fock equation (\ref{eq:hf}). To this end, we write $\omega_{N,t} = \sum_{j=1}^N |f_{j,t} \rangle \langle f_{j,t}|$ for an orthonormal family $\{ f_{j,t} \}_{j=1}^N$ in $L^2 (\bR^3)$. Notice that the functions $f_{j,t}$ can be  determined by solving the system of $N$ coupled non-linear equations
\[ \begin{split} i \hbar \partial_t f_{j,t} (x) = \; &- \hbar^2 \Delta f_{j,t} (x) + \frac{1}{N} \sum_{i=1}^N \int dy  V (x-y) |f_{i,t} (y)|^2 f_{j,t} (x) \\ &- \frac{1}{N} \sum_{i=1}^N \int dy V(x-y) f_{j,t} (y) \overline{f}_{i,t} (y) f_{i,t} (x) \end{split} \]
with the initial data $f_{j,t=0} = f_j$ appearing in (\ref{eq:Nu-gamma}). This system of equations is equivalent to the Hartree-Fock equation (\ref{eq:hf}). We define then $u_{N,t} = 1- \omega_{N,t}$ and $v_{N,t} = \sum_{j=1}^N |\cc f_{j,t} \rangle \langle f_{j,t}|$. Similarly to (\ref{eq:Nu-gamma}), we define the Bogoliubov transformation 
\begin{equation}\label{eq:nuNt} \nu_{N,t} = \left( \begin{array}{ll} u_{N,t} & \cc v_{N,t} \\ v_{N,t} & \cc u_{N,t} \end{array} \right) =  \left(\begin{array}{ll} 1-\omega_{N,t} & \sum_{j=1}^N |f_{j,t} \rangle \langle \cc f_{j,t}| \\ \sum_{j=1}^N |\cc f_{j,t} \rangle \langle f_{j,t}| & 1- \cc \omega_{N,t} \end{array} \right). \end{equation}
The generalized reduced density matrix associated with the quasi-free state $R_{\nu_{N,t}} \Omega$ is given by 
\[ \Gamma_{\nu_{N,t}} = \left( \begin{array}{ll} \omega_{N,t} & 0 \\ 0 & 1- \cc \omega_{N,t} \end{array} \right) \, . \]

\medskip

We expect $\psi_{N,t}$ to be close to the quasi-free state $R_{\nu_{N,t}} \Omega$. To prove that this is indeed the case, we define $\xi_{N,t} \in \cF$ so that
\[ \psi_{N,t} = e^{-i\cH_N t/\hbar} R_{\nu_N} \xi_N = R_{\nu_{N,t}} \xi_{N,t} \]
for every $t \in \bR$. Equivalently, $\xi_{N,t} = \cU_N (t;0) \xi_N$, where we defined the two-parameter group of unitary transformations 
\begin{equation}\label{eq:cU} \cU_N (t;s) = R_{\nu_{N,t}}^* e^{-i\cH_N (t-s)/\hbar} R_{\nu_{N,s}} \end{equation}
for any $t,s \in \bR$. We refer to $\cU_N$ as the fluctuation dynamics; it describes the evolution of particles which are outside the quasi-free state. 

\medskip

As we will show in detail in Section \ref{sec:proof}, the problem of proving the convergence of $\gamma_{N,t}^{(1)}$ towards the solution of the Hartree-Fock equation $\omega_t$ can be reduced to the problem of controlling the expectation of the number of particles operator (and of its powers) in the state $\xi_{N,t}$, or, equivalently, of controlling the growth of the number of particles operator with respect to the fluctuation dynamics $\cU_N$. 

\medskip

In spirit, this approach is similar to the coherent state method developed in \cite{RS} for bosonic systems. In the bosonic case, however, one considers the evolution of approximately coherent initial data. In this case, the fluctuation dynamics is obtained by conjugating the full evolution $\exp (-i\cH_N (t-s))$ with evolved Weyl operators, in contrast to the Bogoliubov transformation appearing in (\ref{eq:cU}). Notice that also in the bosonic case, Bogoliubov transformations can be applied, in addition to Weyl operators, to obtain a more precise description of the evolution. In particular, bosonic Bogoliubov transformations have been used in \cite{BKS}, to describe fluctuations around the condensate, and in \cite{BDS}, to implement the short scale correlation structure produced by the singular potential. In both cases, bosonic Bogoliubov transformations describe corrections to the evolving condensate created by the Weyl operator. In the fermionic case, on the other hand, the Pauli principle excludes the presence of a condensate and the Bogoliubov transformations produce the main term in the approximation of the many body dynamics. 

\section{Bounds on growth of fluctuations}\label{sec:gro}
\setcounter{equation}{0}

In this section we prove bounds for the growth of the expectation of the number of particles operator and of its powers with respect to the fluctuation dynamics $\cU_N (t;s)$. To obtain such estimates, we will make use of the following lemma, where we collect a series of important bounds for operators on the fermionic Fock space.
\begin{lem}\label{lm:bds-2}
For every bounded operator $O$ on $L^2 (\bR^3)$, we have
\[ \| d\Gamma (O) \psi \| \leq \| O \| \, \| \cN \psi \| \]
for every $\psi \in \cF$. If $O$ is a Hilbert-Schmidt operator, we also have the bounds
\begin{equation} \label{eq:HS-bds} \begin{split} \| d\Gamma (O) \psi \| &\leq \| O \|_{\text{HS}} \, \| \cN^{1/2} \psi \|, \\ 
\left\| \int dx dx' \, O(x;x') a_x a_{x'} \psi \right\| &\leq \| O \|_{\text{HS}} \, \| \cN^{1/2} \psi \|, \\
\left\| \int dx dx' \, O(x;x') a^*_x a^*_{x'} \psi \right\| & \leq 2 \| O \|_{\text{HS}} \, \| (\cN+1)^{1/2} \psi \| .
\end{split} \end{equation}
for every $\psi \in \cF$. Finally, if $O$ is a trace class operator, we obtain
\begin{equation}\label{eq:tr-bds} 
\begin{split} \| d\Gamma (O) \| &\leq 2 \| O\|_{\tr}\,, \\ 
\left\| \int dx dx' \, O(x;x') a_x a_{x'} \right\| &\leq 2 \| O \|_{\tr}\,,  \\
\left\| \int dx dx' \,  O(x;x') a_x^* a_{x'}^* \right\| &\leq 2 \| O \|_{\tr}  \, . 
\end{split} 
\end{equation}
Here $\| O \|_{\tr} = \tr \, |O| = \tr \sqrt{O^* O}$ indicates the trace norm of $O$.
\end{lem}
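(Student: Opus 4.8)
The plan is to establish the three families of bounds one operator class at a time, moving from the easiest (bounded) to the subtlest (trace class), and reusing the earlier estimates as building blocks. For the very first bound, $\|d\Gamma(O)\psi\|\le \|O\|\,\|\cN\psi\|$, I would work sector by sector: on the $n$-particle sector $d\Gamma(O)^{(n)}=\sum_{j=1}^n O^{(j)}$, so $\|d\Gamma(O)^{(n)}\psi^{(n)}\|\le n\|O\|\,\|\psi^{(n)}\|$, and summing the squares against $\|\cN\psi\|^2=\sum_n n^2\|\psi^{(n)}\|^2$ gives the claim. (One can also argue via Cauchy--Schwarz on $\langle\psi,d\Gamma(O)^*d\Gamma(O)\psi\rangle$, but the sectorwise estimate is cleanest.)

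For the Hilbert--Schmidt bounds I would expand the operators in terms of $a_x,a_x^*$ and use the CAR together with the fermionic bound $\|a(f)\|,\|a^*(f)\|\le\|f\|_2$ from \eqref{eq:bdaa*}. Concretely, writing $d\Gamma(O)=\int dx\,dy\,O(x;y)a_x^*a_y$, I would fix $x$, set $g_x(\cdot)=O(x;\cdot)$, and note $\int dy\,O(x;y)a_y=a(\overline{g_x})$, so that $d\Gamma(O)\psi=\int dx\,a_x^*\,a(\overline{g_x})\psi$; then
\[
\|d\Gamma(O)\psi\|\le\int dx\,\|a_x^*(\cdots)\|\le\Big(\int dx\,\|a(\overline{g_x})\psi\|^2\Big)^{1/2}\Big(\text{counting factor}\Big)^{1/2},
\]
and the factor $\|a_x^*\phi\|^2$ integrated against $dx$ produces $\langle\phi,\cN\phi\rangle$ after using $\int dx\,a_x^*a_x=\cN$; combined with $\|a(\overline{g_x})\psi\|\le\|g_x\|_2\|\cN^{1/2}\psi\|$ and $\int dx\,\|g_x\|_2^2=\|O\|_{\text{HS}}^2$ one gets the bound — though one has to be a little careful about which $\cN$-power lands where, which is exactly where the $\cN^{1/2}$ on the right (rather than $\cN$) comes from, via a Cauchy--Schwarz that balances one power of $\cN^{1/2}$ on each side. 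The second and third HS bounds, for $\int O(x;x')a_xa_{x'}$ and $\int O(x;x')a_x^*a_{x'}^*$, go the same way; the creation--creation case picks up the extra constant $2$ and the shift $\cN\to\cN+1$ because commuting an annihilation past a creation (to bring things into the form where $\int a_x^*a_x$ appears) costs a $\langle f,g\rangle$ term from the CAR, so one writes $a_x^*a_{x'}^*=-a_{x'}^*a_x^*$ and estimates both orderings.

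Finally, for the trace-class bounds I would diagonalize: if $O$ is trace class write $O=\sum_k\lambda_k|\varphi_k\rangle\langle\psi_k|$ with $\sum_k|\lambda_k|=\|O\|_{\tr}$ (using the singular value / polar decomposition with orthonormal $\{\varphi_k\},\{\psi_k\}$). Then $d\Gamma(O)=\sum_k\lambda_k\,a^*(\varphi_k)a(\psi_k)$, and by \eqref{eq:bdaa*} each $\|a^*(\varphi_k)a(\psi_k)\|\le1$, so $\|d\Gamma(O)\|\le\sum_k|\lambda_k|=\|O\|_{\tr}$; the factor $2$ in the statement is slack (or is needed because the singular-value decomposition of a general, non-self-adjoint trace-class $O$ into rank-ones with orthonormal left and right systems is the Schmidt decomposition, which is fine — the $2$ absorbs any bookkeeping when one instead splits into real/imaginary or self-adjoint parts). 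For $\int O(x;x')a_xa_{x'}$ one similarly gets $\sum_k\lambda_k\,a(\overline{\varphi_k})a(\psi_k)$ (being careful with the antilinearity and the bar), bounded by $\sum_k|\lambda_k|$ since each factor has norm $\le1$; likewise for the creation--creation operator, where again one commutes into a convenient order and the CAR anticommutator contributes a term controlled by $\tr O$, whence the $2$.

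I expect the main obstacle to be purely bookkeeping rather than conceptual: getting the $\cN$-powers to balance correctly in the Hilbert--Schmidt estimates (justifying why $\cN^{1/2}$ suffices on the right and tracking the $\cN\to\cN+1$ shift and the constant $2$ in the creation--creation cases), and making the formal manipulations with the operator-valued distributions $a_x,a_x^*$ rigorous — ideally by first proving everything for finite-rank $O$ (where all sums and integrals are finite) and then passing to the limit in trace, resp. Hilbert--Schmidt, norm using density of finite-rank operators and the bounds just obtained.
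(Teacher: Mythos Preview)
Your overall strategy matches the paper's, but there is one technical slip and one genuine methodological difference worth noting.

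For the Hilbert--Schmidt bound on $d\Gamma(O)$, your sketch puts the creation distribution $a_x^*$ on the outside and then claims that $\int dx\,\|a_x^*\phi\|^2=\langle\phi,\cN\phi\rangle$. This is false: $\|a_x^*\phi\|^2=\langle\phi,a_x a_x^*\phi\rangle$, and $\int dx\,a_x a_x^*$ is not $\cN$ (formally it is infinite). The paper avoids this by keeping the \emph{annihilation} variable as the outer integral: write $\int dx'\,dx\,O(x';x)a_{x'}^\# a_x\psi=\int dx\,a^\#(O(\cdot;x))\,a_x\psi$, bound $\|a^\#(O(\cdot;x))\|\le\|O(\cdot;x)\|_2$ by \eqref{eq:bdaa*}, and then use $\int dx\,\|a_x\psi\|^2=\|\cN^{1/2}\psi\|^2$. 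This handles both $d\Gamma(O)$ and the $a\,a$ term at once with constant~$1$. For the $a^*a^*$ term the paper does \emph{not} commute operators as you suggest, but uses duality: pair against a unit vector $\varphi$, insert $(\cN+1)^{\pm1/2}$, and apply the already-proven $a\,a$ bound to $\varphi$; this is where the shift $\cN\to\cN+1$ and the constant $2$ (via $(\cN+3)^{1/2}\le2(\cN+1)^{1/2}$) come from.

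For the trace-class bounds your singular-value decomposition approach is different from the paper's and actually cleaner: the paper proves the self-adjoint case first via the spectral decomposition and then writes a general $O$ as $\tfrac{O+O^*}{2}+i\tfrac{O-O^*}{2i}$, which is exactly the ``real/imaginary splitting'' you anticipated and is the source of the factor~$2$. Your SVD argument gives constant~$1$ directly for all three operators, so the $2$ is indeed slack.
\qedhere
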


\begin{proof}
For any bounded operator $O$ on $L^2 (\bR^3)$ we have
\[ \| d\Gamma (O) \psi \|^2 = \sum_{n=1}^\infty \sum_{i,j=1}^n \langle \psi^{(n)}, O^{(i)} O^{(j)} \psi^{(n)} \rangle \leq \| O \|^2 \sum_{n=1}^\infty n^2 \| \psi^{(n)} \|^2 = \| O \|^2 \| \cN \psi \|^2 \,. \]
For a Hilbert-Schmidt operator $O$ on $L^2 (\bR^3)$, we have, using (\ref{eq:bdaa*}), 
\begin{equation}\label{eq:hs-bd} \begin{split} \left\| \int dx dx' \, O(x';x) \, a_{x'}^\# a_{x} \psi \right\| &\leq \int dx \, \| a^\# (O (.;x)) a_{x} \psi \| \\ &\leq \int dx\, \| O(.;x) \|_2 \, \| a_{x} \psi \|  \\ &\leq \| O \|_{\text{HS}} \left( \int dx \, \| a_x \psi \|^2 \right)^{1/2} \leq \| O \|_{\text{HS}} \| \cN^{1/2} \psi \|\end{split} \end{equation}
where $a^\#$ is either an annihilation operator $a$ or a creation operator $a^*$. This proves the first two bounds in (\ref{eq:HS-bds}).
The third bound in \eqref{eq:HS-bds} can be reduced to the previous bound as follows:
\begin{align*}
\left\| \int \di x\di y\, O(x;y) a^*_x a^*_y \psi \right\| & = \sup_{\varphi \in \fock,\ \norm{\varphi}=1} \left|\left\langle \varphi, \int \di x\di y\, O(x;y) a^*_x a^*_y \psi \right\rangle\right| \\
& = \sup_{\varphi \in \fock,\ \norm{\varphi}=1 } \left|\left\langle \int \di x\di y\, \overline{O(x;y)} a_x a_y (\Ncal+1)^{-1/2} \varphi,(\Ncal+3)^{1/2} \psi \right\rangle\right|\\
& \leq \sup_{\varphi \in \fock,\ \norm{\varphi}=1} \norm{O}\HS \norm{\Ncal^{1/2}(\Ncal+1)^{-1/2} \varphi} \norm{(\Ncal+3)^{1/2}\psi}\\& \leq \norm{O}\HS \norm{(\Ncal+3)^{1/2}\psi}.
\end{align*}

 Finally, we prove (\ref{eq:tr-bds}). Assume first that $O$ is a self-adjoint trace-class operator. Then we have the spectral decomposition
\[ O = \sum_j \lambda_j |f_j \rangle \langle f_j| \]
for a real sequence $\{ \lambda_j \}$ of eigenvalues with $\sum_j |\lambda_j| = \tr \, |O|$ and an orthonormal family of eigenvectors $f_j \in L^2 (\bR^3)$. We find
\[ \left\| \int dxdx' \, O (x;x') \, a_x^\# a_{x'}^\# \right\| \leq \sum_j |\lambda_j| \left\| \int dx dx' \,  f_j (x') \overline{f}_j (x) \, a_x^\# a_{x'}^\# \right\| = \sum_j |\lambda_j| \left\| a^\# (\wt{f}_j) a^\# (\wt{f}_j) \right\| \]
where $\wt{f}_j$ is either $f_j$ or its complex conjugate $\overline{f}_j$. We conclude from (\ref{eq:bdaa*})
that
\begin{equation}\label{eq:tr-sa} \left\|  \int dxdx' \, O (x;x') a_x^\# a_{x'}^\# \right\| \leq \sum_j |\lambda_j| \| f_j \|^2 = \| O \|_{\text{tr}} \, . \end{equation}
Now, for an arbitrary, not necessarily self-adjoint, trace-class operator $O$, we write
\[ O = \frac{O+O^*}{2} + i \frac{O-O^*}{2i} \, . \]
Therefore, applying (\ref{eq:tr-sa}), we find 
\[ \begin{split} \Big\| \int dx dx' \, & O (x;x') \, a_x^\# a_{x'}^\# \Big\| \\ \leq \; & \left\| \int dx dx'  \, \left(\frac{O+O^*}{2}\right) (x;x') \, a_x^\# a_{x'}^\# \right\| + \left\| \int dx dx' \, \left(\frac{O-O^*}{2i}\right) (x;x') \, a_x^\# a_{x'}^\# \right\| \\ \leq \; & \left\| \frac{O+O^*}{2} \right\|_{\text{tr}} + \left\| \frac{O-O^*}{2i} \right\|_{\text{tr}} \leq 2 \| O \|_{\text{tr}} \, .\qedhere \end{split} \]
\end{proof}

We are now ready to state the main result of this section, which is a bound for the growth of the expectation of $(\cN+1)^k$ with respect to the fluctuation dynamics. 
\begin{thm}\label{thm:grow} 
Assume (\ref{eq:ass-V}) and (\ref{eq:sc}). Let $\cU_N (t;s)$ be the fluctuation dynamics defined in (\ref{eq:cU}) and $k \in \bN$. Then there exist a constant $c_1 > 0$, depending only on $V$, and a constant $c_2 > 0$ depending on $V$ and on $k$ such that 
\begin{equation}\label{eq:grow} \left\langle \xi, \cU_N (t;0)^* (\cN + 1)^k \, \cU_N (t;0) \xi \right\rangle \leq  \exp (c_2 \exp (c_1 |t|)) \, \langle \xi, (\cN + 1)^k \xi \rangle. \end{equation}
\end{thm}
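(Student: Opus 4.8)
The plan is to compute the generator of the fluctuation dynamics $\cU_N(t;s)$ and to run a Gronwall argument for the quantity $\varphi_k(t) := \langle \xi_t, (\cN+1)^k \xi_t \rangle$ where $\xi_t = \cU_N(t;0)\xi$. Differentiating in $t$ gives
\[ i\hbar \,\partial_t \varphi_k(t) = \big\langle \xi_t, \big[ (\cN+1)^k , \cL_N(t) \big] \xi_t \big\rangle, \]
where $\cL_N(t) = (i\hbar)(\partial_t R_{\nu_{N,t}}^*) R_{\nu_{N,t}} + R_{\nu_{N,t}}^* \cH_N R_{\nu_{N,t}}$ is the (time-dependent) generator of $\cU_N$. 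First I would expand this generator. Conjugation of $\cH_N$ by the Bogoliubov transformation $R_{\nu_{N,t}}$ replaces each $a_x$ by a combination $\sum_j (u_{N,t} \text{ and } v_{N,t})$-terms, so $\cL_N(t)$ splits into (a) a number-conserving quadratic piece (the Hartree--Fock Hamiltonian lifted to Fock space plus the term coming from $\partial_t R_{\nu_{N,t}}^*$), which by the very fact that $\omega_{N,t}$ solves \eqref{eq:hf} should cancel, leaving only terms that fail to commute with $\cN$; (b) quartic-in-$a^\#$ terms of various creation/annihilation signatures, of which only those that change particle number (e.g. $a^*a^*a^*a^*$, $a^*a^*a^*a$, and their adjoints) contribute to the commutator with $(\cN+1)^k$; and (c) a constant. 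The key is that the ``dangerous'' contributions all carry either kernels built from $v_{N,t}$, which are Hilbert--Schmidt with $\|v_{N,t}\|_{\mathrm{HS}}^2 = \tr\,\omega_{N,t} = N$, or kernels of the form $\int dz\, V(x-z)\,\omega_{N,t}(\cdot\,;z)\cdots$ whose relevant Schatten norms are controlled by the semiclassical bounds \eqref{eq:sc} propagated to time $t$.

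The heart of the estimate is therefore a commutator bound of the schematic form
\[ \pm \big[ (\cN+1)^k , \cL_N^{\mathrm{offdiag}}(t) \big] \;\leq\; C\, \hbar\, M(t)\,(\cN+1)^k, \]
which I would prove by pulling out creation/annihilation operators one at a time, using the $a^\#$-boundedness \eqref{eq:bdaa*}, the Hilbert--Schmidt bounds \eqref{eq:HS-bds} of Lemma \ref{lm:bds-2} (applied to the kernels appearing in $\cL_N(t)$), Cauchy--Schwarz in the Fock vector, and the elementary operator inequality $(\cN+1)^{k-1/2}\,a^\#\cdots \lesssim (\cN+1)^{k}$. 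The potential enters through $\widehat V$, and the assumption \eqref{eq:ass-V}, i.e. $\int (1+|p|)^2 |\widehat V(p)|\,dp < \infty$, is exactly what is needed to write $V(x-y) = \int \widehat V(p)\, e^{ip\cdot x} e^{-ip\cdot y}\, dp$ and bound the relevant operator norms by $\int |\widehat V(p)|\,(1+|p|)^{\#}\, \tr\big|[e^{ip\cdot x},\omega_{N,t}]\big|\,dp$ and similar quantities, each of which by \eqref{eq:sc} (propagated) is $\lesssim N\hbar \int |\widehat V(p)|(1+|p|)^{\#} dp$; combined with the $1/N$ coupling constant and the $\hbar = N^{-1/3}$ factors this yields the $\hbar$-smallness and, crucially, keeps the bound $N$-independent after dividing by $\hbar$ in the Schr\"odinger equation. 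Collecting everything gives the differential inequality $\partial_t \varphi_k(t) \leq C\, M(t)\, \varphi_k(t)$, and Gronwall produces $\varphi_k(t) \leq \exp\!\big(C \int_0^{|t|} M(s)\,ds\big)\,\varphi_k(0)$.

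The main obstacle, and the reason for the double exponential, is controlling $M(t)$: the commutator bounds require that the propagated density $\omega_{N,t}$ still satisfies the semiclassical bounds \eqref{eq:sc}, i.e. one needs $\tr\big|[e^{ip\cdot x},\omega_{N,t}]\big| \leq C(t)\, N\hbar(1+|p|)$ and $\tr\big|[\hbar\nabla,\omega_{N,t}]\big| \leq C(t)\, N\hbar$ with an explicit, controllable $C(t)$. I would prove this by a separate Gronwall argument at the level of the Hartree--Fock equation \eqref{eq:hf}: differentiating $[e^{ip\cdot x},\omega_{N,t}]$ in time, commuting through the HF generator $-\hbar^2\Delta + V*\rho_t - X_t$, and estimating the resulting commutators (again using \eqref{eq:ass-V} to handle $V*\rho_t$ and $X_t$, and using that the commutator of $e^{ip\cdot x}$ with $-\hbar^2\Delta$ produces a factor $\hbar|p|$ times lower-order pieces), one finds that $C(t)$ grows at worst exponentially, $C(t)\lesssim e^{c_1|t|}$. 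Feeding $M(s)\lesssim e^{c_1|s|}$ into the outer Gronwall then gives the claimed $\exp(c_2\exp(c_1|t|))$. So the structure is: (i) compute $\cL_N(t)$ and isolate its number-non-conserving part; (ii) bound the commutator with $(\cN+1)^k$ in terms of propagated semiclassical quantities; (iii) propagate \eqref{eq:sc} along the HF flow with exponential loss; (iv) assemble and apply Gronwall twice — step (iii) being the technically delicate one, and step (ii) being where the combinatorics of higher powers $k$ and the careful bookkeeping of $\hbar$ and $N$ factors must be done.
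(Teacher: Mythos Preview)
Your proposal is correct and follows essentially the same route as the paper: compute the time derivative of $\langle \cU_N(t;0)\xi,(\cN+1)^k\cU_N(t;0)\xi\rangle$, use the Hartree--Fock equation to cancel the quadratic pieces and isolate the number-non-conserving quartic terms, bound those via the Fourier decomposition of $V$ and the propagated semiclassical commutator estimates $\tr|[e^{ip\cdot x},\omega_{N,t}]|\lesssim N\hbar(1+|p|)e^{c|t|}$ (proved by a separate Gronwall along the HF flow), and close with an outer Gronwall yielding the double exponential. The only organizational difference is that the paper avoids computing the full generator $\cL_N(t)$ by first using $R_t\,\cN\,R_t^* = \cN - 2\,d\Gamma(\omega_{N,t}) + N$, so that the derivative reduces directly to $R_t^*\{d\Gamma(i\hbar\partial_t\omega_{N,t}) - [\cH_N,d\Gamma(\omega_{N,t})]\}R_t$ and only three explicit quartic terms survive (Proposition~\ref{prop:dt}); this shortcut spares you the bookkeeping of the many number-conserving pieces of $\cL_N(t)$ that would drop out of the commutator anyway.
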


\medskip

The first step in the proof of Theorem \ref{thm:grow} is an explicit computation of the time derivative of the expectation of the evolved moments of the number of particles operator appearing on the l.h.s. of (\ref{eq:grow}). Recall from (\ref{eq:cU}) that $\cU_N (t;0) = R^*_{\nu_{N,t}} e^{-i\cH_N t/\hbar} R_{\nu_N}$, where  
\[ \nu_{N,t} = \left( \begin{array}{ll} u_{N,t} & \cc v_{N,t} \\ v_{N,t} & \cc u_{N,t} \end{array} \right) \]
is the Bogoliubov transform defined in (\ref{eq:nuNt}), with $v_{N,t}^* v_{N,t} = \omega_{N,t}$ and $u_{N,t} = 1-\omega_{N,t}$.

In the rest of this section, we will use the shorthand notation $R_t \equiv R_{\nu_{N,t}}$, $u_t \equiv u_{N,t}$, $v_t \equiv v_{N,t}$ and $\cc v_t \equiv \cc v_{N,t}$. Moreover, we define the functions $u_{t,x}, v_{t,x}, \cc v_{t,x}$ by $u_{t,x} (y)= u_{N,t} (y;x)$, $v_{t,x} (y)= v_{N,t}  (y;x)$ and $\cc v_{t,x} (y) = \cc v_{N,t} (y;x)$, where $u_{N,t} (y;x)$, $v_{N,t} (y;x)$ and $\cc v_{N,t} (y;x)$ denote the integral kernels of the operators $u_{N,t}$, $v_{N,t}$ and $\cc v_{N,t}$. Note that, from \eqref{eq:Bog}, the action of the Bogoliubov transformation $R_t$ on the operator valued distributions $a_x, a_x^*$ is given by
\[R^*_t a_x R_t = a(u_{t,x})  + a^* (\cc v_{t,x})\quad \text{and } \quad R^*_t a^*_x R_t = a^* (u_{t,x}) +a (\cc v_{t,x})  \, .  \]
\begin{prp}\label{prop:dt}
Let $\cU_N (t;s)$ be the fluctuation dynamics defined in (\ref{eq:cU}), $\xi \in \cF$, and $k \in \bN$. Then 
\begin{equation}\label{eq:prop-dt} \begin{split} 
i \hbar \frac{d}{dt} \, \Big\langle &\cU_N (t;0) \xi , (\Ncal+1)^k \cU_N (t;0) \xi\Big\rangle \\ = \; -&\frac{4 i}{N} \Im \sum_{j=1}^k \int  dx dy\, V(x-y) \\ &\times \left\{ \left\langle \cU_N (t;0) \xi, (\Ncal+1)^{j-1} a^*(u_{t,x}) a(\cc v_{t,y}) a(u_{t,y}) a(u_{t,x}) (\Ncal+1)^{k-j} \cU_N(t;0)\xi\right\rangle \right. \\ & \quad +\left\langle \cU_N (t;0) \xi, (\Ncal+1)^{j-1}  a(\cc v_{t,x}) a(\cc v_{t,y}) a(u_{t,y}) a(u_{t,x}) (\Ncal+1)^{k-j} \cU_N (t;0)\xi\right\rangle \\ &\quad \left. + \left\langle \cU_N (t;0) \xi, (\Ncal+1)^{j-1} a^*(u_{t,y}) a^*(\cc v_{t,y}) a^*(\cc v_{t,x}) a(\cc v_{t,x}) (\Ncal+1)^{k-j} \cU_N (t;0)\xi\right\rangle \right\}. \end{split} \end{equation}
\end{prp}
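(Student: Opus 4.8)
The plan is to differentiate $t \mapsto \langle \cU_N(t;0)\xi, (\cN+1)^k \cU_N(t;0)\xi\rangle$ by hand. Since $\cU_N(t;0) = R_t^* e^{-i\cH_N t/\hbar} R_{\nu_N}$ with $R_t \equiv R_{\nu_{N,t}}$, it solves $i\hbar\, \partial_t \cU_N(t;0) = \mathcal{L}_N(t)\, \cU_N(t;0)$ with the generator
\[ \mathcal{L}_N(t) = R_t^*\, \cH_N\, R_t + \bigl( i\hbar\, \partial_t R_t^* \bigr) R_t \,, \]
which is (formally) self-adjoint because $\cU_N$ is unitary. Writing $\xi_t = \cU_N(t;0)\xi$, self-adjointness of the generator then gives
\[ i\hbar\, \frac{d}{dt}\, \bigl\langle \xi_t, (\cN+1)^k \xi_t \bigr\rangle = \bigl\langle \xi_t, \bigl[ (\cN+1)^k, \mathcal{L}_N(t) \bigr] \xi_t \bigr\rangle \,, \]
so the proposition reduces to an explicit evaluation of this commutator.

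First I would observe that only the part of $\mathcal{L}_N(t)$ that fails to commute with $\cN$ survives. Using $R_t^* a_x R_t = a(u_{t,x}) + a^*(\cc v_{t,x})$ and $R_t^* a_x^* R_t = a^*(u_{t,x}) + a(\cc v_{t,x})$ and bringing $R_t^* \cH_N R_t$ to normal order, one finds it is the sum of a scalar (the Hartree--Fock energy of $\omega_{N,t}$), a number-preserving quadratic term $d\Gamma(h_{\mathrm{HF},t})$ with $h_{\mathrm{HF},t} = -\hbar^2\Delta + V*\rho_t - X_t$, number-preserving quartic terms, a quadratic pairing part $\int dx\,dy\, \bigl( k_t(x;y)\, a_x^* a_y^* + \hc \bigr)$ (produced by the kinetic part of $\cH_N$ in \eqref{eq:cHN} and by the single-contraction reduction of its interaction part), and genuinely quartic non-number-preserving terms obtained from $\tfrac{1}{2N}\int dx\,dy\, V(x-y)\, a_x^* a_y^* a_y a_x$ by choosing, in each of the four factors, the creation or the annihilation alternative so that the particle number changes by $\pm 2$ or $\pm 4$. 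Likewise $(i\hbar\, \partial_t R_t^*) R_t$, being the generator of the smooth family of Bogoliubov transformations $R_t$, is quadratic, with a number-preserving piece and a pairing piece $\int dx\,dy\, \bigl( \ell_t(x;y)\, a_x^* a_y^* + \hc \bigr)$. All number-preserving pieces commute with $(\cN+1)^k$ and drop out of the commutator.

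The key step, and the one I expect to be the main obstacle, is the cancellation $k_t + \ell_t = 0$ of the two pairing kernels. One computes $k_t$ in terms of $u_{N,t}$, $\cc v_{N,t}$, their gradients and $\omega_{N,t}$, and $\ell_t$ in terms of $\partial_t u_{N,t}$, $\partial_t \cc v_{N,t}$; the identity $k_t + \ell_t = 0$ turns out to be equivalent to $i\hbar\, \partial_t \omega_{N,t} = [ h_{\mathrm{HF},t}, \omega_{N,t} ]$, i.e.\ to the Hartree--Fock equation \eqref{eq:hf} which by construction governs $\omega_{N,t} = v_{N,t}^* v_{N,t}$. The delicate point is the explicit form of the generator $(i\hbar\, \partial_t R_t^*) R_t$ of a time-dependent Bogoliubov transformation; one way to obtain it without computing $R_t$ is to differentiate the relations $R_t^* a_x R_t = a(u_{t,x}) + a^*(\cc v_{t,x})$ in $t$ and read off how the generator acts on $a_x$ and $a_x^*$.

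It then only remains to commute $(\cN+1)^k$ with the surviving quartic terms. For any operator $T$ with $[\cN, T] = d\, T$ one has the telescoping identity
\[ \bigl[ (\cN+1)^k, T \bigr] = d \sum_{j=1}^k (\cN+1)^{j-1}\, T\, (\cN+1)^{k-j} \,, \]
which is exactly the source of the weights $(\cN+1)^{j-1}(\cdots)(\cN+1)^{k-j}$ and of the sum $\sum_{j=1}^k$ on the right-hand side of \eqref{eq:prop-dt}. Inserting the quartic terms, taking the expectation in $\xi_t = \cU_N(t;0)\xi$, using the symmetry of $V$ under $x \leftrightarrow y$ to merge terms that differ by this swap, combining each term with its adjoint through $\langle \psi, T\psi\rangle - \overline{\langle \psi, T\psi\rangle} = 2i\, \Im \langle \psi, T\psi\rangle$, and simplifying with the canonical anticommutation relations, one collects precisely the three families displayed in \eqref{eq:prop-dt}; keeping track of the numerical factors ($\tfrac{1}{2N}$ from \eqref{eq:cHN}, the $2i$ from passing to $\Im$, the values of $d$, and the combinatorial multiplicities) produces the overall constant $-4i/N$. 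This yields \eqref{eq:prop-dt}.
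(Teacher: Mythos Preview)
Your approach is correct but takes a genuinely different route from the paper's. You compute the full generator $\mathcal{L}_N(t)=R_t^*\cH_N R_t+(i\hbar\,\partial_t R_t^*)R_t$ of the fluctuation dynamics, decompose it into number-preserving and pairing/quartic non-preserving parts, and then argue that the quadratic pairing pieces cancel by the Hartree--Fock equation. The paper sidesteps the generator entirely: it first uses the algebraic identity $R_t\,\cN\,R_t^*=\cN-2\,d\Gamma(\omega_{N,t})+N$ to write $\cU_N^*(t;0)\,\cN\,\cU_N(t;0)=R_0^*\cN R_0-2\,R_0^*e^{i\cH_N t/\hbar}d\Gamma(\omega_{N,t})e^{-i\cH_N t/\hbar}R_0+N$, so that the only time dependence sits in $d\Gamma(\omega_{N,t})$ sandwiched between fixed unitaries. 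Differentiating then yields $-2\,\cU_N^*(t;0)R_t^*\{d\Gamma(i\hbar\partial_t\omega_{N,t})-[\cH_N,d\Gamma(\omega_{N,t})]\}R_t\,\cU_N(t;0)$, and the Hartree--Fock equation enters directly through $i\hbar\partial_t\omega_{N,t}$ rather than through $\partial_t R_t$. The cancellation you identify as $k_t+\ell_t=0$ appears in the paper as the exact match between $R_t^*d\Gamma([V*\rho_t-X_t,\omega_{N,t}])R_t$ and the quadratic contraction terms arising when $R_t^*[\Vcal_N,d\Gamma(\omega_{N,t})]R_t$ is normal-ordered; the kinetic parts cancel trivially because $[d\Gamma(-\hbar^2\Delta),d\Gamma(\omega_{N,t})]=d\Gamma([-\hbar^2\Delta,\omega_{N,t}])$. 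The extension from $\cN$ to $(\cN+1)^k$ is the same telescoping/Leibniz step in both proofs.

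What the paper's route buys is that one never has to compute or even make sense of $(i\hbar\,\partial_t R_t^*)R_t$, which you yourself flag as the delicate point; the identity $R_t\cN R_t^*=\cN-2\,d\Gamma(\omega_{N,t})+N$ packages exactly the information needed and nothing more. Your route is more systematic and closer to how one proceeds in the bosonic case (and in later work where the generator is needed anyway), but here it incurs extra bookkeeping for a term that the paper's trick eliminates at the outset.
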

%

\begin{proof}
A simple computation using (\ref{eq:Bog}) shows that
\[ R_t \, \cN R^*_{t} = \cN - 2 d\Gamma (\omega_{N,t}) + N \] 
and therefore that 
\[ \begin{split} \cU_N (t;0)^* \cN \cU_N (t;0)  & = R_0^* \, e^{i\cH_N t/\hbar} (\cN - 2 d\Gamma (\omega_{N,t}) + N) e^{-i \cH_N t/\hbar} R_0 \\ &
= R_0^* \, \cN R_0 -2 R_0^* e^{i\cH_N t/\hbar} d\Gamma (\omega_{N,t}) e^{-i\cH_N t/\hbar} R_0 + N. \end{split}\]
Hence
\[ \begin{split} 
i\hbar \frac{d}{dt} \, \cU_N^* (t;0) \cN \cU_N (t;0) &= -2 \, R_0^* \, e^{i\cH_N t/\hbar} \left\{ d\Gamma (i\hbar \partial_t \omega_{N,t}) - [\cH_N , d\Gamma (\omega_{N,t}) ] \right\} e^{-i \cH_N t/\hbar} R_0  \\ &= -2 \, \cU_N^* (t;0) R_t^* \left\{  d\Gamma (i\hbar \partial_t \omega_{N,t}) - [\cH_N , d\Gamma (\omega_{N,t}) ] \right\} R_t \,\cU_N (t;0). \end{split} \]
On the one hand, from the Hartree-Fock equation (\ref{eq:hf}) for $\omega_{N,t}$ we find
\[ d\Gamma (i\hbar \partial_t \omega_{N,t}) =  d\Gamma \left( \left[ -\hbar^2 \Delta , \omega_{N,t} \right] \right) + d\Gamma \left( \left[ V * \rho_t -X_t ,\omega_{N,t} \right] \right) \]
where we recall the definitions of the normalized density $\rho_t (x) = (1/N) \omega_{N,t} (x;x)$ and of the exchange operator $X_t (x;y) = (1/N) V(x-y) \omega_{N,t} (x;y)$. On the other hand
\[ [\cH_N , d\Gamma (\omega_{N,t})] = [ d\Gamma (-\hbar^2 \Delta) , d\Gamma (\omega_{N,t}) ] + [ \Vcal_N , d\Gamma (\omega_{N,t}) ] \]
with the interaction 
\[ \Vcal_N = \frac{1}{2N} \int dx dy V(x-y) a_x^* a_y^* a_y a_x. \]
We conclude that
\begin{equation}\label{eq:dtUNU}
\begin{split}  i\hbar \frac{d}{dt} \, &\cU_N^* (t;0) \cN \cU_N (t;0) \\ & = -2 \cU_N^* (t;0) R_t^* \left\{ d\Gamma \left( [ V* \rho_t - X_t , \omega_{N,t}] \right) - [ \Vcal_N , d\Gamma (\omega_{N,t})] \right\} R_t \cU_N (t;0). \end{split} \end{equation}
Next, we compute the two terms in the brackets. The first term is given by
\begin{equation} \label{canc4}
\begin{split}
d\Gamma(&[V\ast \rho_{t} - X_{t},\omega_{N,t}]) \\ = &\; \frac{1}{N}\int \di z_1 \di z_2\,a^*_{z_1} a_{z_2}  \int \di x\, V(z_1-x)\big[ \omega_{N,t}(z_1;z_2)\omega_{N,t}(x;x) - \omega_{N,t}(z_1;x)\omega_{N,t}(x;z_2) \big] \\ &- \hc 
\end{split} \end{equation}
Using \eqref{eq:Bog}, we find
\begin{equation}
\begin{split}
R^*_t  & d\Gamma([V\ast \r_{t} - X_{t},\omega_{N,t}]) R_t \\ & = \frac{1}{N} \int \di z_1 \di z_2\, \left(a^*(u_{t,z_1}) + a(\cc v_{t,z_1})\right)\left(a(u_{t,z_2}) + a^*(\cc v_{t,z_2})\right) \\ &\quad\quad \times \int  dx \, V(z_1-x) \big[ \omega_{N,t}(z_1;z_2)\omega_{N,t}(x;x) - \omega_{N,t}(z_1;x)\omega_{N,t}(x;z_2) \big] - \hc
\end{split}
\end{equation}
The integration over $z_2$ can be done explicitly using the property $\int  dz_2 \,  \cc u_t (y_1;z_2) \omega_{N,t}(y_2;z_2) = (\omega_{N,t} u_t)(y_2;y_1) = 0$ and the fact that $\omega_{N,t} \cc v_t  = \cc v_t$. We get
\begin{equation}\begin{split}
\label{eq:hf-term}
R^*_t  d\Gamma &([V\ast \r_{t} - X_{t},\omega_{N,t}]) R_t \\
& = \frac{1}{N} \int \di z_1\, \left(a^*(u_{t,z_1}) + a(\cc v_{t,z_1})\right) \\&\qquad \times\int \di x\, V(z_1-x)\big[ a^*(\cc v_{t,z_1})\omega_{N,t}(x;x) - a^*(\cc v_{t,x})\omega_{N,t}(z_1;x)\big] - \hc\\
& = \frac{1}{N} \int \di x \di y\,V(x-y)\bigg[ \omega_{N,t}(x;x) a^*(u_{t,y}) a^*(\cc v_{t,y}) - \omega_{N,t}(y;x) a^*(u_{t,y}) a^*(\cc v_{t,x}) \bigg] - \hc
\end{split} \end{equation}
where in the last step the contributions containing $a(\cc v_{t,z_1})$ are cancelled by their hermitian conjugates. 

\medskip

We now consider the second contribution in the brackets on the r.h.s. of (\ref{eq:dtUNU}). Using the canonical anticommutation relations,  we obtain
\[[\Vcal_N,\di\Gamma(\omega_{N,t})] = \frac{1}{N} \int \di x\di y\di z\, V(x-y) \omega_{N,t}(z,y) a^*_z a^*_x a_y a_x - \hc\]
Conjugating with the Bogoliubov transformation $R_t$, we find
\begin{align*}
 R^*_t &[\Vcal_N,\di\Gamma(\omega_{N,t})] R_t\\
 & = \frac{1}{N}\int \di x \di y \di z\,V(x-y) \omega_{N,t}(z;y) \\ & \qquad \times \left(a^*(u_{t,z})+a(\cc v_{t,z})\right)\left(a^*(u_{t,x})+a(\cc v_{t,x})\right)
 \left(a(u_{t,y})+a^*(\cc v_{t,y})\right)\left(a(u_{t,x})+a^*(\cc v_{t,x})\right) \\ &\quad -\hc
\end{align*}
Integrating over $z$, using again $\omega_{N,t} u_t = 0$ and $\omega_{N,t} \cc v_t = \cc v_t$, we find
\[ \begin{split}
R^*_t &[\Vcal_N,\di\Gamma(\omega_{N,t})] R_t\\
 & = \frac{1}{N}\int \di x \di y \,V(x-y)a(\cc v_{t,y})\left(a^*(u_{t,x})+a(\cc v_{t,x})\right)\left(a(u_{t,y})+a^*(\cc v_{t,y})\right)\left(a(u_{t,x})+a^*(\cc v_{t,x})\right) \\ &\qquad -\hc
 \end{split} \]
 Since $\scal{\cc v_{t,y}}{u_{t,x}}=0$ the operators $a(\cc v_{t,y})$ and $a^*(u_{t,x})$ anticommute. Taking into account the fact that many contributions cancel after subtracting the hermitian conjugate, we find
\[ \begin{split} 
R^*_t [\Vcal_N, &\di\Gamma(\omega_{N,t})] R_t \\  
= &\; -\frac{1}{N}\int \di x \di y \,V(x-y) \,  \bigg[a^*(u_{t,x})a(\cc v_{t,y}) a(u_{t,y}) a(u_{t,x}) \\ 
& \hspace{1cm} +a(\cc v_{t,x}) a(\cc v_{t,y}) a(u_{t,y}) a(u_{t,x})
 -a(\cc v_{t,x}) a(\cc v_{t,y}) a^*(\cc v_{t,x}) a(u_{t,y}) \bigg] -\hc
\end{split} \]
Normal ordering the last term in the brackets using $\scal{\cc v_{t,y}}{\cc v_{t,x}} = \omega_{N,t}(x;y)$, we conclude that 
\begin{equation}\label{quantumterm} \begin{split}
R^*_t &[\Vcal_N,\di\Gamma(\omega_{N,t})] R_t\\
   = \; & -\frac{1}{N}\int \di x \di y \,V(x-y) \, \bigg[a^*(u_{t,x})a(\cc v_{t,y}) a(u_{t,y}) a(u_{t,x})  \\& \hspace{2cm} +a(\cc v_{t,x}) a(\cc v_{t,y}) a(u_{t,y}) a(u_{t,x})
 +a^*(u_{t,y}) a^*(\cc v_{t,y})a^*(\cc v_{t,x}) a(\cc v_{t,x})  \bigg]  -\hc\\
&+\frac{1}{N}\int \di x \di y \,V(x-y)\bigg[\omega_{N,t}(x;x) a^*(u_{t,y})a^*(\cc v_{t,y}) -\omega_{N,t}(y;x)a^*(u_{t,y})a^*(\cc v_{t,x})  \bigg] -\hc
\end{split}\end{equation}
Combining (\ref{eq:hf-term}) with (\ref{quantumterm}), we find 
\[ \begin{split} R_t^* &\left\{ d\Gamma \left( [V* \rho_t - X_t , \omega_{N,t}] \right) - [ \Vcal_N, d\Gamma (\omega_{N,t})] \right\} R_t \\ = \; &- \frac{1}{N}\int \di x \di y \,V(x-y) \, \bigg[ a^*(u_{t,x})a(\cc v_{t,y}) a(u_{t,y}) a(u_{t,x})
\\& \hspace{1cm}
+a(\cc v_{t,x}) a(\cc v_{t,y}) a(u_{t,y}) a(u_{t,x})
+a^*(u_{t,y}) a^*(\cc v_{t,y})a^*(\cc v_{t,x}) a(\cc v_{t,x})  \bigg] -\hc
\end{split} \]
{F}rom (\ref{eq:dtUNU}), we obtain
\[ \begin{split}
i\hbar \frac{d}{dt} \, &\cU_N^* (t;0) \cN \cU_N (t;0) \\ =\; & -\frac{4i}{N} \, \im \int dx dy \, V(x-y) \, \cU_N^* (t;0) \bigg[ a^*(u_{t,x})a(\cc v_{t,y}) a(u_{t,y}) a(u_{t,x})
\\ &\hspace{1cm}
+a(\cc v_{t,x}) a(\cc v_{t,y}) a(u_{t,y}) a(u_{t,x})+a^*(u_{t,y}) a^*(\cc v_{t,y})a^*(\cc v_{t,x}) a(\cc v_{t,x})  \bigg] \cU_N (t;0).
\end{split} \]
Eq. (\ref{eq:prop-dt}) now follows from the observation that 
\[ \begin{split} 
i\hbar & \frac{d}{dt}  \Big\langle \xi, \cU_N^* (t;0) (\cN+1)^k \cU_N (t;0) \xi \Big\rangle \\ & = \sum_{j=1}^k  \Big\langle \xi,  \cU_N^* (t;0) (\cN+1)^{j-1}  \\ &\hspace{1cm} \times \cU_N (t;0) \left[ i\hbar \frac{d}{dt} \cU_N^* (t;0) \cN \cU_N (t;0) \right] \cU_N^* (t;0) (\cN+1)^{k-j} \cU_N (t;0) \xi \Big\rangle. \qedhere
\end{split} \]
\end{proof}

\medskip

Next, we have to bound the three terms on the r.h.s. of (\ref{eq:prop-dt}) by the expectation of $(\cN+1)^k$ in the state $\cU_N (t;0) \xi$. A key ingredient to obtain such bounds is an estimate for the trace norm of the commutators $[e^{ip \cdot x}, \omega_{N,t}]$. For $t=0$ such an estimate was assumed in (\ref{eq:sc}). In the next proposition, whose proof is deferred to Section \ref{sec:sc}, we show that the bound can be propagated to all $t \in \bR$.
\begin{prp}
\label{lem:hbargain} 
Let $V \in L^1 (\bR^3)$ such that \[ \int dp \, (1+|p|^2) \, |\widehat{V} (p)| < \infty \, . \] Let $\omega_N$ be a non-negative trace class operator on $L^2 (\bR^3)$, with $\tr \omega_N = N$, $\|\o_{N}\|\leq 1$ and such that
\be
\begin{split}
\sup_{p \in \bR^3} \frac{1}{1+|p|} \tr |[\omega_N,e^{ip\cdot x}]| &\leq CN\hbar \\  \tr |[\omega_N,\hbar\nabla]| &\leq CN\hbar\;.\label{h1}
\end{split}
\ee
for all $p \in \bR^3$. Let $\omega_{N,t}$ be the solution of the Hartree-Fock equation (\ref{eq:hf}) with initial data $\omega_N$. Then, there exist constants $K,c>0$ only depending on the potential $V$ such that 
\be
\begin{split}
\sup_{p\in \bR^3} \frac{1}{1+|p|} \tr |[\omega_{N,t},e^{ip \cdot x}]| &\leq K N\hbar \, \exp (c|t|) \\ \tr|[\omega_{N,t},\hbar\nabla]| &\leq K N\hbar \, \exp (c|t|) 
\label{h1b}
\end{split}
\ee
for all $p \in \bR^3$ and $t \in \bR$. 
\end{prp}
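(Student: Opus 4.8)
The plan is to propagate the two estimates simultaneously by a coupled Grönwall argument for the quantities
\[ A(t):=\sup_{p\in\bR^3}\frac{1}{1+|p|}\,\tr\,\big|[e^{ip\cdot x},\omega_{N,t}]\big|,\qquad B(t):=\tr\,\big|[\hbar\nabla,\omega_{N,t}]\big|. \]
The Hartree--Fock flow \eqref{eq:hf} is implemented by a one-particle unitary: if $\mathcal{W}(t;s)$ denotes the two-parameter family with $i\hbar\partial_t\mathcal{W}(t;s)=h(t)\mathcal{W}(t;s)$ and $\mathcal{W}(s;s)=1$, where $h(t)=-\hbar^2\Delta+(V*\rho_t)-X_t$, then $\omega_{N,t}=\mathcal{W}(t;0)\,\omega_N\,\mathcal{W}(t;0)^*$, so $\|\omega_{N,t}\|\le1$, $\tr\,\omega_{N,t}=N$ and $\|\rho_t\|_1=1$ for all $t$. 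The hypothesis $\int(1+|p|^2)|\widehat V(p)|\,dp<\infty$ implies $V\in L^\infty$, hence $\|V*\rho_t\|_\infty\le\|V\|_\infty$; moreover, writing $V(x-y)=\int\widehat V(p)\,e^{ip\cdot x}e^{-ip\cdot y}\,dp$ shows that the Schur multiplier with kernel $V(x-y)$ is bounded on the trace class with norm $\le\|\widehat V\|_1$, so in particular $\|X_t\|_{\tr}\le\|\widehat V\|_1$. Since $B\mapsto[h(t),B]$ generates a trace-norm isometry, Duhamel applied to the equation for $[A,\omega_{N,t}]$ (with a fixed operator $A$) gives $\tr\,|[A,\omega_{N,t}]|\le\tr\,|[A,\omega_N]|+\hbar^{-1}\int_0^t\tr\,|[[A,h(s)],\omega_{N,s}]|\,ds$; I apply this with $A=e^{ip\cdot x}$ and $A=\hbar\nabla$.

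For $B(t)$ the kinetic term is harmless, since $[\hbar\nabla,-\hbar^2\Delta]=0$. One has $[\hbar\nabla,V*\rho_s]=\hbar\,M_{\nabla(V*\rho_s)}$, and expanding $\nabla(V*\rho_s)$ in Fourier modes,
\[ \tr\,\big|[M_{\nabla(V*\rho_s)},\omega_{N,s}]\big|\le\int|p|\,|\widehat V(p)|\,|\widehat\rho_s(p)|\,\tr\,\big|[e^{ip\cdot x},\omega_{N,s}]\big|\,dp\le C_V\,A(s), \]
with $C_V=\int(|p|+|p|^2)|\widehat V(p)|\,dp<\infty$ (using $\|\widehat\rho_s\|_\infty\le\|\rho_s\|_1=1$). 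The exchange contribution is of lower order: because $(\nabla_x+\nabla_y)V(x-y)=0$, the kernel of $[\hbar\nabla,X_s]$ equals $N^{-1}V(x-y)\,[\hbar\nabla,\omega_{N,s}](x;y)$, so $\|[\hbar\nabla,X_s]\|_{\tr}\le N^{-1}\|\widehat V\|_1 B(s)$ and $\hbar^{-1}\,\tr\,|[[\hbar\nabla,X_s],\omega_{N,s}]|\lesssim N^{-2/3}B(s)$. Hence $B(t)\le B(0)+C_V\int_0^t A(s)\,ds+(\text{lower order})$.

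The main obstacle is the equation for $A(t)$, where the kinetic term is problematic: naively $[[e^{ip\cdot x},-\hbar^2\Delta],\omega_{N,t}]$ contains the piece $\hbar\,[e^{ip\cdot x}(p\cdot\hbar\nabla),\omega_{N,t}]$, with the unbounded operator $\hbar\nabla$ on the wrong side of the commutator, which resists operator-norm estimates. The device is to conjugate: put $\Omega^p_t:=e^{-ip\cdot x}\,\omega_{N,t}\,e^{ip\cdot x}$, so that $[e^{ip\cdot x},\omega_{N,t}]=e^{ip\cdot x}(\omega_{N,t}-\Omega^p_t)$ and $\tr\,|[e^{ip\cdot x},\omega_{N,t}]|=\tr\,|\omega_{N,t}-\Omega^p_t|$. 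Then $\Omega^p_t$ solves $i\hbar\partial_t\Omega^p_t=[e^{-ip\cdot x}h(t)e^{ip\cdot x},\Omega^p_t]$, and $e^{-ip\cdot x}h(t)e^{ip\cdot x}=h(t)-2i\hbar^2\,p\cdot\nabla+\hbar^2|p|^2+R^p_t$, where \emph{the multiplication operator $V*\rho_t$ is unchanged} and $R^p_t:=X_t-e^{-ip\cdot x}X_te^{ip\cdot x}$ satisfies $\|R^p_t\|_{\tr}\le2\|X_t\|_{\tr}\le2\|\widehat V\|_1$. Subtracting the equations for $\omega_{N,t}$ and $\Omega^p_t$, the shared generator $[h(t),\cdot]$ is trace-norm isometric, the scalar $\hbar^2|p|^2$ drops out of the commutator, and Duhamel leaves
\[ \tr\,|\omega_{N,t}-\Omega^p_t|\le\tr\,\big|[e^{ip\cdot x},\omega_N]\big|+\hbar^{-1}\!\!\int_0^t\!\Big(2\hbar^2\,\tr\,\big|[p\cdot\nabla,\Omega^p_s]\big|+\tr\,\big|[R^p_s,\Omega^p_s]\big|\Big)ds. \]
Since conjugation by $e^{\pm ip\cdot x}$ shifts $\hbar\nabla$ only by the scalar $-i\hbar|p|^2$, one gets $\tr\,|[p\cdot\hbar\nabla,\Omega^p_s]|=\tr\,|[p\cdot\hbar\nabla,\omega_{N,s}]|\lesssim|p|\,B(s)$, so the drift contributes $\lesssim|p|\int_0^t B(s)\,ds$; the $R^p_s$-term is of size $\lesssim\hbar^{-1}\|\widehat V\|_1$, which is $o(N\hbar)$ on compact time intervals (one can do better by extracting a factor $(x-y)$ from $e^{-ip\cdot x}e^{ip\cdot y}-1$). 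Dividing by $1+|p|$ and taking the supremum, $A(t)\lesssim A(0)+\int_0^t B(s)\,ds+(\text{lower order})$.

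Combining the two inequalities, $f(t):=A(t)+B(t)$ obeys $f(t)\le f(0)+c\int_0^t f(s)\,ds+\varepsilon(t)$ for a constant $c$ depending only on $V$ (after absorbing the $N^{-2/3}B(s)$ exchange term into it), with $\varepsilon(t)=O\big(N^{1/3}(1+|t|)\big)$; by \eqref{h1}, $f(0)=A(0)+B(0)\le2CN\hbar$. Grönwall then yields $f(t)\le KN\hbar\,e^{c|t|}$, which is exactly \eqref{h1b}. I expect the only genuinely delicate points to be the conjugation trick above — without which the kinetic term cannot be controlled — and the bookkeeping showing that every exchange contribution $[X_t,\cdot]$ is of lower order, which rests on the translation invariance of the kernel $V(x-y)$ (killed, resp. contracted, by $\nabla_x+\nabla_y$, resp. by $e^{-ip\cdot x}e^{ip\cdot y}-1$) together with the Schur-multiplier bound $\|X_t\|_{\tr}\le\|\widehat V\|_1$.
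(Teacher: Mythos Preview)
Your argument is correct and follows the same strategy as the paper: a coupled Gronwall inequality for $A(t)$ and $B(t)$, obtained via Duhamel after absorbing the kinetic contribution into a trace-norm-preserving flow. The paper realises this absorption slightly differently, writing $i\hbar\partial_t[e^{ip\cdot x},\omega_{N,t}]=\mathsf{A}(t)[e^{ip\cdot x},\omega_{N,t}]-[e^{ip\cdot x},\omega_{N,t}]\mathsf{B}(t)+(\text{source})$ with self-adjoint $\mathsf{A}(t),\mathsf{B}(t)=h_{\text{HF}}(t)\pm i\hbar^2 p\cdot\nabla$ and conjugating by the two unitary groups they generate; your conjugation $\Omega^p_t=e^{-ip\cdot x}\omega_{N,t}e^{ip\cdot x}$ is an equivalent device (note the minor slip: the shift of $\hbar\nabla$ under conjugation is $-i\hbar p$, not $-i\hbar|p|^2$, but the conclusion $\tr|[p\cdot\hbar\nabla,\Omega^p_s]|=\tr|[p\cdot\hbar\nabla,\omega_{N,s}]|$ is unaffected). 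One further difference: the paper bounds the exchange contribution to the $A$-inequality sharply, as $\frac{C}{N\hbar}\int_0^t(1+|p|)A(s)\,ds$, via the identity $X_t=N^{-1}\int\widehat V(q)\,e^{iq\cdot x}\omega_{N,t}e^{-iq\cdot x}\,dq$ and $[e^{iq\cdot x}\omega_{N,t}e^{-iq\cdot x},e^{ip\cdot x}]=e^{iq\cdot x}[\omega_{N,t},e^{ip\cdot x}]e^{-iq\cdot x}$, thereby avoiding your additive $O(N^{1/3}|t|)$ error term; in your set-up this is exactly the observation $R^p_s=N^{-1}\int\widehat V(q)\,e^{iq\cdot x}(\omega_{N,s}-\Omega^p_s)e^{-iq\cdot x}\,dq$, giving $\|R^p_s\|_{\tr}\le N^{-1}\|\widehat V\|_1\,\tr|[e^{ip\cdot x},\omega_{N,s}]|$, which is what your parenthetical remark is gesturing at. Your cruder bound already suffices since $N^{1/3}\le N\hbar$, but the sharper version yields a clean closed Gronwall system.
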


\medskip

We are now ready to estimate the three terms appearing on the r.h.s. of (\ref{eq:prop-dt}).  
\begin{lem}
\label{lem:errorest} Under the assumptions (\ref{eq:ass-V}) und (\ref{eq:sc}) of Theorem \ref{thm:main}, there exists a constant $c_1 > 0$ depending on $V$ and a constant $C>0$ depending on $V$ and on $k \in \bN$, such that 
\begin{equation}\label{eq:erroest}
\begin{split}
\Big| &\frac{1}{N} \int dx dy\, V (x-y) \, \Big\langle \cU_N (t;0) \xi , (\cN+1)^{j-1} \, \bigg\{ a^* (u_{t,x}) a (\cc v_{t,y}) a (u_{t,y}) a(u_{t,x}) \\ &+ a (\cc v_{t,x}) a (\cc v_{t,y}) a (u_{t,y}) a(u_{t,x}) +  a^*(u_{t,y}) a^*(\cc v_{t,y}) a^*(\cc v_{t,x}) a(\cc v_{t,x})  \bigg\} 
(\cN+1)^{k-j}  \cU_N (t;0) \xi \Big\rangle \Big| \\ &\hspace{5cm} \leq C \hbar \exp (c_1 |t|)  \, \big\langle \cU_N (t;0) \xi,  (\cN+1)^{k} \cU_N (t;0) \xi \big\rangle
\end{split} 
\end{equation}
for all $j=1, \dots , k$ and $t \in \bR$.
\end{lem}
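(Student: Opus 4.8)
The plan is to bound each of the three quartic expectations in \eqref{eq:erroest} separately, writing $\Phi\equiv\cU_N(t;0)\xi$ throughout, and in every case to exhibit the integrand as (essentially) a product of a \emph{small} operator — whose trace or Hilbert--Schmidt norm is of order $N\hbar\,\exp(c|t|)$, extracted from the interaction by a commutator identity — and a \emph{large} operator controlled by a bounded power of $\cN$; the prefactor $1/N$ then produces the gain $\hbar$. As a preliminary reduction one symmetrises the number-operator weights: in each term the relevant quartic $Q$ shifts the particle number by a fixed amount $m\in\{-4,-2,+2\}$, so commuting the factors $(\cN+1)^{j-1}$, $(\cN+1)^{k-j}$ through $Q$ (using $f(\cN)\,a^\#\cdots a^\#=a^\#\cdots a^\#\,f(\cN-m)$) one may write
\[
(\cN+1)^{j-1}\,Q\,(\cN+1)^{k-j}=(\cN+1)^{\frac{k-1}{2}}\,g(\cN)\,Q\,(\cN+1)^{\frac{k-1}{2}},
\]
where $g(\cN)=(\cN+1)^{a}(\cN+c)^{-a}$ for an integer $c$ and some $|a|\le\frac{k-1}{2}$; since $\cN\ge 0$ this is a bounded function, $\|g(\cN)\|\le 3^{(k-1)/2}=:C_k$. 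Hence it suffices to estimate $N^{-1}\int\di x\di y\,V(x-y)\,\langle\Psi,g(\cN)\,Q\,\Psi\rangle$ with $\Psi=(\cN+1)^{(k-1)/2}\Phi$, the target being $C\hbar\exp(c_1|t|)\,\|(\cN+1)^{1/2}\Psi\|^{2}=C\hbar\exp(c_1|t|)\,\langle\Phi,(\cN+1)^{k}\Phi\rangle$.

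The commutator gain comes from the pairs localised at a single point. Integrating a factor such as $a(\cc v_{t,y})a(u_{t,y})$ or $a^*(u_{t,y})a^*(\cc v_{t,y})$ against $V(x-y)$ produces an operator $\int\di z\di z'\,K_x(z;z')\,a_z^\# a_{z'}^\#$ whose kernel equals $v_{N,t}\,V(x-\cdot)\,u_{N,t}$, up to transposition and complex conjugation, where $V(x-\cdot)$ is multiplication by $z\mapsto V(x-z)$. Since $v_{N,t}\omega_{N,t}=v_{N,t}$ and $\omega_{N,t}u_{N,t}=0$,
\[
v_{N,t}\,V(x-\cdot)\,u_{N,t}=v_{N,t}\,[\omega_{N,t},V(x-\cdot)]\,u_{N,t}.
\]
Writing $V(x-\cdot)=\int\di p\,\widehat V(p)\,e^{ip\cdot x}e^{-ip\cdot}$, using $\|v_{N,t}\|,\|u_{N,t}\|\le 1$ and Proposition~\ref{lem:hbargain}, one gets, uniformly in $x$,
\[
\|K_x\|_{\tr}\le\|[\omega_{N,t},V(x-\cdot)]\|_{\tr}\le K N\hbar\,e^{c|t|}\!\int\!\di p\,(1+|p|)\,|\widehat V(p)|,
\]
and, since $\|[\omega_{N,t},e^{ip\cdot x}]\|\le 2$, also $\|K_x\|_{\textrm{HS}}\le (2K)^{1/2}(N\hbar)^{1/2}e^{c|t|/2}\int\di p\,(1+|p|)^{1/2}|\widehat V(p)|$; both integrals are finite by \eqref{eq:ass-V}.

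For the remaining assembly I first split $u_{t,x}=\delta_x-\omega_{t,x}$, so that $a(u_{t,x})=a_x-a(\omega_{t,x})$ involves only genuine $L^2$ vectors. In the first term of \eqref{eq:erroest} one moves $a^*(u_{t,x})$ to the left, recognises $\int\di y\,V(x-y)\,a(\cc v_{t,y})a(u_{t,y})$ as a trace-class operator of operator norm $\le 2\|K_x\|_{\tr}$ by Lemma~\ref{lm:bds-2}, and closes the estimate by Cauchy--Schwarz in $x$ using $\int\di x\,a_x^*a_x=\cN$ and $\int\di x\,|\omega_{t,x}\rangle\langle\omega_{t,x}|=\omega_{N,t}^2\le\cN$, so that the $\cN$-dependence collapses to $\|(\cN+1)^{1/2}\Psi\|^2$; this gives $N^{-1}\cdot N\hbar=\hbar$. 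The third term is handled similarly, but one first Fourier-expands $V$: then $\int\di x\,e^{ip\cdot x}a^*(\cc v_{t,x})a(\cc v_{t,x})=d\Gamma(M_p)$ with $M_p$ a transposition of $\omega_{N,t}e^{\pm ip\cdot}\omega_{N,t}$, so $\|M_p\|\le1$; the gain comes from the remaining factor $\int\di y\,e^{-ip\cdot y}a^*(u_{t,y})a^*(\cc v_{t,y})$ via the previous step, and the extra power $(\cN+1)^{1/2}$ is distributed symmetrically by means of the form bound $\|(\cN+1)^{-1/2}d\Gamma(M_p)(\cN+1)^{-1/2}\|\le\|M_p\|$. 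The second term contains \emph{two} localised pairs, one at $x$ and one at $y$; after reordering it reads $a(\cc v_{t,x})a(u_{t,x})a(\cc v_{t,y})a(u_{t,y})$, and one must use the Hilbert--Schmidt form of the commutator bound for each of them — a crude trace-norm bound on both would only give $N^{-1}(N\hbar)^2=\hbar^{-1}$ — together with the Hilbert--Schmidt estimates of Lemma~\ref{lm:bds-2} ($\|\int K\,aa\,\psi\|\le\|K\|_{\textrm{HS}}\|\cN^{1/2}\psi\|$, $\|\int K\,a^*a^*\,\psi\|\le2\|K\|_{\textrm{HS}}\|(\cN+1)^{1/2}\psi\|$); this yields $N^{-1}(N\hbar)^{1/2}(N\hbar)^{1/2}\|(\cN+1)^{1/2}\Psi\|^2=\hbar\,\langle\Phi,(\cN+1)^k\Phi\rangle$, up to the time factor $\exp(c_1|t|)$ inherited from Proposition~\ref{lem:hbargain} and constants depending on $V$ and $k$.

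The delicate point is the $\cN$-bookkeeping: one needs exactly $(\cN+1)^k$ on the right-hand side, whereas a direct Cauchy--Schwarz splits the weights as $\|(\cN+1)^{j-\frac12}\Phi\|\,\|(\cN+1)^{k-j+\frac12}\Phi\|$, whose geometric mean need not be $\lesssim\langle(\cN+1)^k\rangle$. This is cured by the symmetrising conjugation of the first step (legitimate precisely because $\cN\ge0$ keeps $g(\cN)$ bounded) combined with the $(\cN+1)$-form-boundedness of the ``large'' operators; the only other genuine subtlety is that the second term really needs the Hilbert--Schmidt, rather than the trace-class, version of the commutator bound.
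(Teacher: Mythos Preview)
Your proof is correct and follows essentially the same strategy as the paper: each quartic is split into a ``small'' pair (whose kernel is $v_t\,e^{\pm ip\cdot x}\,u_t=v_t[e^{\pm ip\cdot x},\omega_{N,t}]u_t$, controlled in trace or Hilbert--Schmidt norm via Proposition~\ref{lem:hbargain}) and a ``large'' pair (a $d\Gamma$ of a contraction, controlled by~$\cN$), with the crucial observation that the four-annihilation term requires the Hilbert--Schmidt bound on \emph{both} pairs. The organisational differences are minor: you symmetrise the $(\cN+1)$--weights once at the outset, whereas the paper inserts $(\cN+c)^{\pm(\cdot)}$ separately in each term; and for the first term you bound the $y$-integral $B_x=\int\di y\,V(x-y)\,a(\cc v_{t,y})a(u_{t,y})$ in operator norm uniformly in $x$ (via $\|K_x\|_{\tr}$), while the paper Fourier-expands $V$ first and recognises $\int\di x\,a^*(u_{t,x})e^{ip\cdot x}a(u_{t,x})=d\Gamma(u_te^{ip\cdot x}u_t)$. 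Two small technical points: (i) the split $u_{t,x}=\delta_x-\omega_{t,x}$ is unnecessary, since $\int\di x\,\|a(u_{t,x})\psi\|^2=\langle\psi,d\Gamma(u_t)\psi\rangle\le\|\cN^{1/2}\psi\|^2$ directly; (ii) your claim that $g(\cN)=(\cN+1)^a(\cN+c)^{-a}$ is always bounded by $3^{(k-1)/2}$ needs $c\ge1$, which for the third term (which \emph{creates} two particles) fails if you push the weights in the wrong direction --- this is exactly why the paper shifts by $3$ or $5$ rather than $1$, and is fixed by always commuting towards larger shifts.
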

\begin{proof}
We estimate the contributions arising from the three terms in the parenthesis separately. Let us start with the first term,
\[ \begin{split} \text{I} := \; & \Big| \frac{1}{N} \int dx dy\, V (x-y)\, \Big\langle \cU_N (t;0) \xi , (\cN+1)^{j-1} \\ &\hspace{2cm} \times a^* (u_{t,x}) a (\cc v_{t,y}) a (u_{t,y}) a(u_{t,x}) (\cN+1)^{k-j}  \cU_N (t;0) \xi \Big\rangle \Big|\;. \end{split} \]
Inserting $1 = (\cN + 3)^{k/2 -j}(\cN + 3)^{-k/2 + j}$, pulling $(\cN + 3)^{-k/2 + j}$ through the fermonic operators to the right, and using the Cauchy-Schwarz  inequality, we get:
\begin{equation}\label{eq:II} \begin{split}
\text{I} \leq\; & \frac{1}{N} \int dp |\widehat{V} (p)| \left\| \int dx\, a^* (u_{t,x}) e^{ip \cdot x} a (u_{t,x}) (\cN+3)^{k/2-j} (\cN+1)^{j-1} \cU_N (t;0) \xi \right\| \\ &\quad \times  \left\| \int dy \, a (\cc v_{t,y}) e^{-ip \cdot y} a (u_{t,y}) (\cN+1)^{k/2} \cU_N (t;0) \xi \right\|\;.  \end{split} \end{equation}
The first norm can be bounded using that, for any $\phi \in \fock$:
\begin{equation}\label{eq:IIbd1}  \begin{split} \left\|\int dx\, a^* (u_{t,x}) e^{ip \cdot x} a(u_{t,x})\phi\right\| &= \left\|\int dx dr_1 dr_2\, u_t (r_1,x) e^{ip\cdot x} u_t (x,r_2) a^*_{r_1} a_{r_2}\phi\right\| \\ &= \left\|d\Gamma (u_t e^{ip \cdot x} u_t)\phi\right\| \\ &\leq \left\| \cN \phi \right\| \end{split} \end{equation}
where the last line follows from Lemma \ref{lm:bds-2} together with $\|u_{t}e^{ip\cdot x}u_{t}\| \leq 1$ (with a slight abuse of notation, $e^{ip\cdot x}$ denotes a multiplication operator). As for the second norm on the r.h.s. of (\ref{eq:II}), we use that:
\bea
\left\|\int dy\, a (\cc v_{t,y}) e^{-ip\cdot y} a (u_{t,y})\phi\right\| &=& \left\|\int dr_1 dr_2 \left(v_t e^{-ip \cdot x}u_{t}\right)(r_1; r_2) a_{r_1} a_{r_2}\phi\right\|\nn\\
&=& \left\|\int dr_1 dr_2 \left(v_t [ e^{-ip \cdot x}, \omega_{N,t}] \right)(r_1; r_2) a_{r_1} a_{r_2}\phi\right\|\nn\\
&\leq& 2\left\|v_t [ e^{-ip \cdot x}, \omega_{N,t}] \right\|_{\text{tr}}\|\phi\|\nn\\
&\leq& 2K \hbar (1+|p|)  N e^{c |t|}\|\phi\|\label{eq:IIbd2}
\eea
where the second line follows from $v_{t}u_{t}=0$ and $u_{t} = 1 - \o_{N,t}$, the third from from Lemma~\ref{lm:bds-2}, while the last from $\|v_{t}\|\leq 1$ and Proposition \ref{lem:hbargain}. Using the bounds (\ref{eq:IIbd1}), (\ref{eq:IIbd2}) in (\ref{eq:II}) we get:
\bea
\text{I} &\leq& 2K\hbar \left( \int dp |\widehat{V} (p)| (1+|p|) \right) e^{c|t|} \left\| \cN (\cN+3)^{k/2-j} (\cN+1)^{j-1} \cU_N (t;0) \xi \right\|\nn\\&&\quad\times \left\| (\cN +1)^{k/2} \cU_N (t;0) \xi \right\| \nn\\ &\leq& C \hbar e^{c |t|}  \| (\cN+1)^{k/2} \cU_N (t;0) \xi \|^2 
\eea
for a suitable constant $C >0$ (depending on $k$). Consider now the second term on the right hand side of (\ref{eq:erroest}),

\[\begin{split} \text{II} := \Big| \frac{1}{N} \int dx dy\, V (x-y) \, &\Big\langle \cU_N (t;0) \xi , (\cN+1)^{j-1} \\
&\times a (\cc v_{t,x}) a (\cc v_{t,y}) a (u_{t,y}) a(u_{t,x}) (\cN+1)^{k-j}  \cU_N (t;0) \xi \Big\rangle \Big|. \end{split}\]
Inserting a $1 = (\Ncal+5)^{k/2+1-j}(\Ncal+5)^{-k/2-1 + j}$ and pulling $(\Ncal+5)^{-k/2-1+j}$ through the annihilation operators to the right, we get:
\bea
\text{II} &\leq& \frac{1}{N} \int dp dx dy\, |\widehat{V} (p)| \left\|  (\cN+1)^{j-1} (\cN +5)^{k/2+1-j}\cU_N (t;0) \xi \right\| \nn\\ &&\quad \times \left\| \int dx dy \, a(\cc v_{t,x}) e^{ip\cdot x} a (u_{t,x}) a (\cc v_{t,y}) e^{-ip\cdot y} a (u_{t,y}) (\cN+1)^{k/2-1} \cU_N (t;0) \xi \right\|.\label{eq:Ibd}
\eea
Using that $v_{t}u_{t}=0$ and that $u_{t} = 1 - \o_{N,t}$, we obtain, for any $\phi\in\fock$:
\bea
\left\|\int \di x\, a(\cc v_{t,x}) e^{ipx} a(u_{t,x}) \phi \right\|  &=& \left\| \int \di r_1 \di r_2 \di x \, v_t(r_1;x) e^{ipx} u_t(x;r_2) a_{r_1} a_{r_2} \phi \right\| \nn\\ 
&\leq& \|v_{t}[e^{ip \cdot x},\o_{N,t}]\|_{\text{HS}}\| \cN^{1/2}\phi \|\nn\\
&\leq& 2^{1/2}\left\| [e^{ip\cdot x},\o_{N,t}] \right\|_{\text{tr}}^{1/2}\|\cN^{1/2}\phi\|\nn\\
&\leq& (2K\hbar (1+|p|) N)^{1/2} e^{c|t|} \, \| \cN^{1/2} \phi \| \label{eq:Ibd2}
\eea
where the second line follows from Lemma~\ref{lm:bds-2}, the third from $\|v_{t}\|\leq 1$, $\|e^{ip\cdot x}\|\leq 1$, $\|\o_{N,t}\|\leq 1$, while the last follows from Proposition \ref{lem:hbargain} (the constants $K$, $c > 0$ depend on $V$ but not on $k$). Applying this bound twice, we can estimate the last norm in the r.h.s. of (\ref{eq:Ibd}) as:
\bea
&&\Big\| \int dx dy \, a(\cc v_{t,x}) e^{ip\cdot x} a (u_{t,x}) a (\cc v_{t,y}) e^{-ip\cdot y} a (u_{t,y}) (\cN+1)^{k/2-1} \cU_N (t;0) \xi \Big\| \nn\\ &&\quad\leq (2K\hbar (1+|p|) N)^{1/2} e^{c|t|} \left\| \int dy\, 
a (\cc v_{t,y}) e^{-ip\cdot y} a (u_{t,y}) \cN^{1/2} (\cN+1)^{k/2-1}\cU_N (t;0) \xi \right\| \nn\\ 
&&\quad\leq 2K \hbar (1+|p|) N e^{2c|t|} \left\| \cN (\cN+1)^{k/2-1} \cU_N (t;0) \xi \right\| \nn\\ &&\quad\leq 2K\hbar (1+|p|) N e^{2c|t|} \left\| (\cN+1)^{k/2} \cU_N (t;0) \xi \right\|.
\eea
{P}lugging this bound into (\ref{eq:Ibd}), we conclude that
\bea
\text{II} &\leq& 2K \hbar  \left(\int dp \, |\widehat{V} (p)| (1+|p|) \right) e^{2 c |t|}  \left\| (\cN+5)^{k/2} \cU_N (t;0) \xi \right\|^2 \nn\\
&\leq& C \hbar e^{2 c |t|} \left\| (\cN + 1)^{k/2} \cU_N (t;0) \xi \right\|^2\nn
\eea
where the constant $c > 0$ depends on $V$ while the constant $C > 0$ depends on $V$ and on $k$. The last term in (\ref{eq:erroest}) is bounded analogously to $\text{I}$. This completes the proof of (\ref{eq:erroest}).
\end{proof}

\medskip

\begin{proof}[Proof of Theorem \ref{thm:grow}]
Combining Proposition \ref{prop:dt} and Lemma \ref{lem:errorest}, we find
\[ \left| i\hbar \frac{d}{dt} \, \Big\langle \cU_N (t;0) \xi, (\cN+1)^k \cU_N (t;0) \xi \Big\rangle \right| \leq C \hbar e^{c_1 |t|} \, \Big\langle  \cU_N (t;0) \xi, (\cN+1)^k \cU_N (t;0) \xi \Big\rangle. \]
Gronwall's Lemma implies that 
\[  \Big\langle \cU_N (t;0) \xi, (\cN+1)^k \, \cU_N (t;0) \xi \Big\rangle \leq \exp (c_2 \exp (c_1 |t|)) \,  \big\langle \xi, (\cN+1)^k \xi \big\rangle \]
where the constant $c_1$ depends only on the potential $V$, while $c_2$ depends on $V$ and on $k \in \bN$. 
\end{proof}

\section{Proof of main results}
\label{sec:proof}
\setcounter{equation}{0}

In this section we prove our main results, Theorem \ref{thm:main} and Theorem \ref{thm:k}. 
As in Section \ref{sec:gro}, we will use the notation $R_t \equiv R_{\nu_{N,t}}$, $u_t \equiv u_{N,t}$, $v_t \equiv v_{N,t}$, $\cc v_t \equiv \cc v_{N,t}$. Moreover, we define the functions $u_{t,x} (y) = u_{N,t} (y;x)$, $v_{t,x} (y) = v_{N,t} (y;x)$ and $\cc v_{t,x} (y) = \cc v_{N,t} (y;x)$. 
\begin{proof}[Proof of Theorem \ref{thm:main}]
We start from the expression
\begin{equation}\label{eq:gammant} \begin{split} \gamma_{N,t}^{(1)} (x;y) &= \langle \psi_{N,t} , a_y^* a_x \psi_{N,t} \rangle \\ &= \langle e^{-i \cH_N t/\hbar} R_0 \xi_N, a_y^* a_x e^{-i\cH_N t/ \hbar} R_0 \xi_N \rangle\\ &= \langle  \xi_N, R_0^* e^{i \cH_N t/\hbar} a_y^* a_x e^{-i\cH_N t/ \hbar} R_0 \xi_N \rangle. \end{split} \end{equation}
Introducing the fluctuation dynamics $\cU_N$ defined in (\ref{eq:cU}), we obtain
\[ \begin{split} 
\gamma_{N,t}^{(1)} (x;y) &=  \big\langle  \xi_N, \cU_N^* (t;0) R_{t}^* a_y^* a_x R_t \cU_N (t;0) \xi_N \big\rangle \\ & = \big\langle \xi_N , \cU_N^* (t;0) \left( a^* (u_{t,y}) + a ( \cc v_{t,y}) \right) \left( a (u_{t,x}) + a^* (\cc v_{t,x}) \right) \cU_N (t;0) \xi_N \big\rangle \\
 & = \big\langle \xi_N , \cU_N^* (t;0) \big\{ a^* (u_{t,y}) a (u_{t,x}) - a^* (\cc v_{t,x}) a (\cc v_{t,y}) + \langle \cc v_{t,y} , \cc v_{t,x} \rangle \\  &\hspace{3cm}  + a^* (u_{t,y}) a^* (\cc v_{t,x}) + a (\cc v_{t,y}) a (u_{t,x}) \big\} \, \cU_{N} (t;0) \xi_N \big\rangle. \end{split} \]
Here we used the defining property (\ref{eq:Bog}) of the Bogoliubov transformation $R_{t}$ and, in the third line, the canonical anticommutation relations (\ref{eq:CAR}). We observe that
\[ \langle \cc v_{t,y}, \cc v_{t,x} \rangle = \int dz v_t (z;y) \cc v_t (z;x) = (v_t \cc v_t) (y;x) = \omega_{N,t} (x;y). \]
This implies that
\[ \begin{split} 
\gamma_{N,t}^{(1)} (x;y) - \omega_{N,t} (x;y) = \; & \big\langle \xi_N, \cU_N^* (t;0) \big\{ a^* (u_{t,y}) a (u_{t,x}) - a^* (\cc v_{t,x}) a (\cc v_{t,y}) \\ &\hspace{1cm} + a^* (u_{t,y}) a^* (\cc v_{t,x}) + a (\cc v_{t,y}) a (u_{t,x}) \big\} \, \cU_N (t;0) \xi_N \big\rangle. \end{split} \]

\medskip

\emph{Step 1: Proof of \eqref{eq:conv-HS}.} We integrate this difference against the integral kernel of a Hilbert-Schmidt operator $O$ on $L^2(\Rbb^3)$ and find
\begin{equation} \label{eq:hs-tr} \begin{split} 
\tr \, O &\left( \gamma_{N,t}^{(1)} - \omega_{N,t} \right) \\ = \; & \big\langle \xi_N, \cU_N^* (t;0) \left(d\Gamma (u_t O u_t) -d\Gamma (\cc v_t \cc{O^{\ast}} v_t) \right)  \, \cU_N (t;0) \xi_N \big\rangle \\ &+ 2 \text{Re } \big\langle \xi_N, \cU_N^* (t;0) \left( \int dr_1 dr_2 (v_t O u_t) (r_1; r_2) a_{r_1} a_{r_2} \right) \cU_N (t;0) \xi_N \big\rangle.
 \end{split} \end{equation}
{F}rom Lemma \ref{lm:bds-2}, and using $\norm{u_t} = \norm{v_t} = 1$, we conclude that
 \begin{equation}\label{eq:hs-tr2}\begin{split} \left| \tr O \left( \gamma_{N,t}^{(1)} - \omega_{N,t} \right) \right| \leq \; &\left( \| u_t O u_t \| + \| \cc v_t \cc{O^{\ast}} v_t \| \right) \big\langle \xi_N, \cU_N^* (t;0) \cN \cU_N (t;0) \xi_N \big\rangle \\ &+ 2 \| v_t O u_t \|_{\text{HS}} \left\| (\cN+1)^{1/2} \cU_N (t;0) \xi_N \right\| \norm{\xi_N} \\  \leq \; & C \| O \|_{\text{HS}} \, \big\langle \xi_N, \cU_N^* (t;0) (\cN+1) \cU_N (t;0) \xi_N \big\rangle. \end{split}\end{equation}
{F}rom Theorem \ref{thm:grow} and from the assumption $\langle \xi_N, \cN \xi_N \rangle \leq C$, we obtain  
\[ \left\| \gamma_{N,t}^{(1)} - \omega_{N,t} \right\|_{\text{HS}} \leq C \exp (c_1 \exp (c_2 |t|)) \]
which completes the proof of (\ref{eq:conv-HS}).

\medskip

\emph{Step 2: Proof of \eqref{eq:conv-tr0}.}
We start anew from (\ref{eq:hs-tr}), assuming now $O$ to be a compact operator on $L^2(\Rbb^3)$, not necessarily Hilbert-Schmidt. Proceeding as in (\ref{eq:hs-tr2}), we find
\[ \begin{split} \Big| \tr\, & O \left( \gamma_{N,t}^{(1)} - \omega_{N,t} \right) \Big| \\ \leq \; &2 \| O \| \,  \| (\cN+1)^{1/2} \cU_N (t;0) \xi_N \|^2 + 2 \| v_t O u_t \|_{\text{HS}} \left\| (\cN+1)^{1/2} \cU_N (t;0) \xi_N \right\| \norm{\xi_N} \\  \leq \; & 2 \| O \| \, \| (\cN+1)^{1/2} \cU_N (t;0) \xi_N \|^2  + \| O \| \, \| v_t \|_{\text{HS}} \| (\cN+1)^{1/2} \cU_N (t;0) \xi_N \| \norm{\xi_N} .
\end{split}\]
Applying Theorem \ref{thm:grow}, the assumption $\langle \xi_N, \cN \xi_N \rangle \leq C$, and $\| v_t \|_{\text{HS}} = N^{1/2}$, we obtain
\[\Big| \tr O \left( \gamma_{N,t}^{(1)} - \omega_{N,t} \right) \Big| \leq C N^{1/2} \exp (c_1 \exp (c_2 |t|)) \,. \]
This completes the proof of \eqref{eq:conv-tr0}.

\medskip

\emph{Step 3: Proof of \eqref{eq:conv-tr}.} Let us now assume additionaly $d\Gamma (\omega_N) \xi_{N}=0$. Let $\xi^{(n)}_N$ the $n$-particle component of the Fock space vector $\xi_N$. With a slight abuse of notation, we denote again by $\xi_N^{(n)}$ the Fock space vector $\{ 0, \dots, 0 , \xi_N^{(n)}, 0 , \dots \} \in\cF$. The assumption implies that $d\Gamma (\omega_N) \xi_N^{(n)} = 0$ for all $n\in \bN$. Hence 
\be
\mathcal{N}R_{\n_N}\xi_N^{(n)} = R_{\n_N}(\mathcal{N} + N - 2d\G(\omega_N))\xi_N^{(n)} = R_{\n_N}(n + N)\xi_N^{(n)} = (n+N) R_{\nu_N} \xi_N^{(n)} \;. \label{eq:tr1}
\ee
In other words, $R_{\n_N}\xi_N^{(n)}$ is an eigenstate of the number of particles operator with eigenvalue $n+N$. Hence
\bea
&&\g_{N,t}^{(1)}(x;y) \nn \\ &&\quad= \sum_{n\geq 0}\media{e^{-i\mathcal{H}_{N}t/\hbar}R_{\n_N}\xi_N^{(n)},a^{*}_{y}a_{x}e^{-i\mathcal{H}_N t/\hbar}R_{\n_N}\xi_N^{(n)}}\nn\\
&&\quad= \sum_{n\geq 0}\media{\mathcal{U}_{N}(t;0)\xi_N^{(n)},(a^{*}(u_{t,y}) + a(\bar v_{t,y}))(a(u_{t,x}) + a^{*}(\bar v_{t,x}))\mathcal{U}_N(t;0)\xi_N^{(n)}}\nn.
\eea
Proceeding as in the proof of the first part of Theorem \ref{thm:main}, for a compact operator $O$ on $L^2(\Rbb^3)$, we end up with:
\begin{equation}\label{eq:tr2}
\begin{split}
\tr O &(\g^{(1)}_{N,t} - \omega_{N,t}) \\ =& \; \sum_{n\geq 0}\media{\xi_N^{(n)},\mathcal{U}^*_{N}(t;0)\left(d\G(u_t O u_t) - d\G(\cc v_t \cc{O^{\ast}} v_t)\right)\mathcal{U}_N(t;0)\xi_N^{(n)}} \\
&+ 2\Re \sum_{n\geq 0} \media{\xi_N^{(n)},\cU^*_N(t;0)\left( \int dr_1 dr_2 (v_t O u_t) (r_1; r_2) a_{r_1} a_{r_2} \right) \cU_N (t;0) \xi_N^{(n)}} \\
=: & \; \text{I} + \text{II}\;.
\end{split} \end{equation}
We estimate separately the two lines in the r.h.s. of Eq. (\ref{eq:tr2}). Let us start with
\be
\text{I} = \sum_{n\geq 0}\media{\xi_N^{(n)},\mathcal{U}^*_{N}(t;0)\left(d\G(u_t O u_t) - d\G(\cc v_t \cc{O^{\ast}} v_t)\right)\mathcal{U}_N(t;0)\xi_N^{(n)}}\nn.
\ee
From Lemma \ref{lm:bds-2}, we get
\begin{equation}\label{eq:Ifin_}
\begin{split}
\text{I} &\leq (\| u_t O u_t \| + \| \cc v_t \cc{O^{\ast}} v_t \|)\sum_{n\geq 0}\media{\xi_N^{(n)},\cU^*_N(t;0)\mathcal{N}\cU_N(t;0)\xi_N^{(n)}} \\
&\leq C \|O\| \exp(c_1\exp(c_2|t|)) \sum_{n\geq 0}  \media{\xi_N^{(n)},\mathcal{N}\xi_N^{(n)}}\\
&= C\|O\| \exp(c_1\exp(c_2|t|)) \media{\xi_N,\mathcal{N}\xi_N} \leq C\|O\| \exp(c_1\exp(c_2|t|)) 
\end{split}
\end{equation}
where we used the fact that $\|u_{t}\| = \|v_{t}\| = 1$, Theorem \ref{thm:grow} to control the growth of the expectation of $\mathcal{N}$ w.r.t. the fluctuation dynamics $\cU_N$, and that by assumption $\langle \xi_N , \cN \xi_N \rangle \leq C$. Therefore, we are left with
\be
\text{II} = 2\Re \sum_{n\geq 0} \media{\xi_N^{(n)},\cU^*_N(t;0)\left( \int dr_1 dr_2 (v_t O u_t) (r_1; r_2) a_{r_1} a_{r_2} \right) \cU_N (t;0) \xi_N^{(n)}}\;.\label{eq:tr3}
\ee
It follows from Proposition \ref{prop:dt} that, for any $k \in \bN$, 
\[\begin{split} i\hbar &\frac{d}{dt} \left\langle \cU_N^* (t;0) \xi, \cN^k \cU_N (t;0) \xi \right \rangle = \left\langle \xi, \cU_N^* (t;0) [\cN^k , \cL_N (t) ] \, \cU_N (t;0) \xi \right\rangle \end{split}  \] 
where we defined  
\begin{equation}\label{eq:genLN} \begin{split} 
\cL_N (t) = \; &\frac{1}{N} \int dx dy\, V(x-y) \big\{ a^*(u_{t,x}) a(\cc v_{t,y}) a(u_{t,y}) a(u_{t,x}) \\ &\hspace{2.5cm} +  \frac{1}{2}a(\cc v_{t,x}) a(\cc v_{t,y}) a(u_{t,y}) a(u_{t,x}) - a^*(u_{t,y}) a^*(\cc v_{t,y}) a^*(\cc v_{t,x}) a(\cc v_{t,x}) \big\}\\&+\hc
\end{split} \end{equation}
The operator $\cL_N (t)$ plays the role of the generator of the dynamics $\cU_N (t;0)$ (up to terms which commute with the number of particles operator). Approximating the orthogonal projection $P_n = { \bf 1} (\cN = n)$ on the $n$-particle sector of $\cF$ by polynomials in $\cN$, we conclude that 
\begin{equation}\label{eq:fcNder}   i\hbar \frac{d}{dt} \left\langle \cU_N^* (t;0) \xi, f (\cN) \, \cU_N (t;0) \xi \right \rangle = \left\langle \xi, \cU_N^* (t;0) [ f(\cN) , \cL_N (t) ]  \,\cU_N (t;0) \xi \right\rangle \end{equation}
for every continuous function $f: \bR \to \bC$. We are going to compare the dynamics $\cU_N (t;0)$ with a modified dynamics, whose generator only contains one of the three terms on the r.h.s. of (\ref{eq:genLN}) (and terms which commute with $\cN$, therefore not contributing to the change of the expectation of functions of $\cN$). We define
\begin{equation}\label{eq:cL2}
\begin{split} \wt{\cL}_N (t) = \frac{1}{N} & \int dxdy V(x-y) \\ &\times \left\{ a^*(u_{t,x}) a(\cc v_{t,y}) a(u_{t,y}) a(u_{t,x}) - a^*(u_{t,y}) a^*(\cc v_{t,y}) a^*(\cc v_{t,x}) a(\cc v_{t,x}) \right\} + \text{h.c.} \end{split} \end{equation} 
and we denote by $\wt{\cU}_N (t;s)$ the time evolution generated by $\cU^*_N (t;0) \wt{\cL}_N (t) \cU_N (t;0)$ which is the two-parameter group of unitary transformation satisfying $\wt{\cU}_N (s;s) = 1$ for all $s \in \bR$ and 
\begin{equation}\label{eq:Utilde} i \hbar \, \partial_t \, \wt{\cU}_N (t;s) = -\cU_N^* (t;0) \wt{\cL}_N (t) \,\cU_N (t;0) \, \wt{\cU}_N (t;s). \end{equation}
Finally, we define 
\[ \cU_N^{(1)} (t;s) = \cU_N (t;s) \, \wt{\cU}_N (t;s) \]
and we observe that, from (\ref{eq:fcNder}) and (\ref{eq:Utilde}),
\begin{equation}\label{eq:fcN} 
\begin{split}
i \hbar \partial_t \langle \xi, \cU^{(1)*}_N (t;0) f(\cN) &\, \cU^{(1)}_{N} (t;0) \xi \rangle = \left\langle \xi, \cU^{(1)*}_N (t;0) \left[ f (\cN) , \cL^{(1)}_N (t) \right] \cU^{(1)}_N (t;0) \xi \right\rangle
\end{split} 
\end{equation}
where we set 
\[ \begin{split} \cL_N^{(1)} (t) &= \cL_N (t) - \wt{\cL}_N (t) \\
&= \frac{1}{2N} \int dx dy V(x-y) \left\{a(\cc v_{t,x}) a(\cc v_{t,y}) a(u_{t,y}) a(u_{t,x}) + a^* (u_{t,x}) a^* (u_{t,y}) a^* (\cc v_{t,y}) a^* (\cc v_{t,y})  \right\} \end{split} \]
Notice that $\cL^{(1)}_N (t)$ can only create or annihilate four particles at a time. This implies that, although $\cL_N^{(1)} (t)$ does not commute with $\cN$, it satisfies $[\cL^{(1)}_N(t) , i^\cN] = 0$. 
{F}rom (\ref{eq:fcN}), we conclude that
\[ \langle \xi, \cU^{(1)*}_N (t;0) i^\cN \cU^{(1)}_{N} (t;0) \xi \rangle =  \langle \xi, i^\cN \, \xi \rangle \]
for all $t \in \bR$ and all $\xi \in \cF$. We conclude that $\cU^{(1)*}_N (t;0) \, i^\cN \, \cU_N^{(1)} (t;0) = i^{\cN}$ for all $t \in \bR$ and therefore that  \begin{equation}\label{eq:iNcomm} i^\cN \cU^{(1)}_N (t;0) = \cU^{(1)}_N (t;0) \, i^\cN \, . \end{equation}

\medskip

We rewrite now (\ref{eq:tr3}) as 
\begin{equation} \label{eq:tr5}
\begin{split}
\text{II} =  & \; 2\Re \sum_{n\geq 0}\Big\{\media{\xi_N^{(n)}, \cU^{(1)*}_N(t;0)\left( \int dr_1 dr_2 (v_t O u_t) (r_1; r_2) a_{r_1} a_{r_2} \right) \cU^{(1)}_N(t;0)\xi_N^{(n)}} \\
& + \media{\xi_N^{(n)},\left( \cU^{*}_{N}(t;0) - \cU^{(1)*}_N(t;0) \right)\left( \int dr_1 dr_2 (v_t O u_t) (r_1; r_2) a_{r_1} a_{r_2} \right)\cU^{(1)}_{N}(t;0)\xi_N^{(n)}} \\
& +\media{\xi_N^{(n)},\cU^{*}_N(t;0)\left( \int dr_1 dr_2 (v_t O u_t) (r_1; r_2) a_{r_1} a_{r_2} \right)\left( \cU_N(t;0) - \cU^{(1)}_N(t;0) \right)\xi_N^{(n)}}\Big\} \\
=: & \;\text{II}_{1} + \text{II}_{2} + \text{II}_{3}\;.
\end{split}
\end{equation}
The key remark which allows us to improve the rate of convergence with respect to Eq. (\ref{eq:conv-tr0}) is that $\text{II}_1 = 0$. This follows from the remark that $\cU^{(1)}_N$ can only create or annihilate particles in groups of four. So, the expectation of the a product of two creation or two annihilation operators in the state $\cU_N^{(1)} (t;0) \xi_N^{(n)}$, where $\xi_N^{(n)}$ has a fixed number of particles must vanish. To prove this fact rigorously, we use (\ref{eq:iNcomm}), which implies that
\bea
&&\media{\xi_N^{(n)}, \cU^{(1)*}_N(t;0)\left( \int dr_1 dr_2 (v_t O u_t) (r_1; r_2) a_{r_1} a_{r_2} \right) \cU^{(1)}_N(t;0)\xi_N^{(n)}} \nn\\
&&\quad = \media{\xi_N^{(n)}, \cU^{(1)*}_N(t;0)\left( \int dr_1 dr_2 (v_t O u_t) (r_1; r_2) a_{r_1} a_{r_2} \right) \cU^{(1)}_N(t;0)i^{\cN}\xi_N^{(n)}}i^{-n}\nn\\
&&\quad = \media{\xi_N^{(n)}, i^{\cN + 2}\cU^{(1)*}_N(t;0)\left( \int dr_1 dr_2 (v_t O u_t) (r_1; r_2) a_{r_1} a_{r_2} \right) \cU^{(1)}_N(t;0)\xi_N^{(n)}}i^{-n} \nn\\
&&\quad = - \media{\xi_N^{(n)}, \cU^{(1)*}_N(t;0)\left( \int dr_1 dr_2 (v_t O u_t) (r_1; r_2) a_{r_1} a_{r_2} \right) \cU^{(1)}_N(t;0)\xi_N^{(n)}} \nn,
\eea
and therefore 
\[\media{\xi_N^{(n)}, \cU^{(1)*}_N(t;0)\left( \int dr_1 dr_2 (v_t O u_t) (r_1; r_2) a_{r_1} a_{r_2} \right) \cU^{(1)}_N(t;0)\xi_N^{(n)}} \nn =0.\]
We are left with bounding the last two terms in (\ref{eq:tr5}); let us start with 
\be
\begin{split}
\text{II}_{2} = \; &2\Re \sum_{n\geq 0} \Big \langle \xi_N^{(n)},\left( \cU^{*}_{N}(t;0) - \cU^{(1)*}_N(t;0) \right)\\ &\hspace{3cm} \times \left( \int dr_1 dr_2 (v_t O u_t) (r_1, r_2) a_{r_1} a_{r_2} \right)\cU^{(1)}_{N}(t;0)\xi_N^{(n)} \Big\rangle.\label{eq:tr6}
\end{split} 
\ee
We expand $\cU_N$ in terms of $\cU^{(1)}_N$ using the Duhamel formula:
\bea
\cU_N(t;0) - \cU^{(1)}_N(t;0) = -\frac{i}{\hbar}\int_{0}^{t}ds\,\cU(t;s)\wt{\cL}_N (s) \cU^{(1)}_N(s;0). \label{eq:tr7}
\eea
Plugging (\ref{eq:tr7}) into (\ref{eq:tr6}) and using (\ref{eq:cL2}) we end up with
\begin{equation}
\begin{split}
\text{II}_{2} \leq& \; \frac{4}{\hbar N}\sum_{n\geq 0}\left\{\left| \left\langle\xi_N^{(n)}, \int_{0}^{t} ds \, \cU^{(1)*}_{N}(s;0) \right.\right.\right.\\ &\hspace{2cm} \times  \left(\int dxdy\, V(x-y) a^{*}(u_{s,x})a(\bar v_{s,y}) a(u_{s,y}) a(u_{s,x}) + \text{h.c.} \right) \\ &\hspace{2cm} \times \left.\left.\left.\cU^{*}_N (t;s) \left(\int dr_1 dr_2 (v_t O u_t) (r_1; r_2) a_{r_1} a_{r_2}\right) \cU^{(1)}_{N}(t;0) 
\xi_N^{(n)} \right\rangle \right|\right. \\&
 \left.+ \left|\left\langle \xi_N^{(n)}, \int_{0}^{t}ds\, \cU^{(1)*}_N(s;0) \left( \int dxdy\,V(x-y) a^{*}(u_{s,y})a^{*}(\bar v_{s,y})a^{*}(\bar v_{s,x})a(\bar v_{s,x}) + \text{h.c.} \right)\right.\right.\right. \\ &\hspace{2cm} \times\left.\left.\left. \cU^{*}_N(t;s) \left( \int dr_1 dr_2 (v_t O u_t) (r_1; r_2) a_{r_1} a_{r_2} \right) \cU^{(1)}(t;0)\xi_N^{(n)}\right\rangle\right|\right\} \\
=:& \;\text{II}_{2.1} + \text{II}_{2.2}\,.\label{eq:tr8}
\end{split}
\end{equation}
We start by estimating $\text{II}_{2.1}$. We find
\[
\begin{split}
\text{II}_{2.1} \leq& \; \frac{2}{\hbar N}\sum_{n\geq 0}\int_{0}^{t}ds \int dp\,|\hat V(p)| \\ & \times \Big\{\Big| \Big\langle \xi_N^{(n)}, \cU^{(1)*}_{N}(s;0)d\G(u_s e^{ipx}\bar u_s) \\& \hspace{1cm} \times\left.\left.\left( \int d\o_1 d\o_2\, (v_s e^{-ipx}\bar u_s)(\o_1;\o_2)a_{\o_1}a_{\o_2} \right)\cU^{*}_{N}(t;s)\right.\right. \\& \hspace{1cm}\times\left( \int dr_1 dr_2 (v_t O u_t) (r_1; r_2) a_{r_1} a_{r_2} \right)\cU^{(1)}_{N}(t;0) \xi_N^{(n)}\Big\rangle \Big| \\
&\hspace{0.3cm} +\Big| \Big \langle \xi_N^{(n)}, \cU^{(1)*}_{N}(s;0)\left( \int d\o_1 d\o_2\, (\bar v_s e^{-ipx} u_s)(\o_1;\o_2)a^*_{\o_1}a^*_{\o_2} \right) \\& \hspace{1cm}\times d\G(u_s e^{ipx}\bar u_s )\cU^{*}_{N}(t;s)\left( \int dr_1 dr_2 (v_t O u_t) (r_1; r_2) a_{r_1} a_{r_2} \right)\cU^{(1)}_{N}(t;0) \xi_N^{(n)}\Big\rangle \Big|\Big\}\;.
\end{split}\]
Using Cauchy-Schwarz inequality, we obtain 
\begin{equation}
\begin{split}
\text{II}_{2.1} \leq& \; \frac{2}{\hbar N}\sum_{n\geq 0}\int_{0}^{t}ds\int dp|\hat V(p)|\left\| d\G(u_s e^{ipx} \bar u_s)\cU^{(1)}_{N}(s;0)\xi_N^{(n)} \right\| \\&\hspace{1cm} \times \left\| \left( \int d\o_1 d\o_2\, (v_s e^{-ipx}\bar u_s)(\o_1;\o_2)a_{\o_1}a_{\o_2}\right)\cU^{*}_{N}(t;s)\right. \\ &\hspace{1.5cm} \times \left.\left(\int dr_1 dr_2 (v_t O u_t) (r_1; r_2) a_{r_1} a_{r_2}\right)\cU^{(1)}_N(t;0)\xi_N^{(n)} \right\| \\
&+\frac{2}{\hbar N}\sum_{n\geq 0}\int_{0}^{t}ds\int dp\, |\hat V(p)| \\ &\hspace{1cm} \times \left\| (\cN + 2)^{-1/2}d\G(u_s e^{ipx}\bar u_s)\right.\\&\hspace{1.5cm} \left. \times \left( \int d\o_1 d\o_2\, (v_s e^{-ipx} \bar u_s)(\o_1;\o_2)a_{\o_1}a_{\o_2}\right)\cU^{(1)}_{N}(s;0)\xi_N^{(n)}\right\| \\& \hspace{1cm} \times \left\| (\cN + 2)^{1/2}\cU^{*}_{N}(t;s)\left( \int dr_1 dr_2 (v_t O u_t) (r_1; r_2) a_{r_1} a_{r_2} \right)\cU^{(1)}_{N}(t;0)\xi_N^{(n)}\right\|.
\end{split} \end{equation}
{F}rom Lemma \ref{lm:bds-2}, it follows that 
\begin{equation}
\begin{split}
\text{II}_{2.1} \leq & \; \frac{2}{\hbar N}\sum_{n\geq 0}\int_{0}^{t}ds\int dp\,|\hat V(p)|\left\| \cN \cU^{(1)}_N(s;0) \xi_N^{(n)}\right\| \left\| v_s e^{-ipx}\bar u_s \right\|_{\text{HS}} \\ &\hspace{1cm} \times \left\| \cN^{1/2}\cU^{*}_{N}(t;s)\left( \int dr_1 dr_2 (v_t O u_t) (r_1; r_2) a_{r_1} a_{r_2} \right)\cU^{(1)}_{N}(t;0)\xi_N^{(n)}\right\| \\
&+ \frac{2}{\hbar N}\sum_{n\geq 0}\int_{0}^{t}ds\int dp\,|\hat V(p)|\| v_{s}e^{-ipx}\bar u_{s} \|_{\text{HS}}\left\| \cN \cU^{(1)}_{N}(s;0)\xi_N^{(n)} \right\| \\
&\hspace{1cm} \times\left\| (\cN + 2)^{1/2}\cU^{*}_N(t;s)\left( \int dr_1 dr_2 (v_t O u_t) (r_1; r_2) a_{r_1} a_{r_2} \right)\cU^{(1)}_{N}(t;0)\xi_N^{(n)}\right\|\,.
\end{split} \end{equation} 
Using Theorem \ref{thm:grow} to control the growth of $\cN$ with respect to the unitary evolutions, and again Lemma \ref{lm:bds-2}, we conclude that
\begin{equation} \label{eq:finII2}
\begin{split}
\text{II}_{2.1} \leq \; &\frac{C \exp (c_1 \exp (c_2 |t|))}{\hbar N} \\ &\times \sum_{n\geq 0}\int_{0}^{t}ds\int dp\,|\hat V(p)| \left\| v_s e^{-ipx}\bar u_s \right\|_{\text{HS}}\left\|v_t O u_t\right\|_{\text{HS}} \left\| (\cN + 2) \xi_N^{(n)}\right\|^2. \end{split}
\end{equation}
Here we also used a bound of the form $\| \cN \cU_N^{(1)} (t;0) \xi \| \leq C \exp (c_1 \exp (c_2 |t|)) \| \cN \xi \|$ for the growth of the expectation of the number of particles w.r.t. to the modified dynamics $\cU^{(1)}_N (t;0)$. This bound can be proven exactly as the estimate in Theorem \ref{thm:grow} for the dynamics $\cU_N (t;0)$, with the only difference that when we compute the derivative of $\langle \xi , \cU^{(1)}_N (t;0) (\cN+1)^k \cU_N^{(1)} \xi \rangle$ only one of the three terms on the r.h.s. of (\ref{eq:prop-dt}) appears. 

\medskip

Since $\left\|u_{t} O v_t\right\|_{\text{HS}}\leq \|O\|N^{1/2}$ and, using Proposition \ref{lem:hbargain}, 
\[ \left\| v_{s} e^{ipx} \bar u_s \right\|_{\text{HS}}^{2} \leq \Tr \left|[\g_{s},e^{ipx}]\right| \leq C (1+|p|) N \hbar  \exp (c |s|), \] we find that
\[ 
\begin{split}
\text{II}_{2.1} &\leq C \| O \| \hbar^{-1/2} \exp (c_1 \exp (c_2 |t|)) \sum_{n\geq 0} \| (\cN+2) \xi_N^{(n)} \|^2 \\ &\leq C \| O \| \hbar^{-1/2} \exp (c_1 \exp (c_2 |t|)) \| (\cN+2) \xi_N \|^2.
\end{split} \]
The same strategy is followed to bound $\text{II}_{2.2}$ in (\ref{eq:tr8}), and $\text{II}_{3}$ in (\ref{eq:tr5}). Hence, we have shown that, for every compact operator $O$, 
\[  \begin{split} \left| \tr O \, (\g^{(1)}_{N,t} - \omega_{N,t})\right| &\leq C \|O\| N^{1/6} \exp{(c_2\exp(c_1|t|))} \media{\xi_N,(\cN+2)^{2}\xi_N} \\ &\leq C \|O\| N^{1/6} \exp{(c_2\exp(c_1|t|))} \end{split} \]
where we used the assumption $\langle \xi_N , \cN^2 \xi_N \rangle \leq C$. This completes the proof of (\ref{eq:conv-tr}). 

\medskip

\emph{Step 4: Proof of (\ref{eq:conv-sc}).} We consider an observable $O = e^{ix\cdot q + \hbar p \cdot \nabla}$ with $p,\,q\in\mathbb{R}^{3}$. As in (\ref{eq:tr2}) we decompose 
\[\tr O (\g^{(1)}_{N,t} - \omega_{N,t}) = \; \text{I} + \text{II}. \] 
The bound for $\text{I}$ obtained in (\ref{eq:Ifin_}) for an arbitrary bounded operator $O$ is already consistent with (\ref{eq:conv-sc}). However, we have to improve the bound for $\text{II}$, using the special structure of the observable $O$. Writing $\text{II} = \text{II}_1 + \text{II}_2 + \text{II}_3$ as in (\ref{eq:tr5}), and noticing again that $\text{II}_1 = 0$, we are left with the problem of improving the bound for $\text{II}_2$ and $\text{II}_3$. To bound $\text{II}_2$, we use (\ref{eq:finII2}) and the remark that, for $O= e^{ix\cdot q + \hbar \nabla\cdot p}$,
\begin{equation}\label{eq:uJv}
\begin{split}
\left\| v_{t} O u_{t} \right\|_{\text{HS}}^{2} &= \left\| v_{t}e^{ix\cdot q + \hbar p \cdot \nabla}u_{t} \right\|_{\text{HS}}^{2} \\ &\leq \tr\left| \left[\omega_{N,t},e^{ix\cdot q + \hbar p \cdot \nabla}\right] \right| 
\leq \tr\left| \left[\omega_{N,t},e^{ix\cdot q}\right] \right| + \tr\left|\left[\omega_{N,t},e^{\hbar p\cdot \nabla}\right]\right|.
\end{split} \end{equation}
Using that
\[ \begin{split} [ \omega_{N,t} , e^{\hbar p \cdot \nabla} ] &= \omega_{N,t} e^{\hbar p \cdot \nabla} - e^{\hbar p \cdot \nabla} \omega_{N,t} = - \int_0^1 ds \, \frac{d}{ds} e^{s \hbar p \cdot \nabla} \omega_{N,t} e^{(1-s) \hbar p \cdot \nabla} \\ &= - \int_0^1 ds \, e^{s \hbar p \cdot \nabla} [\hbar p \cdot \nabla , \omega_{N,t}] e^{(1-s) \hbar p \cdot \nabla} \end{split} \]
we conclude from Proposition \ref{lem:hbargain} that
\[ \tr |[\omega_{N,t} , e^{\hbar p \cdot \nabla}]| \leq |p| \tr |[ \hbar \nabla , \omega_{N,t}]| \leq C |p| N \hbar \exp (c |t|). \]

\medskip

Therefore, using Proposition \ref{lem:hbargain} also to bound $\tr | [\omega_{N,t},e^{ix\cdot q}]|$,   (\ref{eq:uJv}) implies that
\[ \| v_{t} O u_{t} \|_{\text{HS}}^{2}  \leq C (1+|q| + |p|) N \hbar \exp (c |t|). \]
Inserting this bound in (\ref{eq:finII2}), we obtain that, for $O= e^{ix\cdot q + \hbar \nabla\cdot p}$,
\[ \text{II}_2 \leq C \exp (c_1 \exp (c_2 |t|)) (1+|p| + |q|)^{1/2} \| (\cN+1) \xi_N \|^2. \]
A similar bound can be found for the contribution $\text{II}_3$. Hence 
\[ \begin{split} \left| \tr \, e^{ix\cdot q + \hbar \nabla\cdot p} \left( \gamma_{N,t}^{(1)} - \omega_{N,t} \right) \right| & \leq C (1+|p| + |q|)^{1/2} \exp (c_1 \exp (c_2 |t|))  \| (\cN+1) \xi_N \|^2 \\ 
 & \leq C (1+|p| + |q|)^{1/2} \exp (c_1 \exp (c_2 |t|)),
\end{split} \]
where we used the assumption $\| (\cN+1) \xi_N \|^2 < C$. This concludes the proof of Theorem~ \ref{thm:main}.
\end{proof}

Next, we proceed with the proof of Theorem \ref{thm:k}. 
 \begin{proof}[Proof of Theorem \ref{thm:k}] 
We start from the expression
\begin{equation}\label{eq:k1} \begin{split} \gamma_{N,t}^{(k)} &(x_1, \dots , x_k ; x'_1, \dots x'_k) \\ &= \big\langle e^{-i \cH_N t/ \hbar} 
R_0 \xi,  a^*_{x'_k}  \dots a^*_{x'_1} a_{x_1} \dots a_{x_k} e^{-i\cH_N t/\hbar} R_0 \xi \big\rangle \\ &=  \big\langle \cU_N (t;0) \xi, R_t^* a^*_{x'_k}  \dots a^*_{x'_1} a_{x_1} \dots a_{x_k} R_t \cU_N (t;0)  \xi \big\rangle \\
&=  \big\langle \cU_N (t;0) \xi, \left(a^*(u_{t,x'_k})+a(\cc v_{t,x'_k}) \right)\cdots\left(a^*(u_{t,x'_1})+a(\cc v_{t, x'_1}) \right) \\
  & \qquad\qquad\times \left(a(u_{t,x_1})+a^*(\cc v_{t,x_1}) \right)\cdots\left( a(u_{t,x_k})+a^*(\cc v_{t,x_k}) \right) \, \cU_N (t;0) \xi \big\rangle.
  \end{split}
 \end{equation}
 This product will be expanded as a sum of $2^{2k}$ summands. Each summand will be put in normal order using Wick's theorem, which gives rise to contractions. The completely contracted contribution will be identified with the Hartree-Fock density matrix $\omega^{(k)}_{N,t}$, all other contributions will be of smaller order.

\medskip

\emph{Step 1: Expanding the product and applying Wick's theorem.} We recall Wick's theorem. For $j =1 ,\dots , 2k$, we denote by $a^\#_j$ either an annihilation or a creation operator acting on the fermionic Fock space $\cF$. We denote by $:a_{j_1}^\# \dots a_{j_\ell}^\#:$ the product $a_{j_1}^\# \dots a_{j_\ell}^\#$ in normal order (obtained by moving all creation operators on the left and all annihilation operators on the right, proceeding as if they were all anticommuting operators). Wick's theorem states that
\[ \begin{split} 
a^\#_1 a^\#_2 \cdots a^\#_{2k} = \; : a^\#_1 a^\#_2 \cdots a^\#_{2k} : + &\sum_{j=1}^k \sum_{n_1 < \dots < n_{2j}} :a^\#_1 \cdots \widehat{a^\#_{n_1}}\cdots \widehat{a^\#_{n_{2j}}}\cdots a^\#_{2k}: \\ &\times \sum_{\sigma \in P_{2j}} (-1)^{|\sigma|} \scal{\Omega}{a^\#_{n_{\sigma(1)}} a^\#_{n_{\sigma (2)}}\Omega}\cdots\scal{\Omega}{a^\#_{n_{\sigma (2j-1)}} a^\#_{n_{\sigma (2j)}}\Omega}
\end{split} \]
where $P_{2j}$ is the set of pairings 
\[
\begin{split}P_{2j} = \big\{ \sigma \in S_{2j}:\ & \sigma(2\ell-1)< \sigma(2\ell)\ \forall \ell=1,\ldots, j \text{ and } \\ & \sigma(2\ell-1) < \sigma(2\ell+1)\ \forall \ell=1,\ldots,j-1\big\}.\end{split}\]
and $|\sigma|$ denotes the number of pair interchanges needed to bring the contracted operators in the order $a_{n_{\sigma (1)}}^\# a_{n_{\sigma (2)}}^\# \dots a_{n_{\sigma (2j)}}^\#$. We call $\scal{\Omega}{a^\#_i a^\#_j\Omega}$ the contraction of $a^\#_i$ and $a^\#_j$. 

\medskip

Next, we apply Wick's theorem to the products arising from (\ref{eq:k1}). To this end, we observe that  
the contraction of a $a^\#(u_{t,z_1})$-operator with a $a^\#(\cc v_{t,z_2})$-operator is always zero because $u_t \cc v_t = v_t u_t =0$. Furthermore, the $a^\#(u_{t,z})$-operators among themselves are already in normal order, so their contractions always vanish. Hence, the only non-vanishing contractions arising from the terms on the r.h.s. of (\ref{eq:k1}) have the form 
\begin{equation}\label{gammacontr}\scal{\Omega}{a(\cc v_{t,x'_i}) a^*(\cc v_{t,x_j}) \Omega} = \omega_{N,t}(x_j,x'_i) \, . \end{equation} 
Since each contraction of the form (\ref{gammacontr}) involves one $x$- and one $x'$-variable, the normal-ordered products in the non-vanishing contributions arising from Wick's theorem always have the same number of $x$- and $x'$-variables. So, all terms emerging from (\ref{eq:k1}) after applying Wick's theorem have the form
\begin{equation}
\label{eq:typicalwick}
\begin{split}
& \pm \Big\langle \cU_N (t;0) \xi, \, :a^\#(w_1(\cdot;x'_{\sigma(1)}))\cdots a^\#(w_{k-j}(\cdot;x'_{\sigma(k-j)})) \\ &\hspace{2.5cm} \times a^\#(\eta_1(\cdot;x_{\pi(1)}))\cdots a^\#(\eta_{k-j}(\cdot;x_{\pi(k-j)})): \, \cU_N (t;0) \xi \Big\rangle \\
& \hspace{5cm} \times\omega_{N,t} (x_{\pi(k-j+1)};x'_{\sigma(k-j+1)})\cdots\omega_{N,t} (x_{\pi(k)};x'_{\sigma(k)})
\end{split}
\end{equation}
where $j \leq k$ denotes the number of contractions, $\pi, \sigma \in S_k$ are two appropriate permutations, and, for every $j=1, \dots , k-j$, $w_j, \eta_j: L^2 (\bR^3) \to L^2 (\bR^3)$ are either the  operator $u_t$ or the operator $\cc v_t$ (the operators are identified with their integral kernels). 

\medskip

\emph{Step 2: Estimating \eqref{eq:typicalwick} in the case $0 \leq j < k$.} We will use the shorthand notation $\bx_k = (x_1, \dots ,x_k) \in \bR^{3k}$ and similarly $\bx'_k = (x'_1, \dots , x'_k) \in \bR^{3k}$. Let $O$ be a Hilbert-Schmidt operator on $L^2 (\bR^{3k})$, with integral kernel $O(\bx_k ; \bx'_k)$. Integrating \eqref{eq:typicalwick} against $O(\bx_k ; \bx'_k)$, we set 
\begin{equation}\label{eq:wickbound}
\begin{split}
\text{I} & := \Big|  \int d \bx_k d\bx'_k \, O (\bx_k;\bx'_k) \, \big\langle \cU_N (t;0) \xi, \, :a^\#(w_1(\cdot;x'_{\sigma(1)}))\cdots a^\#(w_{k-j}(\cdot;x'_{\sigma(k-j)})) \\
&\hspace{2.5cm} \times a^\#(\eta_1(\cdot; x_{\pi(1)}))\cdots a^\#(\eta_{k-j}(\cdot; x_{\pi(k-j)})): \cU_N (t;0)\xi \big\rangle \\
& \hspace{5cm} \times\omega_{N,t} (x_{\pi(k-j+1)};x'_{\sigma(k-j+1)})\cdots\omega_{N,t} (x_{\pi(k)};x'_{\sigma(k)})\Big|. 
\end{split} \end{equation}
We remark that 
\[ \begin{split} 
\text{I} & = \Big| \int d\bx_k d \bx'_k \left[ \eta^{(\pi (1))}_1\cdots \eta^{(\pi (k-j))}_{k-j}  \, O \, w^{(\sigma (1))}_1 \cdots w^{(\sigma (k-j))}_{k-j} \right]  (\bx_k ; \bx'_k) \\
& \hspace{2.5cm} \times \big\langle \cU_N (t;0) \xi, \, :a^\#_{x'_{\sigma(1)}} \cdots a^\#_{x'_{\sigma(k-j)}} a^\#_{x_{\pi(1)}} \cdots a^\#_{x_{\pi(k-j)}}: \, \cU_N (t;0) \xi \big\rangle\\
&\hspace{5cm} \times\omega_{N,t} (x_{\pi(k-j+1)}; x'_{\sigma(k-j+1)})\cdots \omega_{N,t} (x_{\pi(k)}; x'_{\sigma(k)})\Big| 
\end{split}\]
where $\eta_\ell^{(\pi (\ell))}$ and $w_\ell^{(\sigma (\ell))}$ denote the one-particle operators $\eta_\ell$ and $w_\ell$ acting only on particle $\pi (\ell)$ and, respectively, on particle $\sigma (\ell)$.  Notice that to be precise some of the operators $\eta_\ell^{(\pi(\ell))}$ and $w_\ell^{(\sigma (\ell))}$ may need to be replaced by their transpose, their complex conjugate, or their hermitian conjugate. In the end, this change does not affect our analysis, since we will only need the bounds $\| \eta_j \|, \| w_j \| \leq 1$ for the operator norms.  {F}rom H\"older's inequality, we get
\[
\begin{split} 
\text{I} \leq \; & \big\|  \eta^{(\pi (1))}_1\cdots \eta^{(\pi (k-j))}_{k-j}  \, O \, w^{(\sigma (1))}_1 \cdots w^{(\sigma (k-j))}_{k-j} \big\|_{\text{HS}} \\ &\times  \bigg( \int d\bx_k d \bx'_k \, \big| \big\langle \cU_N (t;0) \xi, \, :a^\#_{x'_{\sigma(1)}}\cdots a^\#_{x'_{\sigma(k-j)}}a^\#_{x_{\pi(1)}}\cdots a^\#_{x_{\pi(k-j)}}: \cU_N(t;0)\xi\big\rangle \big|^2 \\ & \hspace{2cm} \times  \big| \omega_{N,t} (x_{\pi(k-j+1)};x'_{\sigma(k-j+1)}) \big|^2 \cdots \big|\omega_{N,t} (x_{\pi(k)},;x'_{\sigma(k)}) \big|^2 \bigg)^{1/2}\\
 \leq \; &  \| O \|_{\text{HS}} \| \omega_{N,t} \|_{\text{HS}}^j  \\
&\times \bigg( \int \di x_{\pi(1)}\cdots \di x_{\pi(k-j)}\di x'_{\sigma(1)}\cdots dx'_{\sigma(k-j)} \\
&\hspace{2cm} \times  \big| \big\langle \cU_N (t;0) \xi, :a^\#_{x'_{\sigma(1)}}\cdots a^\#_{x'_{\sigma(k-j)}}a^\#_{x_{\pi(1)}}\cdots a^\#_{x_{\pi(k-j)}}: \cU_N (t;0) \xi \big\rangle \big|^2 \bigg)^{1/2}.
\end{split}\]
Since $\| \omega_{N,t} \|_{\text{HS}} = N^{1/2}$ and since the operators in the inner product are normal ordered, we obtain 
\[ \text{I} \leq C \| O \|_{\text{HS}} N^{j/2} \langle \cU_N (t;0) \xi , (\cN+1)^{k-j} \cU_N (t;0) \xi \rangle \,.\]
Hence, the contribution of each term with $j < k$ arising from (\ref{eq:k1}) after applying Wick's theorem and integrating against a Hilbert-Schmidt operator $O$ can be bounded by 
\begin{equation}\label{eq:jmink} C \| O \|_{\text{HS}} \, N^{(k-1)/2} \langle \cU_N (t;0) \xi , (\cN+1)^{k} \cU_N (t;0) \xi \rangle \,. \end{equation}

\medskip

{\it Step 3: Fully contracted terms, $j = k$}. To finish the proof of Theorem \ref{thm:k}, we consider the fully contracted terms with $j=k$ arising from (\ref{eq:k1}) after expanding and applying Wick's theorem. Since $\scal{\Omega}{a(\cc v_{t,\cdot,y_i}) a^*(\cc v_{t,x_j})\Omega} = \omega_{N,t}(x_j;y_i)$ are the only non-zero contractions, only the term 
\[ a (\cc v_{t,y_k}) \cdots a(\cc v_{t, y_1}) a^*(\cc v_{t,x_1}) \cdots a^*(\cc v_{t,x_k})\]
on the r.h.s. of (\ref{eq:k1}) produces a non-vanishing, fully contracted, contribution. {F}rom (\ref{gammacontr}) and comparing with the definition (\ref{eq:omk}), this contribution is given by 
\[ \sum_{\pi \in S_{k}} \sgn(\pi) \,  \omega_{N,t} (x_{1}; x'_{\pi (1)}) \dots \omega_{N,t} (x_{k}; x'_{\pi (k)}) = \omega_{N,t}^{(k)} (\bx_k ; \bx'_k). \] 

\medskip

Combining the results of Step 2 and Step 3, we conclude that 
\[ \left| \tr \, O \left( \gamma_{N,t}^{(k)} - \omega_{N,t}^{(k)} \right) \right| \leq C N^{(k-1)/2} \| O \|_{\text{HS}} \langle \cU_N (t;0) \xi , (\cN+1)^{k-1} \cU_N (t;0) \xi \rangle \]
for every Hilbert-Schmidt operator $O$ on $L^2(\Rbb^{3k})$. Eq. (\ref{eq:k-claim}) now follows from Theorem \ref{thm:grow}. 

\medskip

{\it Step 4: Bound for the trace norm.} Eq. (\ref{eq:k-claim-tr}) follows, similarly to (\ref{eq:k-claim}), if we can show that, for any bounded operator $O$ on $L^2 (\bR^{3k})$, the contribution (\ref{eq:wickbound}) can be bounded by 
\begin{equation}\label{eq:term-tr} \text{I} \leq C \| O \| \, N^{\frac{k+j}{2}} \exp (c_1 \exp (c_2 |t|)) \end{equation}
for all $\xi \in \cF$ with $\langle \xi, \cN^{k} \xi \rangle < \infty$, and the number of contractions $0 \leq j < k$. In fact, because of the fermionic symmetry of $\gamma_{N,t}^{(k)}$ and $\omega_{N,t}^{(k)}$, it is enough to establish (\ref{eq:term-tr}) for all bounded $O$ with the symmetry
\[ O (x_{\pi (1)}, \dots , x_{\pi (k)} ; x'_{\sigma (1)}, \dots , x'_{\sigma (k)}) = \text{sgn} (\pi) \text{sgn} (\sigma) \, O (x_1, \dots , x_k; x'_1, \dots x'_k) \]
for any permutations $\pi, \sigma \in S_k$.
For such observables, (\ref{eq:wickbound}) can be rewritten as
\[ \begin{split}  \text{I} & = \Big|  \int d \bx_k d\bx'_k \, O (\bx_k,\bx'_k) \, \big\langle \cU_N (t;0) \xi, \, :a^\#(w_1(\cdot,x'_1))\cdots a^\#(w_{k-j} (\cdot,x'_{k-j})) \\
&\hspace{2.5cm} \times a^\#(\eta_1(\cdot, x_{1}))\cdots a^\#(\eta_{k-j}(\cdot, x_{k-j})): \cU_N (t;0)\xi \big\rangle \\
& \hspace{5cm} \times\omega_{N,t} (x_{k-j+1},x'_{k-j+1})\cdots\omega_{N,t} (x_{k},x'_{k})\Big| \\
& = \Big| \int d\bx_{k-j} d \bx'_{k-j} \left[ \eta^{(1)}_1\cdots \eta^{(k-j)}_{k-j}  \,\left(  \tr_{k-j+1, \dots , k} \, O  (1 \otimes \omega_{N,t}^{\otimes j}) \right) \, w^{(1)}_1 \cdots w^{(k-j)}_{k-j} \right]  (\bx_{k-j} ; \bx'_{k-j}) \\ & \hspace{2.5cm} \times \big\langle \cU_N (t;0) \xi, \, :a^\#_{x'_1} \cdots a^\#_{x'_{k-j}} a^\#_{x_{1}} \cdots a^\#_{x_{k-j}}: \, \cU_N (t;0) \xi \big\rangle
\end{split}\]
where 
\[\begin{split}  \Big( \tr_{k-j+1, \dots , k}  O  &(1 \otimes \omega_{N,t}^{\otimes j}) \Big) (\bx_{k-j} ; \bx'_{k-j}) \\ & = \int dx_{k-j+1} dx'_{k-j+1} \dots dx_k dx'_k \, O (\bx_k ; \bx'_k) \prod_{\ell=k-j+1}^k \omega_{N,t} (x_\ell; x'_\ell) \end{split} \]
denotes the partial trace over the last $j$ particles. Using Cauchy-Schwarz, we obtain
\bea
\text{I} &\leq& \left\| \eta^{(1)}_{1} \dots \eta^{(k-j)}_{k-j} \, \left(\tr_{k-j+1, \dots , k} \, O \, (1 \otimes \o_{N,t}^{\otimes j}) \right) w^{(1)}_{1} \dots w^{(k-j)}_{k-j} \right\|_{\text{HS}} \, \left\| \cN^{\frac{k-j}{2}} \cU_N(t;0)\xi \right\|^{2}\nn\\
&\leq&  \left\| \eta^{(1)}_{1} \dots \eta^{(k-j)}_{k-j} \right\|_{\text{HS}} \, \left\| \tr_{k-j+1, \dots , k}  \, O (1 \otimes \o_{N,t}^{\otimes j}) \right\| \, \left\| \cN^{\frac{k-j}{2}}\cU_N(t;0)\xi \right\|^{2}\nn\\
&\leq& N^{\frac{k-j}{2}} \, \left\| \tr_{k-j+1, \dots , k} \, O (1 \otimes \o_{N,t}^{\otimes j}) \right\| \, \left\| \cN^{\frac{k-j}{2}}\cU_N(t;0)\xi \right\|^{2}\;\nn
\eea
where in the second line we used that $\|w^{(j)}_{j}\|=1$ for all $j=1, \dots , k-j$. Since
\bea
\left\| \tr_{k-j+1, \dots , k} \, O (1\otimes \o_{N,t}^{\otimes j}) \right\| &=& \sup_{{\substack{\phi, \ph \in L^{2}(\mathbb{R}^{3(k-j)}) \\ \|\phi\| = \| \psi \| \leq 1}}} \left| \left\langle \phi, \left(\tr_{k-j+1, \dots , k} \, O (1 \otimes \o_{N,t}^{\otimes j}) \right) \varphi\right\rangle \right| \nn\\
&=& \sup_{{\substack{\phi, \ph \in L^{2}(\mathbb{R}^{3(k-j)}) \\ \|\phi\| = \| \psi \| \leq 1}}} 
\left| \tr \, O \left(|\ph \rangle \langle \phi| \otimes \o^{\otimes j}_{N,t}\right) \right| \nn\\
&\leq& \left(\tr \lvert \omega_{N,t} \rvert\right)^j \norm{O} \leq N^j \, \|O\| \;,\label{eq:trk2}
\eea
we get
\be
\text{I} \leq N^{\frac{k+j}{2}} \, \|O\| \, \left\| \cN^{\frac{k-j}{2}}\cU_N(t;0)\xi \right\|^{2}\,,\nn
\ee
which, by Theorem \ref{thm:grow}, proves (\ref{eq:term-tr}).
\end{proof}

\section{Propagation of semiclassical structure}
\label{sec:sc}
\setcounter{equation}{0}

In this section we prove Proposition \ref{lem:hbargain}, which propagates the bounds (\ref{eq:sc}) along the solution of the Hartree-Fock equation and plays a central role in our analysis. 
\begin{proof}[Proof of Proposition \ref{lem:hbargain}] 
Let $\omega_{N,t}$ denote the solution of the Hartree-Fock equation (\ref{eq:hf}). We define the (time-dependent) Hartree-Fock Hamiltonian
\[ h_{\text{HF}} (t)  = -\hbar^2 \Delta + (V * \rho_t) (x)- X_t \]
where $\rho_t (x) = (1/N) \omega_{N,t} (x;x)$ and $X_t$ is the exchange operator, having the kernel $X_t (x;y) = V(x-y) \omega_{N,t} (x;y)$. Then $\omega_{N,t}$ satisfies the equation
\[ i\hbar\partial_t \omega_{N,t} = [ h_{\text{HF}} (t) , \omega_{N,t} ]. \]
Therefore, we obtain
\begin{equation}\label{eq:HFcomm}
\begin{split}
i\hbar \frac{d}{dt} [e^{ip\cdot x},\omega_{N,t}] &= [e^{ip \cdot x},[h_{\text{HF}} (t) , \omega_{N,t}]] \\ 
&= [h_{\text{HF}} (t),[e^{ip\cdot x}, \omega_{N,t}]] + [ \omega_{N,t}, [h_{\text{HF}} (t), e^{ip \cdot x}]] \\
&= [h_{\text{HF}} (t),[e^{ip\cdot x}, \omega_{N,t}]] - [ \omega_{N,t}, [\hbar^2 \Delta, e^{ip \cdot x}]] - [ \omega_{N,t}, [ X_t , e^{ip\cdot x}]] 
\end{split}
\end{equation} 
where we used the cyclic properties of the commutator and the fact that $[\r_{t}*V,e^{ipx}] = 0$.
We compute
\[ [\hbar^2 \Delta, e^{ip \cdot x}] = i \hbar \nabla \cdot \hbar p e^{ip \cdot x} + e^{ip \cdot x} \hbar p \cdot i\hbar \nabla \]
and hence
\[ \begin{split} [ \omega_{N,t} , [ \hbar^2 \Delta, e^{ip\cdot x} ]] &= [ \omega_{N,t},  i \hbar \nabla \cdot \hbar p e^{ip \cdot x} + e^{ip \cdot x} \hbar p \cdot i\hbar \nabla ] \\ &= [\omega_{N,t} , i\hbar \nabla] \cdot \hbar p e^{ip \cdot x} + i \hbar^2 \nabla \cdot p [ \omega_{N,t}, e^{ip \cdot x}]  \\ &\quad + \hbar p e^{ip \cdot x} [ \omega_{N,t}, i\hbar \nabla ] + [ \omega_{N,t} , e^{ip \cdot x} ] i \hbar^2 \nabla \cdot p\,.  \end{split} \]
{F}rom (\ref{eq:HFcomm}) we find
\[ \begin{split} i\hbar \frac{d}{dt} [ e^{ip\cdot x} , \omega_{N,t}] = &\; A(t) [ e^{ip\cdot x}, \omega_{N,t}] - [ e^{ip \cdot x}, \omega_{N,t} ] B(t) \\& - \hbar p e^{ip \cdot x} [ \omega_{N,t}, i\hbar \nabla ] - [\omega_{N,t} , i\hbar \nabla] \cdot \hbar p e^{ip \cdot x} -  [\omega_{N,t}, [ X_t , e^{ip\cdot x}]] \end{split} \]
where we defined the time-dependent operators
\[ A(t) = h_{\text{HF}} (t) + i \hbar^2 \nabla \cdot p \quad \text{and } \quad B (t) = h_{\text{HF}} (t) - i\hbar^2 \nabla \cdot p\,.\]
Observe that $A(t)$ and $B(t)$ are self-adjoint for every $t \in \bR$ (the factor $\pm i \hbar^2 p \cdot \nabla$ can be interpreted as originating from a constant vector potential).  They generate two unitary evolutions $\cU_1 (t;s)$ and $\cU_2 (t;s)$ satisfying 
\[ i \hbar \partial_t \cU_1 (t;s) = A(t) \cU_1 (t;s) \quad \text{ and } \quad i \hbar \partial_t \cU_2 (t;s) = B(t) \cU_2 (t;s) \]
with the initial conditions $\cU_1 (s;s) = \cU_2 (s;s) = 1$. We observe that, by definition of the unitary maps $\cU_1 (t;s)$ and $\cU_2 (t;s)$, 
\[ \begin{split} 
i \hbar \frac{d}{dt} \, \cU^*_1 (t;0) &[e^{ip\cdot x}, \omega_{N,t} ] \, \cU_2 (t;0) \\ = \; & \cU^*_1 (t;0) \left\{ - A(t) [e^{ip\cdot x}, \omega_{N,t} ] + [ e^{ip \cdot x} , \omega_{N,t}] B(t)  + i\hbar \frac{d}{dt} [ e^{ip \cdot x}, \omega_{N,t} ] \right\} \cU_2 (t;0) \\ = \; & -\cU^*_1 (t;0) \big\{  \hbar p e^{ip \cdot x} [ \omega_{N,t}, i\hbar \nabla ] + [\omega_{N,t} , i\hbar \nabla] \cdot \hbar p e^{ip \cdot x} + [\omega_{N,t}, [ X_t , e^{ip\cdot x}]] \big\} \, \cU_2 (t;0). \end{split} \]
Hence
\[ \begin{split} \cU^*_1 &(t;0) [ e^{ip\cdot x}, \omega_{N,t}] \, \cU_2 (t;0) \\ = \; & [e^{ip \cdot x}, \omega_N] \\ &+ \frac{i}{\hbar} \int_0^t ds \, \cU_1^* (s;0) \big\{  \hbar p e^{ip \cdot x} [ \omega_{N,s}, i\hbar \nabla ] + [\omega_{N,s} , i\hbar \nabla] \cdot \hbar p e^{ip \cdot x} + [\omega_{N,s}, [ X_s , e^{ip\cdot x}]] \big\} \cU_2 (s;0) \end{split} \]
and therefore
\[ \begin{split} [e^{ip\cdot x}&, \omega_{N,t}] \\ = \; &\cU_1 (t;0) [ e^{ip\cdot x}, \omega_N ] \, \cU_2^* (t;0) \\ &+ \frac{i}{\hbar} \int_0^t ds \, \cU_1 (t;s) \big\{  \hbar p e^{ip \cdot x} [ \omega_{N,s} , i\hbar \nabla ] + [\omega_{N,s} , i\hbar \nabla] \cdot \hbar p e^{ip \cdot x} + [\omega_{N,s}, [ X_s , e^{ip\cdot x}]] \big\} \cU_2 (s;t). \end{split} \]
Taking the trace norm, we find
\begin{equation}\label{eq:est-comm1} \tr | [e^{ip\cdot x}, \omega_{N,t}] | \leq \tr | [e^{ip\cdot x}, \omega_N] | + 2|p| \int_0^t ds \, \tr | [\hbar \nabla , \omega_{N,s}] | + \frac{1}{\hbar} \int_0^t ds \tr |[\omega_{N,s},  [ X_s , e^{ip\cdot x}]]|. \end{equation}
To control the contribution of the last term, we observe that 
\[ X_s (x;y) = \frac{1}{N} \, V(x-y) \omega_{N,s} (x;y) = \frac{1}{N} \int dq \, \widehat{V} (q) e^{iq \cdot (x-y)} \omega_{N,s} (x;y) = \frac{1}{N} \int dq \, \widehat{V} (q) \omega_{q,t} (x;y) \]
where we defined the operator $\omega_{q,t} = e^{iq \cdot x} \omega_{N,t} e^{-i q \cdot x}$ (here $x$ indicates the multiplication operator). Hence, we get
\[ [ \omega_{N,t}, [X_t, e^{ip\cdot x}]] = \frac{1}{N} \int dq \, \widehat{V} (q) [ \omega_{N,t} , [ \omega_{q,t} , e^{i p \cdot x}] ] \]
and therefore, using $\norm{\omega_{N,t}} \leq 1$,
\[ \begin{split} \tr |[ \omega_{N,t}, [X_t, e^{ip\cdot x}]]| &\leq \frac{1}{N} \int dq |\widehat{V} (q)| \, \tr | [ \omega_{N,t} , [ \omega_{q,t} , e^{i p \cdot x}] ]| \\ & \leq \frac{2}{N} \int dq |\widehat{V} (q)| \, \tr |[ \omega_{q,t} , e^{i p \cdot x}]| \\ & \leq \frac{2}{N} \left( \int dq |\widehat{V} (q)| \right) \, \tr |[ \omega_{N,t} , e^{ip\cdot x}]| 
\end{split} \]
where in the last line we used that $[\omega_{q,t},e^{ipx}] = e^{iqx}[\omega_{N,t},e^{ipx}]e^{-iqx}$. {F}rom (\ref{eq:est-comm1}), we conclude that
\begin{equation}\label{eq:est-comm2} \tr | [e^{ip\cdot x}, \omega_{N,t}] | \leq \tr  | [e^{ip\cdot x}, \omega_N] | + 2|p| \int_0^t ds \, \tr | [\hbar \nabla , \omega_{N,s}] | + \frac{C}{N\hbar} \int_0^t ds \tr \, | [e^{ip\cdot x}, \omega_{N,s}] | 
\end{equation}
and therefore, from (\ref{h1}), we find
\begin{equation}\label{eq:est-comm1-fin}
\begin{split}
\sup_p \frac{1}{1+|p|} \tr | [e^{ip\cdot x}, \omega_{N,t}] | \leq \; &C \hbar N + 2 \int_0^t ds \, \tr \, | [\hbar \nabla , \omega_{N,s}] | \\ &+  C \int_0^t ds \, \sup_p \frac{1}{1+|p|} \tr\, |[e^{ip\cdot x}, \omega_{N,s}] | .\end{split} \end{equation}

\medskip

Next, we need to control the growth of $\tr|[\hbar\nabla, \omega_{N,t}]|$. Consider
\[ \begin{split}
i\hbar \frac{d}{dt}[\hbar\nabla , \omega_{N,t} ] =& [\hbar \nabla, [h_{\text{HF}} (t) , \omega_{N,t} ]] \\
=& [h_{\text{HF}} (t) ,[\hbar\nabla, \omega_{N,t}]] + [\omega_{N,t}, [ h_{\text{HF}} (t), \hbar \nabla]] \\ =& [h_{\text{HF}} (t) ,[\hbar\nabla, \omega_{N,t}]] + [\omega_{N,t}, [ V*\rho_t , \hbar \nabla]] - [ \omega_{N,t}, [X_t, \hbar \nabla]] \, .
\end{split} \]
As before, the first term on the r.h.s. can be eliminated by an appropriate unitary conjugation. Denote namely by $\cU_3 (t;s)$ the two-parameter unitary group satisfying
\[ i\hbar \partial_t \, \cU_3 (t;s) = h_{\text{HF}} (t) \, \cU_3 (t;s) \]
and $\cU_3 (s;s) = 1$. Then we compute
\[ \begin{split} i\hbar \frac{d}{dt} \cU_3^* (t;0) [ \hbar \nabla , \omega_{N,t}] \,  \cU_3 (t;0) &= \cU_3^* (t;0) \left\{ - [ h_{\text{HF}} (t) , [\hbar\nabla, \omega_{N,t}]] + i\hbar \frac{d}{dt}  [ \hbar \nabla , \omega_{N,t}] \right\} \cU_3 (t;0) 
\\ &= \cU_3^* (t;0) \left\{  [\omega_{N,t}, [ V*\rho_t , \hbar \nabla]] - [ \omega_{N,t}, [X_t, \hbar \nabla]]  \right\} \cU_3 (t;0). \end{split} \]
This gives
\[ \begin{split} [ \hbar \nabla , \omega_{N,t}]  = \; & \cU_3 (t;0) [ \hbar \nabla , \omega_N] \cU^*_3 (t;0) \\ &+ \frac{1}{i\hbar} \int_0^t ds \, \cU_3 (t;s) \left\{ [\omega_{N,s}, [ V*\rho_s , \hbar \nabla]] - [ \omega_{N,s}, [X_s, \hbar \nabla]]  \right\} \cU_3 (s;t) \end{split} \]
and therefore
\begin{equation}\label{eq:est-comm3} \begin{split} 
\tr | [ \hbar \nabla , \omega_{N,t}] | & \leq \tr \, | [ \hbar \nabla , \omega_N] |\\
&\quad + \frac{1}{\hbar} \int_0^t ds \, \tr | [\omega_{N,s} ,  [ V*\rho_s , \hbar \nabla]] | + \frac{1}{\hbar} \int_0^t ds \, \tr \, | [ \omega_{N,s}, [X_s, \hbar \nabla]]|. \end{split} \end{equation}
The second term on the r.h.s. can be controlled by
\[ \begin{split} 
\tr | [\omega_{N,s} ,  [ V*\rho_s , \hbar \nabla]] | &= \hbar \, \tr |[\omega_{N,s} , \nabla (V * \rho_s)]| \\ &\leq \hbar \int dq \, |\widehat{V} (q)||q| |\widehat{\rho}_s (q)|  \, \tr |[ \omega_{N,s} , e^{iq \cdot x} ] | \\ &\leq \hbar \left( \int dq \, |\widehat{V} (q)| (1+|q|)^{2} \right)  \sup_q \frac{1}{1+|q|} \tr  |[ \omega_{N,s} , e^{iq \cdot x} ] | \end{split} \]
where we used the bound $\| \widehat{\rho}_s \|_\infty \leq \| \rho_s \|_1 = 1$. 
As for the last term on the r.h.s. of (\ref{eq:est-comm3}), we note that
\[ [\omega_{N,s}, [ X_s, \hbar\nabla]] = \frac{1}{N}\int dq\, \widehat{V} (q) [\omega_{N,s} , [\omega_{q,s}, \hbar \nabla]] \]
where, as above, we set $\omega_{q,s} = e^{iq \cdot x} \omega_{N,s} e^{-iq\cdot x}$.  Hence, we obtain
\[ \begin{split} \tr | [\omega_{N,s}, [ X_s, \hbar\nabla]]| &\leq \frac{2}{N} \int dq \, |\widehat{V} (q)| \tr |[\omega_{q,s}, \hbar \nabla]| \\ &\leq \frac{2}{N} \left(\int dq \, |\widehat{V} (q)| \right) \,  \tr |[\omega_{N,s}, \hbar \nabla]| \end{split} \]
where in the last inequality we used that 
\[
[\omega_{q,s} , \hbar\nabla] = e^{iqx}[ \omega_{N,s} , \hbar(\nabla + iq)]e^{-iqx} = e^{iqx}[\omega_{N,s}, \hbar \nabla] e^{-iqx}\;.
\]
{F}rom (\ref{eq:est-comm3}), we conclude that
\[ 
\tr | [ \hbar \nabla , \omega_{N,t}] | \leq C \hbar N +  C \int_0^t ds \, \sup_q \frac{1}{1+|q|} \tr  |[ \omega_{N,s} , e^{iq \cdot x} ] | + C \int_0^t ds \, \tr |[\omega_{N,s}, \hbar \nabla]|.  \]
Summing up the last equation with (\ref{eq:est-comm1-fin}) and applying Gronwall's lemma, we 
find constants $C,c > 0$ such that 
\[ \begin{split} 
\sup_p \frac{1}{1+|p|} \tr | [e^{ip\cdot x}, \omega_{N,t}] | &\leq C \hbar N \exp (c |t|), \\ 
\tr | [ \hbar \nabla , \omega_{N,t}]  &\leq C \hbar N \exp (c |t|). \qedhere\end{split}\]
\end{proof}

\appendix

\section{Comparison between Hartree and Hartree-Fock dynamics}
\label{app:hartree}
\setcounter{equation}{0}

In the next proposition we show that, under the assumptions of Theorem \ref{thm:main}, the solution $\omega_{N,t}$ of the Hartree-Fock equation (\ref{eq:hf}) is well-approximated by the solution $\wt{\omega}_{N,t}$ of the Hartree equation (\ref{eq:h}). Since we can show that the difference $\omega_{N,t} - \wt{\omega}_{N,t}$ remains of order one in $N$ for all fixed times $t \in \bR$, this result implies that all bounds in Theorem \ref{thm:main} remain true if we replace $\omega_{N,t}$ by $\wt{\omega}_{N,t}$.
\begin{prp}\label{prop:hartree}
Assume that the interaction potential $V \in L^1 (\bR^3)$ satisfies (\ref{eq:ass-V}) and that the sequence $\omega_N$ of orthogonal projections on $L^2 (\bR^3)$ with $\tr \omega_N = N$ satisfies (\ref{eq:sc}). Let $\omega_{N,t}$ denote the solution of the Hartree-Fock equation 
\[ i\hbar \partial_t \omega_{N,t} = \left[ -\hbar^2 \Delta + (V* \rho_t) -X_t , \omega_{N,t} \right] \]
and $\wt{\omega}_{N,t}$ the solution of the Hartree equation
\[  i\hbar \partial_t \wt{\omega}_{N,t} = \left[ -\hbar^2 \Delta + (V* \wt{\rho}_t) ,\wt{\omega}_{N,t} \right] \]
with initial data $\omega_{N,t=0} = \wt{\omega}_{N,t=0} = \omega_N$ (recall here that $\rho_t (x) = N^{-1} \omega_{N,t} (x;x)$, $\wt{\rho}_t (x) = N^{-1} \wt{\omega}_{N,t} (x;x)$ and $X_t (x;y) = N^{-1} 
V(x-y)  \omega_{N,t} (x;y)$).
Then there exist constants $C,c_1, c_2 > 0$ such that 
\[ \tr \left| \omega_{N,t} - \wt{\omega}_{N,t} \right| \leq C \exp (c_1 \exp (c_2 |t|)) \]
for all $t \in \bR$. 
\end{prp}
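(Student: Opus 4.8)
The plan is to set $\delta_t := \omega_{N,t} - \wt\omega_{N,t}$, write a Duhamel identity for $\delta_t$ in the frame rotating with the common Hartree Hamiltonian, and then close a Gr\"onwall estimate in which the exchange term enters only as a small, $N$-independent inhomogeneity. Writing $h(t) = -\hbar^2\Delta + (V*\rho_t)$ and $\wt h(t) = -\hbar^2\Delta + (V*\wt\rho_t)$, the two evolution equations give
\[ i\hbar\,\partial_t \delta_t = [h(t),\delta_t] + \big[(V*(\rho_t-\wt\rho_t)),\wt\omega_{N,t}\big] - [X_t,\omega_{N,t}]. \]
Let $\cU(t;s)$ be the unitary two-parameter group generated by the self-adjoint operator $h(t)$, i.e. $i\hbar\,\partial_t\cU(t;s) = h(t)\cU(t;s)$, $\cU(s;s)=1$ (well defined exactly as the evolutions in the proof of Proposition~\ref{lem:hbargain}). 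Conjugating and using $\delta_0 = 0$, one gets
\[ \delta_t = \frac{1}{i\hbar}\int_0^t \di s\,\cU(t;s)\left\{\big[(V*(\rho_s-\wt\rho_s)),\wt\omega_{N,s}\big] - [X_s,\omega_{N,s}]\right\}\cU(s;t), \]
so that, since conjugation by a unitary preserves the trace norm,
\[ \tr|\delta_t| \le \frac{1}{\hbar}\int_0^t \di s\,\tr\big|[(V*(\rho_s-\wt\rho_s)),\wt\omega_{N,s}]\big| + \frac{1}{\hbar}\int_0^t \di s\,\tr\big|[X_s,\omega_{N,s}]\big|. \]

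For the first term I would expand the potential in Fourier modes, $V*(\rho_s-\wt\rho_s) = \int \di q\,\widehat V(q)\,\widehat{(\rho_s-\wt\rho_s)}(q)\,e^{iq\cdot x}$, use $|\widehat{(\rho_s-\wt\rho_s)}(q)| \le \|\rho_s-\wt\rho_s\|_1 \le N^{-1}\tr|\delta_s|$ (since $\rho_s-\wt\rho_s$ is the diagonal of $N^{-1}\delta_s$), and insert the propagated semiclassical bound $\tr|[e^{iq\cdot x},\wt\omega_{N,s}]| \le KN\hbar(1+|q|)\exp(c|s|)$. The latter is the analogue of Proposition~\ref{lem:hbargain} for the Hartree flow; its proof is a strict simplification of the one in Section~\ref{sec:sc}, obtained by discarding every term involving the exchange operator. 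Together with assumption (\ref{eq:ass-V}) this yields $\frac1\hbar\,\tr|[(V*(\rho_s-\wt\rho_s)),\wt\omega_{N,s}]| \le C\,e^{c|s|}\,\tr|\delta_s|$.

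For the second term I would follow the computation in the proof of Proposition~\ref{lem:hbargain}: write $X_s = N^{-1}\int \di q\,\widehat V(q)\,\omega_{q,s}$ with $\omega_{q,s} = e^{iq\cdot x}\omega_{N,s}e^{-iq\cdot x} = \omega_{N,s} + [e^{iq\cdot x},\omega_{N,s}]e^{-iq\cdot x}$, so that $[X_s,\omega_{N,s}] = N^{-1}\int \di q\,\widehat V(q)\,[[e^{iq\cdot x},\omega_{N,s}]e^{-iq\cdot x},\omega_{N,s}]$; using $\|\omega_{N,s}\|\le 1$ and Proposition~\ref{lem:hbargain} applied directly to the Hartree-Fock solution,
\[ \tr|[X_s,\omega_{N,s}]| \le \frac{2}{N}\int \di q\,|\widehat V(q)|\,\tr|[e^{iq\cdot x},\omega_{N,s}]| \le C\hbar\,e^{c|s|}. \]
The crucial point is that the spare $\hbar$ here cancels the $\hbar^{-1}$ in front of the Duhamel integral, so $\frac1\hbar\,\tr|[X_s,\omega_{N,s}]| \le C\,e^{c|s|}$ with no $N$-growth. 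This is precisely the step where the diagonal concentration of $\omega_{N,t}$ is used; the naive estimate $\tr|X_t| = O(1)$ would not suffice, and the main obstacle of the argument is to package $[X_t,\omega_{N,t}]$ so that one commutator $[e^{iq\cdot x},\omega_{N,t}]$ is exposed. Combining the two bounds gives
\[ \tr|\delta_t| \le C\int_0^t e^{c|s|}\,\di s + C\int_0^t e^{c|s|}\,\tr|\delta_s|\,\di s, \]
and Gr\"onwall's lemma with the time-dependent weight $C e^{c|s|}$ produces the double exponential $\tr|\delta_t| \le C\exp(c_1\exp(c_2|t|))$, exactly as in Theorem~\ref{thm:main}. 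Finally one should note that $\|\wt\rho_t\|_1 = N^{-1}\tr\wt\omega_{N,t} = 1$ is preserved by the Hartree flow, which is what legitimizes the bound $|\widehat{\wt\rho_t}(q)| \le 1$ implicitly used when invoking the Hartree analogue of Proposition~\ref{lem:hbargain}.
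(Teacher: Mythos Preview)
Your argument is correct and follows the same Duhamel--Gr\"onwall scheme as the paper, but there is one small structural difference worth noting. You conjugate by the evolution generated by $h(t) = -\hbar^2\Delta + V*\rho_t$ (built from the Hartree--Fock density), which forces the density-difference commutator to land on $\wt\omega_{N,s}$ and hence requires a Hartree analogue of Proposition~\ref{lem:hbargain} for the commutator $[e^{iq\cdot x},\wt\omega_{N,s}]$. As you say, this analogue is a strict simplification of the Hartree--Fock case, but it is an extra statement to prove. The paper instead conjugates by the evolution $\cW(t;s)$ generated by the Hartree Hamiltonian $h_H(t) = -\hbar^2\Delta + V*\wt\rho_t$; then $\wt\omega_{N,t} = \cW(t;0)\omega_N\cW^*(t;0)$ exactly, and the density-difference commutator lands on $\omega_{N,s}$, so Proposition~\ref{lem:hbargain} applies verbatim and no separate Hartree version is needed. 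The paper also bounds $|\widehat\rho_s(p)-\widehat{\wt\rho}_s(p)|$ directly by $N^{-1}|\tr\, e^{ip\cdot x}\delta_s| \le N^{-1}\tr|\delta_s|$, which is slightly cleaner than going through $\|\rho_s-\wt\rho_s\|_{L^1}$ (though your diagonal inequality is valid since $\delta_s$ is self-adjoint). Otherwise the two proofs are interchangeable and yield the same double-exponential bound.
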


\begin{proof}
Let $\cW (t;s)$ be the unitary dynamics generated by the Hartree Hamiltonian $h_{H} (t) = -\hbar^2 \Delta + (V* \wt{\rho}_t)$. In other words, $\cW (s;s) = 1$ for all $s \in \bR$ and 
\[
i\hbar\frac{d}{dt} \cW (t;s) = h_H (t) \cW (t;s) \,.
\]
Then, we have 
\[
\begin{split}
i \hbar\partial_t \cW^{*}(t;0) \widetilde\o_{N,t} \cW (t;0) &= 0\;, \\
i\hbar\partial_t \cW^{*} (t;0) \o_{N,t} \cW (t;0) &= \cW^{*} (t;0 )\left( \left[  V* \left( \r_t - \widetilde\r_{t}  \right), \o_{N,t} \right] - \left[ X_{t}, \o_{t} \right] \right) \cW (t;0). 
\end{split}
\]
Integrating over time, we end up with
\[
\begin{split}
\widetilde\o_{N,t} &= \cW (t;0) \o_{N}\cW^{*}(t;0)\, , \\
\o_{N,t} &= \cW (t;0) \o_{N} \cW^{*} (t;0) 
\\  & \quad - \frac{i}{\hbar}\int_{0}^{t} ds\,\cW (t;s) \left( \left[ V*\left( \widetilde\r_{s} - \r_{s} \right), \o_{N,s} \right] - \left[ X_{s},\o_{N,s} \right]\right)\cW^* (t;s) 
\end{split}
\]
and thus 
\begin{equation}\label{eq:HHF2c}
\begin{split}
\tr\left|\o_{N,t} - \widetilde\o_{N,t}\right| \leq & \; \frac{1}{\hbar}\int_{0}^{t} ds \left\{ \tr\left|\left[ V * \left( \widetilde\r_{s} - \r_{s} \right), \o_{N,s} \right]\right| + \tr\left|\left[ X_{s},\o_{N,s} \right]\right| \right\}
=: \; \text{I} + \text{II}.
\end{split}
\end{equation}
Let us first estimate $\text{II}$. We get
\bea\text{II} &=& \frac{1}{\hbar}\int_{0}^{t}ds\,\tr\left| \left[ X_{s},\o_{N,s} \right] \right| \nn\\
&\leq& \frac{1}{\hbar N}\int_{0}^{t}ds\int dp\,|\widehat V(p)|\tr\left|\left[ e^{ip\cdot x}\o_{N,s}e^{-ip\cdot x},\o_{N,s} \right|\right] \nn\\ 
&\leq& \frac{2}{\hbar N}\int_{0}^{t}ds\int dp\, |\widehat V(p)|\tr\left|\left[ e^{ip\cdot x},\o_{N,s} \right]\right|  \nn\\ 
&\leq& C \exp{(c|t|)},\label{eq:HHF3}
\eea
where in the last step we used Proposition \ref{lem:hbargain} ($e^{ip\cdot x}$ denotes here the multiplication operator). We are left with $\text{I}$. Writing
\[ V* (\wt{\rho}_s - \rho_s) (x) = \int dp \, \widehat{V} (p) \left(\widehat{\wt{\rho}}_s (p) - \widehat{\rho}_s (p) \right) \, e^{i p \cdot x} \]
we find
\bea
\text{I} &\leq& \frac{1}{\hbar}\int_{0}^{t}ds\int dp\, |\hat V(p)| \, \left|\widehat{\wt{\rho}}_s (p) - \widehat{\rho}_s (p)\right| \, \tr\left| \left[  e^{ip\cdot x}, \o_{N,s} \right]\right|\nn\\
&\leq& CN\exp(c|t|) \int_{0}^{t} ds\, \sup_{p \in\mathbb{R}^{3}} \left| \widehat{\r}_{s}(p) - \widehat{\widetilde\r}_{s}(p) \right|\nn\\
&\leq& C\exp(c|t|) \int_{0}^{t}ds\,\tr \left| \o_{N,s} - \widetilde\o_{N,s} \right| \label{eq:HHF3b}
\eea
where in the second inequality we used again Proposition \ref{lem:hbargain}, while in the last inequality we used the bound 
\be
\left|\widehat{ \r}_{s}(p) - \widehat{\widetilde{\r}}_{s}(p) \right| = \frac{1}{N}\left| \tr e^{ip\cdot x}(\o_{N,s} - \widetilde\o_{N,s}) \right| \leq \frac{1}{N} \tr| \o_{N,s} - \widetilde\o_{N,s}|.\nn
\ee
Inserting (\ref{eq:HHF3}), (\ref{eq:HHF3b}) into (\ref{eq:HHF2c}), and applying Gronwall lemma, we get
\be
\tr\left| \o_{N,t} - \widetilde\o_{N,t} \right| \leq C\exp{(c_1 \exp{(c_2 |t|)})}
\ee
for some $C$, $c_{1}$, $c_{2}$ only depending on the potential $V$.
\end{proof}

\end{document}